\newtheorem{theorem}{Theorem}
\newtheorem{lemma}[theorem]{Lemma}
\newtheorem{remark}[theorem]{Remark}
\newtheorem{definition}[theorem]{Definition}
 \newtheorem*{thmmain}{Theorem \ref{thm:main}}
 \newtheorem*{lemmain}{Lemma \ref{lem:main}}
 \newtheorem*{lemfixingHCOL}{Lemma \ref{lem:fixingHCOL}}
 \newtheorem*{lovaszone}{Lemma \ref{lem:lovaszone}}
 \newtheorem*{lovasztwo}{Lemma \ref{lem:lovasztwo}}
\def\ex{\mathrm{ex}}
\def\epsilon{\varepsilon}
\newcommand\COL[1]{\##1\mbox{-}\mathsf{Col}}
\newcommand\FixCOL[1]{\#\mathsf{Fixed}#1\mbox{-}\mathsf{Col}}
\newcommand\InjFixCOL[1]{\#\mathsf{InjFixed}#1\mbox{-}\mathsf{Col}}
\def\BIS{\#\mathrm{BIS}}
\def\sharpP{\#\mathsf{P}}
\def\FPRAS{\mathsf{FPRAS}}
\def\FPAS{\mathsf{FPAS}}
\def\PAS{\mathsf{PAS}}
\def\AP{\mathsf{AP}}
\def\NP{\mathsf{NP}}
\def\R{\Lambda}
\def\r{\lambda}
\def\j{\cap}
\def\Nu{N_{\cup}}
\def\Bip{\mathsf{Bip}}
\def\Cc{\mathcal{C}}
\def\Hc{\mathcal{H}}
\newcommand{\Lovasz}{Lov\'asz}
\newcommand{\Nesetril}{Ne\v{s}et\v{r}il}
\begin{document} 

\title{Approximately Counting $H$-Colourings is $\BIS$-Hard\thanks{These results 
were
presented in preliminary form in the proceedings of \textit{ICALP 2015} (Track A).}
}
\author{
Andreas Galanis\thanks{
  Department of Computer Science, University of Oxford, Wolfson Building, Parks Road, Oxford, OX1~3QD, UK.
  The research leading to these results has received funding from the European Research Council under
  the European Union's Seventh Framework Programme (FP7/2007-2013) ERC grant agreement no.\ 334828. The paper
  reflects only the authors' views and not the views of the ERC or the European Commission.
  The European Union is not liable for any use that may be made of the information contained therein.}
  \and
  Leslie Ann Goldberg$^\dagger$
  \and  Mark Jerrum\thanks{
  School of Mathematical Sciences,
  Queen Mary, University of London, Mile End Road, London E1 4NS, United Kingdom.}
 }

\date{} 
\maketitle

\begin{abstract} We consider the problem of counting $H$-colourings
from an input graph~$G$ to a target graph~$H$.
We show that if $H$ is any fixed graph without trivial components,
then the problem is as hard as  the well-known problem $\BIS$, which is
the problem of (approximately) counting independent sets in a bipartite graph.
$\BIS$ is a complete problem in an important complexity class for approximate counting,
and is believed not to have an FPRAS. If this is so, then our result
shows
that for every graph $H$ without trivial components, 
the $H$-colouring counting problem has no FPRAS.
This problem was studied a decade ago by Goldberg, Kelk and Paterson. 
They were able to show that approximately sampling $H$-colourings 
is $\BIS$-hard, but it was not known how to get the result for approximate counting.
Our solution builds on non-constructive ideas using the work of \Lovasz.
\end{abstract}

\section{Introduction}

The independent set and $k$-colouring models are well-known models from statistical physics which have also been thoroughly studied in computer science. A particularly 
interesting question from the  computer science perspective
is the complexity of counting and approximate counting in these 
models. Given an input graph~$G$,  the problem is
to approximate the number of independent sets of~$G$
or the number of   proper $k$-colourings of~$G$. 
Both of these problems can be viewed as special cases of
the more general problem of approximately counting   $H$-colourings.
This paper studies the complexity of the more general problem.

We  begin with few relevant definitions.
Let $H=(V(H),E(H))$ be a fixed graph which is allowed to have self-loops, but not parallel edges. An $H$-colouring of a graph $G=(V(G),E(G))$ is a homomorphism from $G$ to $H$, i.e., an assignment $h: V(G)\rightarrow V(H)$ such that for every edge $(u,v)$ in $G$, it holds that $(h(u),h(v))$ is also an edge of $H$. It is helpful to think of the vertices of $H$ as ``colours'' and the edges of $H$ as pairs of colours which are allowed to be assigned to adjacent vertices in $G$.  Given an input graph $G$,  we are interested in computing the number of $H$-colourings of $G$. We will refer to this problem as the $\COL{H}$ problem. Also, we  denote by $\COL{H}(G)$ the number of $H$-colourings of $G$. 
The examples mentioned earlier correspond to $H$-colourings in the following way.
Proper $k$-colourings of~$G$ correspond to $H$-colourings of~$G$ when
$H$ is a $k$-clique.  Also, independent sets of~$G$ correspond to $H$-colourings of $G$ when
$H$ is the connected $2$-vertex graph with exactly one self-loop.

A fully polynomial randomised approximation scheme ($\FPRAS$) for 
approximately counting $H$-colourings is a randomised algorithm that takes as input a graph~$G$ 
and an accuracy parameter $\epsilon>0$ and outputs a number 
which, with probability at least~$3/4$, is in the range $[\exp(-\epsilon) \COL{H}(G), \exp(\epsilon) \COL{H}(G)]$.
The  running time of the algorithm is bounded by a polynomial in~$|V(G)|$ and $\epsilon^{-1}$.

Our goal in this work is to quantify the computational complexity of approximately counting $H$-colourings. In particular, we seek to determine for which
graphs $H$  the problem  $\COL{H}$ admits  
an $\FPRAS$.
Dyer and Greenhill \cite{DyerGreenhill} have completely classified the computational complexity of \emph{exactly} counting $H$-colourings in terms of the parameter graph $H$.
We say that a connected graph is \emph{trivial} if it is either a clique with self-loops on every
vertex or a complete bipartite graph with no self-loops. Dyer and Greenhill showed
that the problem
 $\COL{H}$ is polynomial-time solvable when each connected component of $H$ is 
 trivial;
 otherwise it is $\sharpP$-complete.  
The complexity of the corresponding decision problem has also been characterised.
Hell and \Nesetril~\cite{HN1} showed that
  deciding whether an input graph $G$ 
 has 
  an $H$-colouring is  $\NP$-complete unless $H$ contains a self-loop or $H$ is bipartite (in which case it admits a trivial polynomial-time algorithm).

A \emph{polynomial approximate sampler} ($\PAS$) for sampling $H$-colourings is 
a randomised algorithm that takes as input a graph~$G$
and an accuracy parameter~$\epsilon\in(0,1]$
and gives an output such that the total variation distance between the output distribution of the
algorithm and the uniform distribution on $H$-colourings of~$G$ is at most~$\epsilon$.
The running time is bounded by a polynomial in $|V(G)|$ and $\epsilon^{-1}$.
The $\PAS$ is said to be a \emph{fully polynomial approximate sampler} ($\FPAS$) if the
running time is bounded by a polynomial in $|V(G)|$ and $\log(\epsilon^{-1})$.
The problem of approximately 
sampling
$H$-colourings 
has been shown in \cite{HCOL} to be $\BIS$-hard provided that $H$ contains no trivial components\footnote{\label{foot:trivia}Note that the presence of trivial components may lead to artificial approximation schemes.
For example \cite[Section 7]{HCOL} gives the example in which $H$ consists of
two components ---  a (trivial) size-$3$ clique with self-loops and a connected $2$-vertex graph
with exactly one self-loop. 
There is a $\PAS$ for this example.
If $\epsilon$ is not too small then  it
typically outputs an $H$-colouring using the trivial component. See \cite[Section 7]{HCOL} for details.}. 
More precisely, for any fixed graph~$H$ with no trivial components,
the existence of a $\PAS$ for sampling 
$H$-colourings would imply 
that there is
an $\FPRAS$ for $\BIS$, which is the problem of
counting the independent sets of a bipartite graph.
 $\BIS$ plays an important role in approximation complexity.
Despite many attempts, nobody has found an FPRAS for
$\BIS$ and it is conjectured 
that
none exists (even though it is unlikely that approximating $\BIS$ is NP-hard). 
Various natural algorithms have been ruled out
as candidate FPRASes for $\BIS$  \cite{MWW09, GS10, GJ12a}.
Moreover, Dyer et al.~\cite{DGGJ} 
showed that $\BIS$ is complete under approximation-preserving (AP) reductions
 in a logically defined class of
problems, called  \#RH$\Pi_1$, to which an increasing variety
of problems have been shown to belong.
Other typical complete problems in \#RH$\Pi_1$ include counting the
number of downsets in a partially ordered set \cite{DGGJ}
and computing the partition function of the ferromagnetic Ising model with local external fields \cite{GJ07}.

Perhaps surprisingly, the hardness result of \cite{HCOL} for sampling $H$-colourings does not imply hardness for approximately counting $H$-colourings. This might be  puzzling at first since,  for  the independent set and $k$-colouring models, approximate counting is well-known to be  equivalent to approximate sampling (this equivalence has been proved in \cite{JVV} for the so-called class of self-reducible problems in $\sharpP$). However, for general graphs $H$ it is only known \cite{DGJ} that an $\FPAS$ 
for sampling $H$-colourings implies an $\FPRAS$ for counting $H$-colourings (but not the reverse direction).  For a thorough discussion of this point we refer the reader to \cite{DGJ} (where also an example of a problem in $\sharpP$ is given which, under usual complexity theory assumptions, admits an $\FPRAS$ but not an $\FPAS$).  

In this paper, we address the following questions. 
\begin{itemize}
\item ``Is there a graph $H$  for which approximately counting $H$-colourings is substantially easier than approximately sampling $H$-colourings?''  
\item``Is there a graph   $H$ such that $\COL{H}$ lies between $\mathsf{P}$ and the class of $\BIS$-hard problems?''
\end{itemize}
 We present 
  the analogue of the hardness result of \cite{HCOL} in the counting setting, therefore providing evidence that the answers to the previous questions are negative. To formally state the result, recall the notion of an approximation preserving reduction $\leq_{\AP}$ (introduced in \cite{DGGJ}). For counting problems $\#\mathsf{A}$ and $\#\mathsf{B}$, $\#\mathsf{A}{\leq}_{\AP}\#\mathsf{B}$ implies that an $\FPRAS$ for $\#\mathsf{B}$ yields an $\FPRAS$ for $\#\mathsf{A}$. Our main result is the following.

\newcommand{\statethmmain}{Let $H$ be a graph (possibly with self-loops but without parallel edges), all of whose connected components are non-trivial. Then
$\BIS\leq_{\AP}\COL{H}$.}

\begin{theorem}\label{thm:main}
\statethmmain{}
\end{theorem}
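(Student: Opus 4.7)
The plan is to factor the reduction as a chain $\BIS \leq_{\AP} \FixCOL{H} \leq_{\AP} \COL{H}$, where $\FixCOL{H}$ is the variant of $\COL{H}$ in which the input additionally specifies a partial colouring $\sigma$ of some subset of vertices that an $H$-colouring is required to extend. Combining these two reductions by transitivity of $\leq_{\AP}$ yields Theorem~\ref{thm:main}.

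For the first reduction $\BIS \leq_{\AP} \FixCOL{H}$ (the content of Lemma~\ref{lem:fixingHCOL}), I would adapt the sampling-based $\BIS$-hardness proof of \cite{HCOL}. That proof builds gadgets that use an approximate sampler to ``pin'' certain vertices of a constructed $H$-colouring instance to prescribed colours, and the resulting pinned instance encodes the independent set count of the input bipartite graph. In the counting setting I simply hand the pinning to the oracle: the same encoding produces a $\FixCOL{H}$ instance whose solution count is (up to a computable factor) the $\BIS$ count. This step should essentially be bookkeeping on top of the construction already appearing in~\cite{HCOL}.

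The substantive step is Lemma~\ref{lem:main}, giving $\FixCOL{H} \leq_{\AP} \COL{H}$. Here the plan is, given $(G,\sigma)$, to replace each pinned vertex $v$ with $\sigma(v)=c$ by attaching at $v$ a rooted gadget $F_{c,t}$ so that, as an amplification parameter $t$ grows, $F_{c,t}$ biases the colour at $v$ overwhelmingly towards $c$: formally, that the ratio $\hom_r(F_{c,t},H,c)/\hom_r(F_{c,t},H,c')$ tends to infinity for every $c'\ne c$. Such biasing gadgets turn a free $\COL{H}$ query into an approximation of the count of $H$-colourings that respect $\sigma$, because the ``wrong-colour'' contributions to $\COL{H}$ of the augmented graph are negligible relative to the correct ones; iterating over the pinned vertices and taking suitable ratios approximates $\COL{H}(G,\sigma)$. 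The role of the Lov\'asz-style Lemmas~\ref{lem:lovaszone}--\ref{lem:lovasztwo} is to guarantee existence of these biasing gadgets for every $H$ whose components are non-trivial, essentially by showing that the matrix of rooted homomorphism counts from graphs of bounded complexity into $H$ separates colour classes up to the natural automorphism symmetry of $H$.

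The main obstacle is the existence proof for the biasing gadgets, which I expect to be genuinely non-constructive. Since the reduction is allowed to depend on the fixed graph $H$, it suffices to know that for each non-trivial $H$ appropriate gadgets exist; the reduction algorithm can then hard-code them. The critical use of the hypothesis that every component of $H$ is non-trivial is precisely to rule out the Dyer--Greenhill polynomial-time cases (cliques with all loops and loopless complete bipartite graphs), in which the extreme symmetry of $H$ prevents any separation of colours via homomorphism counts. Granting the existence claim, the remaining technical work is to quantify the amplification needed so that approximate values of $\COL{H}$ suffice to recover $\COL{H}(G,\sigma)$ within the error budget of an AP-reduction.
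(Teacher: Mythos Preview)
Your proposal rests on a misreading of the intermediate problem. In this paper, $\FixCOL{H}$ is \emph{not} the problem of counting $H$-colourings that extend a given partial assignment. Rather, the parameter $H$ is a $2$-coloured bipartite graph, the input $G$ is also a $2$-coloured bipartite graph, and $\FixCOL{H}(G)$ counts the \emph{colour-preserving} homomorphisms from $G$ to $H$ (those sending $L(G)$ into $L(H)$ and $R(G)$ into $R(H)$). No individual vertices are pinned to individual colours.

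Consequently you also have the roles of the two lemmas reversed. Lemma~\ref{lem:fixingHCOL} shows that for any $H$ with no trivial components one can find a full, non-trivial bipartite $H'$ with $\FixCOL{H'}\leq_{\AP}\COL{H}$; the construction passes through the bipartite double cover of $H$ and uses Lemma~\ref{lem:lovasztwo} to select, among several candidate induced subgraphs $H_{u,v}$, one that dominates in a gadget count. Lemma~\ref{lem:main} is the substantive step and shows $\BIS\leq_{\AP}\FixCOL{H'}$ for every such $H'$, by an induction on $|V(H')|$ that analyses the dominant maximal bicliques of $H'$ inside a gadget $K_{a,b}(\Gamma)$ and again invokes Lemma~\ref{lem:lovasztwo} non-constructively to isolate a single biclique when several compete. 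The theorem follows by chaining these two reductions.

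Your alternative plan --- bias the colour at a pinned vertex toward a prescribed $c$ via a rooted gadget $F_{c,t}$ --- faces a genuine obstruction you only glancingly acknowledge: if $H$ has an automorphism sending $c$ to $c'$, then $\hom_r(F,H,c)=\hom_r(F,H,c')$ for \emph{every} rooted $F$, so no biasing gadget can separate them. Working modulo automorphism orbits and still extracting a $\BIS$ reduction is exactly the difficulty the paper confronts, and it is not resolved by the Lov\'asz-type statements here: Lemmas~\ref{lem:lovaszone}--\ref{lem:lovasztwo} distinguish non-isomorphic \emph{target graphs} $H_1,\dots,H_k$ by an unlabelled test graph $J$, not colours at a root, and they are deployed to pick one induced subgraph of $H$ out of several competing ones, not to build pinning gadgets.
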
 

Interestingly, in the proof of Theorem~\ref{thm:main} we use a non-constructive approach, partly inspired by tools introduced by Lov\'{a}sz \cite{Lovasz} in the context of graph homomorphisms. Before delving into the technical details of the proof, in the next section we will describe, in a simplified setting, the main obstacles to proving Theorem~\ref{thm:main} and the most interesting elements of the proof.

\section{Reductions for sampling versus reductions for counting}\label{sec:previouswork}

We start by considering the closely related work \cite{HCOL}.
The assumptions on the graph~$H$ 
are the same
as in Theorem~\ref{thm:main} --- namely that $H$ does not have
trivial components. The proof in~\cite{HCOL}
shows  that approximately \emph{sampling} $H$-colourings is at least as hard as $\BIS$.
We will first overview the approach of this   paper
since we will use several ingredients of their proof.  We will also describe the new  ingredients  which will allow  us to leap  from the sampling setting to the counting setting.

Let $H$ be a graph for which we wish to show that $\BIS\leq_{\AP}\COL{H}$. To do this, it clearly suffices to find a subgraph $H'$ of $H$ such that $\BIS\leq_{\AP}\COL{H'}$ and  $\COL{H'}\leq_{\AP}\COL{H}$.

We use the following definitions. The \emph{degree} of a vertex is the number of edges incident to it,
so a self-loop contributes~$1$ to the degree of a vertex.
Given a subset $S$ of $V(H)$, let $H[S]$ denote the subgraph of $H$ induced by the set $S$. 
Further, let $\Nu(S)$ denote the neighbourhood of $S$ in $H$, i.e., the set of vertices in $H$ which are adjacent to a vertex in $S$. We will also denote the vertices of $H$ by $v_1,v_2,\hdots$.

\subsection{Restricting $H$-colourings to induced-subgraph-colourings}\label{sec:induced}

To motivate how to find such a subgraph $H'$, we give a high-level reduction scheme. This will reveal that the subgraphs induced by the neighbourhoods of 
maximum-degree
vertices  of~$H$  are  natural choices for $H'$.\footnote{Later, we will modify this reduction scheme in a non-trivial way, but still the aspects 
that we highlight of the simplified reduction scheme will carry 
over to the modified reduction scheme.} 
To see this, 
let's temporarily suppose that we already have a subgraph~$H'$ of~$H$ in mind, and let $G'$ be an input for $\COL{H'}$ such that $|V(G')|=n'$. We next construct an instance $G$ of $\COL{H}$ by adding to $G'$ a special vertex $w$ and a large independent set $I$ with $|I|\gg n'$. We also add all edges between the special vertex $w$ and the vertices of $G'$ and $I$. Let $h$ be a homomorphism from $G$ to $H$ with $h(w)=v_i$. 
\begin{figure}[h]
\begin{center}
\scalebox{0.6}[0.6]{\input{Hexample3.tex}}
\end{center}
\end{figure}
Observe that the edges between $w$ and the rest of $G$ enforce that the restriction of $h$ to 
$G'$ is an $N_\cup(v_i)$-colouring of~$G'$. Similarly, the restriction
of $h$ to~$I$ is   a $N_\cup(v_i)$-colouring of $I$. 
Furthermore,  there are $(\mathrm{deg}_H(v_i))^{|I|}$ choices for the restriction of $h$ on the independent set $I$. From this, it is not hard to formally argue that the effect of the large independent set $I$ is to enforce that in all but a negligible fraction of $H$-colourings of $G$, the special vertex $w$ must be assigned a colour among the vertices of $H$ with maximum degree and thus $G'$ is coloured using the subgraph of $H$ induced by the neighbourhood of such a vertex. 

In an ideal scenario, there is a unique vertex $v$ in $H$ with maximum degree and, further, the subgraph $H'=H[N_{\cup}(v)]$ of $H$ is non-trivial and different from $H$. If both of these hypotheses hold, the reduction above can be used to show that $\COL{H'}\leq_{\AP}\COL{H}$ (since the uniqueness of $v$ implies  that  $h(w)=v$ in ``almost all" $H$-colorings) and (say, by induction) we have $\BIS\leq_{\AP}\COL{H'}$. But what happens when these hypotheses do not hold?

A significant problem which arises  at this point is the existence of multiple relevant neighbourhoods.
That is, there may be several maximum-degree vertices in~$H$,
and the subgraphs induced by their neighbourhoods may not be isomorphic. 
It is much easier to deal with this problem in the sampling setting
than in the counting setting. 
In the next section, we describe the difference between  
these settings, and we describe our approach to this (initial) hurdle.
 
\subsection{Counting subgraph-induced-colourings: the case of multiple subgraphs}\label{sec:multiple}

To illustrate concretely  the part of the sampling argument in \cite{HCOL} which breaks down in the counting setting, let us consider the toy example of the graph $H$ depicted in Figure~\ref{fig:toy} and overview how the argument in \cite{HCOL} works. Since $H$ is regular,  the relevant (induced) neighbourhoods of the vertices in $H$ are given by the graphs $H_1$ and $H_2$ in Figure~\ref{fig:toy}. These  correspond to the neighbourhoods of the vertices $v_1$ and $v_2$, respectively (note that the remaining neighbourhoods are isomorphic to $H_2$). Before proceeding, we remark that $\BIS\leq_{\AP}\COL{H_1}$ and $\BIS\leq_{\AP}\COL{H_2}$ (see \cite{Kelk} where all graphs $H$ with up to four vertices are classified; in \cite[p. 314--316]{Kelk} $H_1$ corresponds to the graph n.51 while $H_2$ corresponds to the graph n.59).

\begin{figure}[H]
\centering
\scalebox{0.8}[0.8]{\input{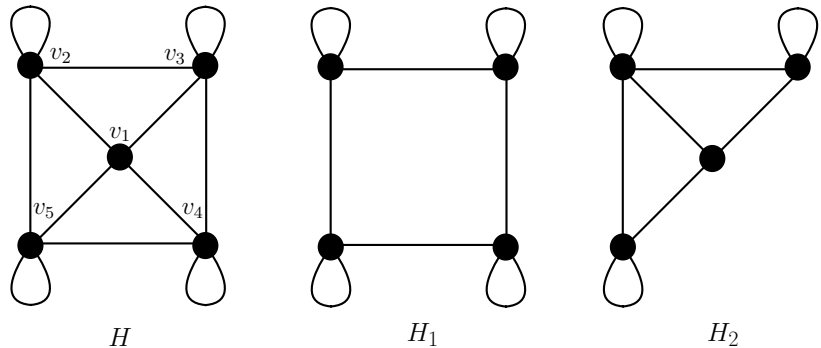}}
\caption{Example of a graph $H$ and the subgraphs induced by the neighbourhoods of its (maximum degree) vertices.}\label{fig:toy}
\end{figure}

 So we already have sampling reductions from~$\BIS$ to~$\COL{H_1}$ and~$\COL{H_2}$ and
we want a sampling reduction from~$\BIS$ to $\COL{H}$, e.g., an algorithm for sampling 
bipartite independent sets using an oracle for sampling $H$-colourings.
Here's how it works. Let $G'$ be an input to $\BIS$. 
Using the sampling reductions to~$\COL{H_1}$ and~$\COL{H_2}$, construct from~$G'$
two graphs $G_1$ and $G_2$
such that
an (approximately) uniform  $H_1$-colouring of $G_1$ allows
us to construct an (approximately) uniform independent set of~$G'$
and similarly an (approximately) uniform $H_2$-colouring of $G_2$ also 
allows this. Then let $G$ be the graph 
obtained by 
taking the disjoint union of $G_1$ and $G_2$, adding a special vertex $w$, and adding all edges between $w$ and $G_1$ and $G_2$ (note, there is no need for the independent set $I$ used previously since $H$ is regular; see the graph $G$ in Figure~\ref{fig:sampler}).
\begin{figure}[h]
\begin{center}
\scalebox{0.6}[0.6]{\input{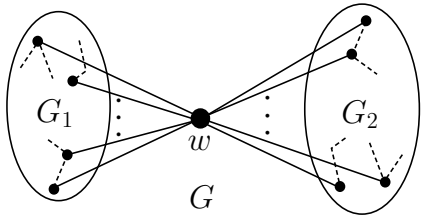}}
\end{center}
\caption{``Gluing" the reductions (\cite{HCOL})}\label{fig:sampler}
\end{figure}

Observe that given a random $H$-colouring $h$ of $G$,  by just revealing the colour of the special vertex $w$, we can generate either a random $H_1$-colouring of $G_1$ or a random $H_2$-colouring of $G_2$.   In particular, if $h(w)$ is $v_1$, by considering the restriction of $h$ on $G_1$ we obtain a random $H_1$-colouring of $G_1$. Similarly, if  $h(w)$ is one of the vertices $v_2,v_3,v_4,v_5$, we obtain a random $H_2$-colouring of $G_2$. By our assumptions for $G_1$ and $G_2$, in each case we can then obtain a random independent set of $G'$. 

In contrast, the aforementioned reduction scheme fails in the  counting setting. Namely, considering cases for the colour $h(w)$, we obtain the following equality:
\begin{equation}\label{eq:highlevel1}
\COL{H}(G)=\COL{H_1}(G_1)\, \COL{H_1}(G_2)+4\COL{H_2}(G_1)\,\COL{H_2}(G_2)
\end{equation}
Given an approximation of $\COL{H}(G)$, say $Z$, observe that \eqref{eq:highlevel1} yields little information about whether $Z$ is a good approximation for  $\COL{H_1}(G_1)$ or $\COL{H_2}(G_2)$. This issue goes away in the sampling setting precisely because we can distinguish between the two cases by just looking at the colour of the special vertex $w$ in the random $H$-colouring of $G$. 

Thus, to proceed with the reduction in the counting setting we have to focus our attention on one of $H_1$ or $H_2$, say $H_1$, and somehow prove that $\COL{H_1}\leq_{\AP}\COL{H}$. The question which arises is how to choose between $H_1$ and $H_2$.  This becomes  more complicated for general graphs $H$  since it is not hard to imagine that instead of just two graphs $H_1,H_2$ we will typically have a collection of graphs $H_1,\hdots,H_t$ corresponding to the induced neighbourhoods of vertices $v_1,\hdots,v_t$ of $H$, for some $t$ which can be arbitrarily large (depending on the graph $H$). To make matters worse, apart from very basic information on  the $H_i$'s (such as connectedness or number of vertices/edges), we will not be able to control significantly their graph structure.

 At this point, we employ a non-constructive approach using a tool from \cite{Lovasz}: for arbitrary non-isomorphic graphs $H_1,H_2$ there exists a (fixed) graph $J$ depending only on $H_1$ and $H_2$ such that $\COL{H_{1}}(J)\neq\COL{H_{2}}(J)$. We extend this to an arbitrary collection of pairwise non-isomorphic graphs $H_1,\hdots,H_t$ as follows: we prove the existence of a graph $J$ so that for some 
 $i^*\in\{1,\ldots,t\}$
 it holds that $\COL{H_{i^*}}(J)>\COL{H_{i}}(J)$ for all $i\neq i^*$. Intuitively, the graph $J$ will be used to ``select" the subgraph $H_{i^*}$. Note that we will not require any further knowledge about what $H_{i^*}$ or $J$ is, freeing us from the cumbersome (and perhaps difficult) task of looking into the finer details of the structure of the graphs $H_1,\hdots,H_t$. With the graph $J$ in hand, we then  take sufficiently many disjoint copies of $J$ 
 and we connect them   to the special vertex $w$.
 This ensures  that, in a typical $H$-colouring, the vertex $w$ gets coloured with the vertex $v_i$. 

To utilise the above, we will further need to ensure that $\BIS\leq_{\AP}\COL{H_i}$ for every $i$. If we could ensure that the $H_i$'s are proper subgraphs of $H$ and non-trivial, then using the arguments above, we could complete the proof using induction. However, this is clearly not possible in general since for example, as we noted earlier, there may exist a vertex $v$ in $H$ such that $N_{\cup}(v)=V(H)$.  
Dealing with such cases 
is the bulk of the work in the sampling setting of \cite{HCOL} and
these cases cause even more problems for us. To deal with them, we need a further non-constructive argument --- one that turns out to be
more technical than, and substantially different from the ideas in Lov\'{a}sz \cite{Lovasz}. 

In the next section, we introduce
the concepts that we will use to formalise the intuitive arguments that we
have given above. These concepts will be critical in our argument.

\section{Proof Outline}
Since we are interested in instances of $\BIS$,
which are bipartite
graphs,   we will often need to consider $H$-colourings of bipartite graphs. We will assume that every bipartite graph $G$ comes with a (fixed) proper 2-colouring of its parts with colours $\{L,R\}$. We will use $L(G)$ and $R(G)$ to denote the vertices of $G$ coloured with $L$ and $R$ respectively. We will frequently refer to a bipartite graph as a 2-coloured graph to emphasise the proper 2-colouring of its vertices.
In Section~\ref{sec:fixingHCOL}
we will define the  bipartite double cover $\Bip(H)$
of a graph~$H$. 
This is a bipartite graph which can be viewed as the tensor 
product\footnote{The tensor product of graphs~$H_1$ and~$H_2$ 
has vertex set $V(H_1) \times V(H_2)$.
Vertices $(u_1,u_2)$ and $(v_1,v_2)$ are adjacent in the tensor product
if and only if $(u_1,v_1) \in E(H_1)$ and $(u_2,v_2) \in E(H_2)$.
} of $H$ with a single edge.
To simplify matters, for bipartite 
graphs~$H$, it will be convenient to restrict our attention to colour-preserving homomorphisms which are formally defined as follows,
where $h(S)$ denotes the image of a set $S\subseteq V(G)$ under
a homomorphism $h$ from~$G$ to~$H$, i.e.,
$h(S) = \{ h(v) \mid v \in S\}$.
\begin{definition}\label{def:color}
Let $G$ and $H$ be 2-coloured bipartite graphs. A homomorphism $h$ from $G$ to $H$ is colour-preserving if $h$ preserves the colouring of the parts of $G$, i.e., $h(L(G))\subseteq L(H)$ and $h(R(G))\subseteq R(H)$.
\end{definition}
As we will see in Section~\ref{sec:fixingHCOL},
the $H$-colourings of a connected bipartite graph~$G$ 
correspond to the colour-preserving homomorphisms
from~$G$ to~$\Bip(H)$. 
 By restricting our attention to colour-preserving homomorphisms, we obtain the following version of the $\COL{H}$ problem.\\

\noindent\textit{Parameter}. A bipartite graph $H$ with vertex partition $(L(H),R(H))$.\\
\textit{Name}. $\FixCOL{H}$\\
\textit{Instance}. A bipartite graph $G$ with vertex partition $(L(G),R(G))$.\\
\textit{Output}. 
The number of colour-preserving homomorphisms from $G$ to $H$,
which we denote $\FixCOL{H}(G)$.\\

Before justifying our interest in the problem~$\FixCOL{H}$, let us introduce one more piece of terminology. A bipartite graph $H$ will be called \textit{full} if there exist vertices $u\in L(H)$ and $v\in R(H)$ such that $u$ is adjacent to every vertex in $R(H)$ and $v$ is adjacent to every vertex in $L(H)$.  In this case, vertices $u$ and~$v$ are also called full. Note that if $H$ is full, then $H$ is also connected.

The following lemma allows us to restrict our attention to the $\FixCOL{H}$ problem. The proof of Lemma~\ref{lem:fixingHCOL} is along the lines   described in Sections~\ref{sec:induced} 
and~\ref{sec:multiple}, albeit with  some modifications to account for    technical details. We thus defer the proof to Section~\ref{sec:fixingHCOL}.

\newcommand{\statelemfixingHCOL}{
Let 
$H$ be a graph with no trivial components. Then, there exists a 2-coloured graph $H'$, which is 
full but not trivial,
such that
$$\FixCOL{H'}\leq_\AP \COL{H}.$$
}

\begin{lemma} [Analogue of Lemma 7 in \cite{HCOL}]
\label{lem:fixingHCOL}
\statelemfixingHCOL{}
\end{lemma}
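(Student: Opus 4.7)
The plan is to follow the reduction strategy outlined in Sections~\ref{sec:induced} and~\ref{sec:multiple}, augmented by a non-constructive selection argument of \Lovasz{} type; the required $H'$ will be an induced ``neighbourhood'' subgraph of the bipartite double cover $\Bip(H)$. First, pass to the bipartite double cover: since every component of $H$ is non-trivial, every component of $\Bip(H)$ is bipartite and non-trivial, and $H$-colourings of a bipartite graph correspond to colour-preserving homomorphisms into $\Bip(H)$ up to component-level symmetries. It therefore suffices to exhibit a 2-coloured induced subgraph $H'$ of $\Bip(H)$ that is full and non-trivial and that satisfies $\FixCOL{H'}\le_{\AP}\COL{H}$.

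Next, assemble the candidate family. For each edge $uv$ of $\Bip(H)$ with $u\in L(\Bip(H))$ and $v\in R(\Bip(H))$, let $H_{u,v}$ denote the subgraph of $\Bip(H)$ induced on $N(u)\cup N(v)$, 2-coloured with $L$-part $N(v)$ and $R$-part $N(u)$. By construction $v$ is full in $H_{u,v}$ from the $L$-side and $u$ is full from the $R$-side, so each $H_{u,v}$ is full. Restricting to edges $uv$ that maximise $\deg(u)+\deg(v)$ produces a finite family $\mathcal{H}=\{H_1,\ldots,H_t\}$ of pairwise non-isomorphic full candidates. A \Lovasz-type selection lemma, to be proven later in the paper, then furnishes a 2-coloured bipartite gadget $J$ and an index $i^\star$ such that $\FixCOL{H_{i^\star}}(J) > \FixCOL{H_i}(J)$ for every $i\ne i^\star$; set $H':=H_{i^\star}$.

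The reduction itself mimics the construction in Section~\ref{sec:multiple}. Given a bipartite input $G'$ to $\FixCOL{H'}$, adjoin two special vertices $w_L,w_R$, joining $w_L$ to every vertex of $R(G')$, $w_R$ to every vertex of $L(G')$, and $w_L$ to $w_R$; glue $M$ disjoint copies of $J$ through the pair $(w_L,w_R)$; and attach large independent sets $I_L,I_R$ to $w_R$ and $w_L$ respectively. For $M,|I_L|,|I_R|$ polynomial in $|V(G')|$ and sufficiently large, the independent sets penalise any assignment to $(w_L,w_R)$ that is not a $\deg(u)+\deg(v)$-maximiser, while the copies of $J$ penalise any maximiser pair that does not realise $H_{i^\star}$. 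The dominant contribution to $\COL{H}(G)$ thus equals $c\cdot\FixCOL{H'}(G')$ for a fixed constant $c$, with all other contributions exponentially smaller in $M$; standard amplification then yields $\FixCOL{H'}\le_{\AP}\COL{H}$.

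The main obstacle is guaranteeing that $H'$ is non-trivial. Fullness is built into the construction, but nothing so far rules out the case in which every candidate $H_i$ is complete bipartite, i.e.\ trivial. The plan is to exploit the hypothesis that no component of $H$ is trivial: since $\Bip(H)$ inherits this non-triviality, one argues that among maximising edges at least one pair $(u,v)$ must yield a non-trivial $H_{u,v}$, after which the \Lovasz-type selection may be restricted to the sub-family of non-trivial candidates. Carrying this out cleanly for arbitrary $H$ is expected to require additional case analysis and an independent non-constructive argument of a different flavour from \Lovasz's (as flagged at the end of Section~\ref{sec:multiple}), and will form the technical bulk of the proof.
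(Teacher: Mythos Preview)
Your overall architecture---pass to $\Bip(H)$, take induced ``neighbourhood'' subgraphs $H_{u,v}$ for suitable edges, use Lemma~\ref{lem:lovasztwo} to single out one of them with a gadget $J$, and build the reduction with two special vertices, independent sets, and $n^2$ copies of $J$---matches the paper's proof closely. There are, however, two related points where your proposal diverges and creates a real gap.

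First, the selection criterion. Your gadget with independent sets $I_L,I_R$ attached to $w_R,w_L$ contributes a factor $\deg(h(w_R))^{|I_L|}\deg(h(w_L))^{|I_R|}$, which for equal sizes maximises the \emph{product} $\deg(u)\deg(v)$, not the sum $\deg(u)+\deg(v)$; no simple independent-set gadget selects sum-maximisers. The paper instead uses a \emph{lexicographic} selection: it sets $|A|=n^4$ and $|B|=n^3$, so that the dominant pairs are exactly those with $\deg_H(u)=\Delta_1$ (the global maximum) and, among those, $\deg_H(v)=\Delta_2$ (the maximum over neighbours of $\Delta_1$-vertices). This is the set $\Lambda(H)$ of Section~\ref{sec:fixingHCOL}.

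Second, and more importantly, the non-triviality of $H'$ is not an open obstacle here, and it is not resolved by any non-constructive argument. The paper simply cites \cite[Lemma~6]{HCOL}: for every $(u,v)\in\Lambda(H)$ the graph $H_{u,v}$ is full \emph{and non-trivial}. This is a purely combinatorial fact whose proof relies precisely on the lexicographic $(\Delta_1,\Delta_2)$ choice; it need not hold for sum- or product-maximising edges. Your closing paragraph conflates this issue with the genuinely harder difficulties that arise later in Lemma~\ref{lem:main} (where a neighbourhood subgraph can coincide with $H$ itself); those difficulties are irrelevant to Lemma~\ref{lem:fixingHCOL}. Once you adopt the $(\Delta_1,\Delta_2)$ family and invoke \cite[Lemma~6]{HCOL}, every candidate $H_{u,v}$ is already non-trivial, Lemma~\ref{lem:lovasztwo} applies directly to the family, and the rest of your reduction goes through as written.
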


Thus, our main result will be easy to conclude  from the following central lemma, which will be proved in Section~\ref{sec:lemmafull}. In the upcoming Section~\ref{sec:wer}, we give an overview of the ingredients in the proof.
\newcommand{\statelemmain}{Let $H$ be a 2-coloured graph which is 
full but not trivial.
Then $\BIS\leq_\AP\FixCOL{H}$.}
\begin{lemma}\label{lem:main}
\statelemmain{}
\end{lemma}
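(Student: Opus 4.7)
The plan is to prove Lemma~\ref{lem:main} by strong induction on $|V(H)|$, using the bipartite-preserving variant of the reduction scheme discussed in Sections~\ref{sec:induced}--\ref{sec:multiple}.

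\textbf{Base cases.} I would first handle the smallest full but non-trivial 2-coloured graphs directly. Non-triviality ensures that $H$ is not a complete bipartite graph, so there is a ``forbidden'' pair $(u,v)\in L(H)\times R(H)$ with $(u,v)\notin E(H)$. Combined with a full vertex on each side, this gives enough asymmetry to encode a $\BIS$-like problem: the full vertex imposes no constraint on its opposite side while the non-full vertex imposes a genuine ``independence''-type constraint. A direct reduction from $\BIS$, in which one uses the full/non-full dichotomy to encode whether a vertex lies in the independent set, handles the base cases.

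\textbf{Inductive step.} Assuming $H$ is larger than the base cases, I would identify a strictly smaller 2-coloured graph $H^*$ that is itself full and non-trivial, and then establish $\FixCOL{H^*}\leq_{\AP}\FixCOL{H}$. The inductive hypothesis will then yield $\BIS\leq_{\AP}\FixCOL{H^*}\leq_{\AP}\FixCOL{H}$. Natural candidates for $H^*$ are bipartite sub-structures of $H$ obtained by restricting one side of $H$ to the neighbourhood of a non-full vertex on the opposite side; such vertices exist since $H$ is non-trivial. Some care is needed to argue that among these candidates at least one is itself full and non-trivial, possibly after passing to a connected component.

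\textbf{Gadget reduction $\FixCOL{H^*}\leq_{\AP}\FixCOL{H}$.} Given an input $G'$ of $\FixCOL{H^*}$, I would construct an instance $G$ of $\FixCOL{H}$ by introducing a special ``anchor'' vertex $w$ in one of the parts, joining $w$ to every vertex on the opposite side of $G'$, and attaching $N$ disjoint copies of a carefully chosen auxiliary bipartite graph $J$ to $w$. The role of $J$ is to force a unique colour on $w$: for the set $\{H_i\}$ of possible induced neighbourhoods that $w$ could impose on $G'$, the \Lovasz{}-style selector outlined in Section~\ref{sec:multiple} yields a $J$ with $\FixCOL{H_{i^*}}(J)$ strictly greater than $\FixCOL{H_i}(J)$ for all $i\neq i^*$, where $H_{i^*}=H^*$. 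Taking $N$ sufficiently large, polynomial in $|V(G')|$, makes the contribution of colour-preserving homomorphisms with $h(w)$ different from the target (exponentially) negligible, so that $\FixCOL{H}(G)\approx \FixCOL{H^*}(G')\cdot \bigl(\FixCOL{H_{i^*}}(J)\bigr)^{N}$. Dividing out the known gadget factor then extracts the desired AP-approximation of $\FixCOL{H^*}(G')$ from an AP-approximation of $\FixCOL{H}(G)$.

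\textbf{Main obstacles.} The most delicate case, flagged at the end of Section~\ref{sec:multiple}, is when no non-full vertex admits a strictly smaller useful neighbourhood — for example when some vertex $v$ satisfies $\Nu(v)=V(H)$ on its side, so that the naive neighbourhood-induction simply returns $H$ itself. Handling this case requires a second non-constructive argument, more technical than and substantively different from the \Lovasz{} separator, to force $w$ onto a designated vertex even when no neighbourhood restriction is available. A second, more quantitative obstacle is converting the strict inequality $\FixCOL{H_{i^*}}(J)>\FixCOL{H_i}(J)$ into a genuine polynomial-time AP-reduction: the multiplicative gap produced by the \Lovasz{} tool may be arbitrarily small, so the amplification exponent $N$ must be chosen large enough to dominate all other combinatorial factors counted by $\FixCOL{H}(G)$ while remaining polynomially bounded in the instance size.
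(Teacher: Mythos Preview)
Your inductive framework is the right shape, but the gadget reduction as you describe it cannot work, and the obstacle you flag at the end is not a peripheral technicality --- it is the entire content of the lemma.

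Concretely: in your single-anchor scheme, colouring $w$ with $c\in L(H)$ restricts $G'$ to the induced subgraph $H_c$ on $L(H)\cup N_\cup(c)$. Every such $H_c$ is an induced subgraph of $H$, so $\FixCOL{H_c}(J)\le \FixCOL{H}(J)$ for \emph{every} $J$. The class of full vertices gives $H_c=H$ itself, and hence it always achieves the maximum; the \Lovasz{} selector of Lemma~\ref{lem:lovasztwo} is then forced to pick $i^*$ with $H_{i^*}=H$, and your induction stalls. You cannot ``aim'' the selector at a non-full vertex --- Lemma~\ref{lem:lovasztwo} only promises existence of \emph{some} maximiser, and here the maximiser is always the useless one. (Your phrasing ``where $H_{i^*}=H^*$'' quietly assumes control over $i^*$ that the lemma does not give.)

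The paper's fix is substantial. It replaces the single anchor by a complete bipartite block $K_{a,b}$, so that the ``phase'' is a biclique $(S_L,S_R)$ rather than a single colour, with contribution roughly $|S_L|^a|S_R|^b$. Tuning $a/b$ via \eqref{eq:defa0b0} equalises the two extremal bicliques $(F_L,V_R)$ and $(V_L,F_R)$; if only these dominate, a direct $\BIS$ reduction works (Lemma~\ref{lem:flvr}), and if only non-extremal bicliques dominate, the \Lovasz{} selector plus Lemma~\ref{inductionstep} gives a genuinely smaller $H'$ (Lemma~\ref{lem:slsr}). The hard case --- exactly the one you defer --- is when extremal and non-extremal bicliques coexist in $\Cc_{\alpha,\beta}$. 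The paper resolves it by a second non-constructive parameter: the graph $\Gamma$ in the gadget $K_{a,b}(\Gamma)$, whose choice reweights the phases by $\zeta(S_L,S_R,\Gamma)|S_R|^{\gamma(\Gamma)}$, and a trichotomy (Cases I--III in Section~\ref{sec:lemmafull}) on whether some $\Gamma$ makes a non-extremal biclique strictly dominate, always tie, or can always be strictly dominated. Case~II in particular yields a closed-form identity \eqref{eq:equalitycase} relating $\FixCOL{H'}$ to $\FixCOL{H}$ on every input, which is how the reduction survives even when no single gadget separates the phases. None of this machinery is present in your proposal.
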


Assuming Lemmas~\ref{lem:fixingHCOL} and~\ref{lem:main}, the proof of Theorem~\ref{thm:main} is immediate.
 
 \begin{thmmain}
\statethmmain{}
\end{thmmain}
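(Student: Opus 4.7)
The plan is to derive Theorem \ref{thm:main} as an immediate composition of the two preceding lemmas, together with the transitivity of approximation-preserving reductions (a standard property of $\leq_\AP$, \cite{DGGJ}). Given a graph $H$ all of whose connected components are non-trivial, as required by the hypothesis, I would first invoke Lemma \ref{lem:fixingHCOL} with this $H$: its conclusion supplies a fixed 2-coloured graph $H'$ which is full but not trivial, together with the approximation-preserving reduction $\FixCOL{H'} \leq_\AP \COL{H}$. Because $H$ is a fixed parameter, so is $H'$, and we can treat it as a constant for the purpose of the next step.

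Next, I would apply Lemma \ref{lem:main} to the graph $H'$ just produced. Since $H'$ is 2-coloured, full and non-trivial, it satisfies the hypothesis of that lemma verbatim, giving $\BIS \leq_\AP \FixCOL{H'}$. The final move is to compose these two AP-reductions in the obvious way: on input a $\BIS$ instance, the algorithm from Lemma \ref{lem:main} produces $\FixCOL{H'}$ queries, each of which is answered by invoking the reduction from Lemma \ref{lem:fixingHCOL} on top of an oracle for $\COL{H}$. This yields $\BIS \leq_\AP \COL{H}$, which is precisely the statement of Theorem \ref{thm:main}.

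There is no genuine obstacle at the level of this composition step itself; all of the substantive work has been packaged into the two lemmas. The real difficulty of the overall proof lies in Lemma \ref{lem:main}, whose proof is deferred to Section \ref{sec:lemmafull} and is where the non-constructive machinery of \Lovasz{} (the distinguishing graph $J$ selecting a single induced-neighbourhood subgraph $H_{i^*}$ among a collection of candidates, as sketched in Section \ref{sec:multiple}) must be invoked and extended. Lemma \ref{lem:fixingHCOL}, while technical, largely parallels the corresponding argument from \cite{HCOL} in the sampling setting, adapted so as to speak about counting colour-preserving homomorphisms to a bipartite double cover rather than about sampling $H$-colourings directly.
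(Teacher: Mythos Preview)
Your proposal is correct and matches the paper's own proof essentially verbatim: the paper simply says ``Just combine Lemmas~\ref{lem:fixingHCOL} and~\ref{lem:main},'' which is exactly the composition via transitivity of $\leq_\AP$ that you describe. Your additional commentary on where the real work lies (in the proofs of the two lemmas, especially Lemma~\ref{lem:main}) is accurate and consistent with the paper's organisation.
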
 

\begin{proof} 
Just combine  Lemmas~\ref{lem:fixingHCOL} and~\ref{lem:main}.
\end{proof}

\section{Tools from graph homomorphisms}\label{sec:preliminaries}
In this section, we present some tools from graph-homomorphism theory which are central for our non-constructive arguments.

Recall that an isomorphism from a graph $H_1$ to a graph $H_2$ is an injective homomorphism $h$ from $H_1$ to $H_2$ whose inverse $h^{-1}$ is also a homomorphism from $H_2$ to $H_1$. 
Similarly,
a colour-preserving isomorphism from a 2-coloured bipartite graph~$H_1$ 
to a 2-coloured bipartite graph~$H_2$ is 
a colour-preserving homomorphism~$h$ 
(see Definition~\ref{def:color}), 
from~$H_1$ to~$H_2$
whose inverse   is also a colour-preserving homomorphism from $H_2$ to $H_1$. 
We will use $H_1\cong H_2$ to denote that $H_1$ and $H_2$ are isomorphic and $H_1\cong_c H_2$ to denote that $H_1$ and $H_2$ are isomorphic under a colour-preserving isomorphism. 
The following lemma has its origins in \cite[Theorem (3.6)]{Lovasz}. For completeness, 
in Section~\ref{sec:prelimproofs}, 
we 
present a  proof for the present setting which is a close adaptation of  \cite[Proof of Theorem 2.11]{HellNesetril}.

\newcommand{\statelovaszone}{Let $H_1$ and $H_2$ be 2-coloured bipartite graphs such that $H_1\ncong_c H_2$. Then,  there exists a 2-coloured bipartite graph $J$ such that $\FixCOL{H_1}(J)\neq\FixCOL{H_2}(J)$.
Moreover, $|V(J)|\leq\max\{|V(H_1)|,|V(H_2)|\}$.}
\begin{lemma}\label{lem:lovaszone}
\statelovaszone{}
\end{lemma}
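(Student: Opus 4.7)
The plan is to adapt the classical Lovász argument \cite{Lovasz} to the colour-preserving bipartite setting. For 2-coloured bipartite graphs $J$ and $H$, let $\InjFixCOL{H}(J)$ denote the number of colour-preserving homomorphisms from $J$ to $H$ that are injective on vertices, and set $N = \max\{|V(H_1)|, |V(H_2)|\}$. The heart of the proof is the identity
\[
\FixCOL{H}(J) \;=\; \sum_{P} \InjFixCOL{H}(J/P),
\]
where $P$ ranges over the partitions of $V(J)$ whose blocks are each monochromatic (entirely contained in $L(J)$ or in $R(J)$), and $J/P$ denotes the quotient graph with its inherited 2-colouring. This identity follows by grouping each colour-preserving homomorphism $h: J \to H$ according to its fibre partition $\{h^{-1}(v)\}_{v \in V(H)}$: colour-preservation forces each fibre to be monochromatic, and $h$ factors uniquely through a colour-preserving injective homomorphism $J/P \to H$. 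Because $J$ is bipartite and the blocks are monochromatic, no block contains both endpoints of an edge, so $J/P$ is a well-defined simple 2-coloured bipartite graph.

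I would argue by contrapositive: assume $\FixCOL{H_1}(J) = \FixCOL{H_2}(J)$ for every 2-coloured bipartite $J$ with $|V(J)| \leq N$, and derive $H_1 \cong_c H_2$. By strong induction on $|V(J)|$, using the rearrangement
\[
\InjFixCOL{H}(J) \;=\; \FixCOL{H}(J) - \sum_{P \neq \mathrm{discrete}} \InjFixCOL{H}(J/P)
\]
and the fact that every non-discrete monochromatic partition gives $|V(J/P)| < |V(J)|$, one concludes $\InjFixCOL{H_1}(J) = \InjFixCOL{H_2}(J)$ for every 2-coloured bipartite $J$ with $|V(J)| \leq N$. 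The base case $|V(J)| = 1$ is immediate, and specialising $J$ to a single $L$-vertex, a single $R$-vertex, and a single edge respectively yields $|L(H_1)| = |L(H_2)|$, $|R(H_1)| = |R(H_2)|$, and $|E(H_1)| = |E(H_2)|$.

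Applying the equality of injective counts with $J = H_1$ (which is allowed since $|V(H_1)| \leq N$), the identity map on $H_1$ witnesses $\InjFixCOL{H_1}(H_1) \geq 1$, and hence $\InjFixCOL{H_2}(H_1) \geq 1$, giving a colour-preserving injective homomorphism $f: H_1 \to H_2$. Since $|V(H_1)| = |V(H_2)|$, the map $f$ is a bijection on vertices; since $|E(H_1)| = |E(H_2)|$ and $f$ sends edges injectively into edges, it is also a bijection on edges. Consequently $f^{-1}$ is a colour-preserving homomorphism and $f$ witnesses $H_1 \cong_c H_2$, contradicting the hypothesis. The only real obstacle is bookkeeping: ensuring that the inversion formula stays within the 2-coloured bipartite world, which is exactly why the sum is restricted to monochromatic partitions. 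The vertex bound $|V(J)| \leq N$ drops out automatically, because the only specific test graphs needed in the final step are $H_1$ and $H_2$ themselves.
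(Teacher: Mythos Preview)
Your proof is correct and follows essentially the same route as the paper: both use the identity $\FixCOL{H}(J)=\sum_P \InjFixCOL{H}(J/P)$ over monochromatic partitions (the paper writes this as a double sum over $\Theta_L,\Theta_R$) and induct to get equality of injective counts, then specialise $J$ to conclude $H_1\cong_c H_2$. The only cosmetic difference is in the finishing move: the paper plugs in $J=H_1$ and $J=H_2$ to get injective homomorphisms in both directions, whereas you first extract the vertex and edge cardinality equalities from tiny test graphs and then use only $J=H_1$; both endings are standard and equally valid.
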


Lemma~\ref{lem:lovaszone} can be viewed as a way of obtaining a gadget~$J$
to distinguish between the graphs~$H_1$ and~$H_2$.
In fact, we will need to be able to
pick out a particular graph~$H_{i^*}$ from a 
whole set of non-isomorphic graphs. 
For a positive integer $k$, let $[k]=\{1,\ldots,k\}$.
In Section~\ref{sec:prelimproofs} we 
give an inductive proof, powering up the effect of Lemma~\ref{lem:lovaszone}  to obtain the following. 
\newcommand{\statelovasztwo}{Let $H_1,\hdots,H_k$ be 2-coloured bipartite graphs such that $H_i\ncong_c H_j$ for all $\{i,j\}\in \binom{[k]}{2}$.
There exists a 2-coloured graph $J$ and an integer~$i^*\in [k]$ such that $\FixCOL{H_{i^*}}(J)>\FixCOL{H_i}(J)$ for all  $i\in[k]\backslash \{i^*\}$.}
\begin{lemma}\label{lem:lovasztwo}
\statelovasztwo{}
\end{lemma}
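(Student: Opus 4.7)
The plan is to amalgamate the pairwise distinguishing gadgets supplied by Lemma~\ref{lem:lovaszone} via disjoint union, and then tune the multiplicities by a genericity argument so that a single index wins strictly. For each unordered pair $\{i,j\}\subseteq[k]$, Lemma~\ref{lem:lovaszone} yields a 2-coloured bipartite graph $J_{ij}$ with $\FixCOL{H_i}(J_{ij})\neq\FixCOL{H_j}(J_{ij})$; set $c_{m,\{i,j\}}:=\FixCOL{H_m}(J_{ij})$. For a non-negative integer vector $\mathbf{n}=(n_{\{i,j\}})\in\mathbb{Z}_{\geq 0}^D$ of dimension $D=\binom{k}{2}$, let $J(\mathbf{n})$ denote the disjoint union of $n_{\{i,j\}}$ vertex-disjoint copies of each $J_{ij}$. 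Since colour-preserving homomorphism counts are multiplicative over disjoint unions of 2-coloured bipartite graphs, $\FixCOL{H_m}(J(\mathbf{n}))=\prod_{\{i,j\}}c_{m,\{i,j\}}^{n_{\{i,j\}}}$, so the task reduces to choosing $\mathbf{n}$ and $i^*$ that make this product strictly maximized at $i^*$.

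In the ``positive'' case where every $c_{m,\{i,j\}}>0$, I would pass to logarithms: the vectors $u_m:=(\log c_{m,\{i,j\}})_{\{i,j\}}\in\mathbb{R}^D$ satisfy $\log\FixCOL{H_m}(J(\mathbf{n}))=\langle\mathbf{n},u_m\rangle$. The crucial point is that $u_1,\ldots,u_k$ are pairwise distinct, since $u_m$ and $u_{m'}$ disagree in coordinate $\{m,m'\}$ by the defining property of $J_{mm'}$. Hence for each pair $\{m,m'\}$ the set $\{\mathbf{n}\in\mathbb{R}^D:\langle\mathbf{n},u_m-u_{m'}\rangle=0\}$ is a proper linear hyperplane whose intersection with the open orthant $\mathbb{R}^D_{>0}$ has Lebesgue measure zero. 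Removing the finite union of these slices from $\mathbb{R}^D_{>0}$ leaves a dense open region; picking any $\mathbf{n}_0$ in it makes the inner products $\langle\mathbf{n}_0,u_m\rangle$ pairwise distinct, so some $i^*$ attains a strict maximum. Approximating $\mathbf{n}_0$ by a nearby rational vector (close enough to preserve the strict ordering) and clearing denominators then yields the required $\mathbf{n}\in\mathbb{Z}_{\geq 0}^D$.

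The hard part will be the degenerate case where some $c_{m,\{i,j\}}=0$, which puts logarithms out of play and can force several products to vanish at once. My plan is to restrict the support of $\mathbf{n}$ to a carefully chosen subset $T\subseteq\binom{[k]}{2}$. Writing $T(m):=\{\{i,j\}:c_{m,\{i,j\}}>0\}$, I would arrange that the ``surviving'' indices $M(T):=\{m:T\subseteq T(m)\}$ are non-empty and their restricted log-vectors $\{u_m|_T:m\in M(T)\}$ are pairwise distinct. This can be organised by iterative enlargement of $T$: if two surviving indices $m_1,m_2\in M(T)$ have $u_{m_1}|_T=u_{m_2}|_T$, then $\{m_1,m_2\}\notin T$ (since $c_{m_1,\{m_1,m_2\}}\neq c_{m_2,\{m_1,m_2\}}$ would otherwise break the tie), and adding $\{m_1,m_2\}$ to $T$ strictly enlarges it while either distinguishing $m_1$ from $m_2$ or dropping one of them from $M(T)$ (at least one of $m_1,m_2$ must remain, since at least one of $c_{m_1,\{m_1,m_2\}},c_{m_2,\{m_1,m_2\}}$ is positive). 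The process terminates in the finite poset of subsets. With the resulting $T$, indices $m$ with $T\not\subseteq T(m)$ are eliminated automatically because $\FixCOL{H_m}(J(\mathbf{n}))=0$ for any $\mathbf{n}$ with full support on $T$, while the surviving indices are separated by the genericity argument applied within $\mathbb{R}^T_{>0}$.
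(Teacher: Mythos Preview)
Your argument is correct and takes a genuinely different route from the paper's. The paper proceeds by induction on~$k$: it applies the inductive hypothesis to $H_2,\ldots,H_k$ to obtain a single gadget $J_2$ with $\FixCOL{H_2}(J_2)>\FixCOL{H_i}(J_2)$ for $i>2$, then (after possibly swapping) uses Lemma~\ref{lem:lovaszone} once more to get $J'$ separating $H_1$ from $H_2$, and finally sets $J$ to be the disjoint union of $J'$ with $t=\lceil CM\rceil$ copies of $J_2$ for an explicit constant $C$. The zero-count case never needs separate treatment there: if some $\FixCOL{H_i}(J_2)$ or $\FixCOL{H_i}(J')$ vanishes, then $\FixCOL{H_i}(J)=0$ and that index is dominated automatically.

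Your approach instead assembles all $\binom{k}{2}$ pairwise gadgets at once and selects multiplicities by a hyperplane-avoidance argument. This is more uniform and conceptually clean, but it is non-constructive (you invoke density of rationals off a measure-zero set rather than writing down an explicit~$\mathbf{n}$), and it forces you to handle the vanishing-count case by your iterative enlargement of~$T$---a step the paper's induction sidesteps entirely. Your treatment of that case is sound: the key invariants (that $M(T)$ stays non-empty because at least one of $c_{m_1,\{m_1,m_2\}},c_{m_2,\{m_1,m_2\}}$ is positive, and that $T$ strictly grows so termination is guaranteed) are correctly identified. In short, both proofs work; the paper's is shorter and fully explicit, while yours trades explicitness for a single direct construction that does not privilege any index.
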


\section{Overview of proof of Lemma~\ref{lem:main}}\label{sec:wer}

Lemma~\ref{lem:main} is our central lemma. It will be proved in Section~\ref{sec:lemmafull}.
Here, we give an overview of the ingredients in the proof.
Thus, for the purpose of this section, we will assume that $H$ is a (2-coloured) bipartite graph which is full (thus, also connected) 
but not trivial.
Our goal will be to show $\BIS\leq_\AP\FixCOL{H}$.

Let $(V_L,V_R)$ denote the vertex partition of $H$ and let $F_L,F_R$ be the subsets of full vertices in $V_L,V_R$, respectively (i.e., every vertex in $F_L$ is connected to every vertex in $V_R$ and every vertex in $F_R$ is connected to every vertex in $V_L$). Since $H$ is full, we have $F_L,F_R\neq \emptyset$.  

Recall that, for a subset $S$ of $V(H)$, $H[S]$ is the subgraph of $H$ induced by the set $S$  
and that $\Nu(S)$ denotes the neighbourhood of $S$ in $H$, i.e., the set of vertices in $H$ which are adjacent to a vertex in $S$.  We will also use $N_{\j}(S)$ to denote the \emph{joint} neighbourhood of $S$ in $H$, i.e., the set of vertices in $H$ which are adjacent to \emph{every} vertex in $S$.

\subsection{An inductive approach using maximal bicliques of $H$}
The proof of  Lemma~\ref{lem:main} will be by induction on the number of vertices of $H$. Our goal will be to find a subgraph $H'$ of $H$ (which will also be 2-coloured, full 
but not trivial) with 
$|V(H')| <  |V(H)|$ such that $\FixCOL{H'}\leq_\AP\FixCOL{H}$.
If we find such an~$H'$, we will finish using the
inductive hypothesis that $\BIS\leq_\AP\FixCOL{H'}$. 
When we are not able to find such a subgraph~$H'$, we will
use an alternative method to  show that $\BIS\leq_\AP\FixCOL{H}$.

To select $H'$, we consider the set $\Cc$ of bicliques in $H$.
\begin{definition}
A \emph{biclique} in a graph $H$ with vertex partition $(V_L,V_R)$ is
a pair $(S_L,S_R)$ such that
$S_L \subseteq V_L$, $S_R \subseteq V_R$ and
$S_L\times S_R\subseteq E(H)$.
Given a fixed~$H$, we use $\Cc$ to denote the set of all bicliques in $H$.
A biclique $(S_L,S_R)\in \Cc$ is said to be \emph{maximal} 
if it is inclusion maximal, in the sense that 
there is no other biclique $(S'_L,S'_R)\in \Cc$ with $S_L \subseteq S'_L$ and $S_R \subseteq S'_R$.
\end{definition}
Note that $(F_L,V_R),(V_L,F_R)$ are maximal bicliques in $H$. For lack of better terminology, we will refer to these two special bicliques as the \emph{extremal} bicliques.

Our interest in maximal bicliques is justified by the following simple claim which will allow us to carry out our inductive step. 
\begin{lemma}\label{inductionstep}
Let $(S_L,S_R)$ be a maximal biclique which is not extremal, i.e, $S_L\neq F_L$, $S_R\neq F_R$. We have that $S_L\neq V_L$, $S_R\neq V_R$, $\Nu(S_L)=V_R$ and $\Nu(S_R)=V_L$.  Let $H_1=H[S_L\cup V_R]$ and $H_2=H[V_L\cup S_R]$. Then, for $i\in\{1,2\}$, we have that $H_i$ is  full 
but not trivial
and further satisfies $|V(H_i)|<|V(H)|$.
\end{lemma}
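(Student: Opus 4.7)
The whole argument rests on one preliminary observation: maximality of the biclique $(S_L,S_R)$ forces $F_L\subseteq S_L$ and $F_R\subseteq S_R$. Indeed, for any $u\in F_L$, the vertex $u$ is adjacent to every vertex of $V_R$ and hence to every vertex of $S_R$, so $(S_L\cup\{u\},S_R)$ is a biclique; maximality of $(S_L,S_R)$ then forces $u\in S_L$. The symmetric argument yields $F_R\subseteq S_R$. Note also that $F_L,F_R\neq\emptyset$ since $H$ is full.

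From this I would derive Steps 1 and 2 quickly. For Step 1, suppose for contradiction that $S_L=V_L$. Then every vertex of $S_R$ is adjacent to every vertex of $V_L$, i.e.\ $S_R\subseteq F_R$; on the other hand $(V_L,F_R)$ is itself a biclique extending $(V_L,S_R)$, so maximality forces $S_R=F_R$, making $(S_L,S_R)$ equal to the extremal biclique $(V_L,F_R)$, a contradiction. The case $S_R=V_R$ is symmetric. Step 2 is then immediate: picking any $u\in F_L\subseteq S_L$, the vertex $u$ is adjacent to all of $V_R$, so $V_R\supseteq \Nu(S_L)\supseteq \Nu(\{u\})=V_R$, giving $\Nu(S_L)=V_R$. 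Symmetrically $\Nu(S_R)=V_L$.

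For Step 3 I would verify the three properties of $H_1=H[S_L\cup V_R]$; the argument for $H_2$ is symmetric. The strict vertex inequality $|V(H_1)|=|S_L|+|V_R|<|V_L|+|V_R|=|V(H)|$ follows from Step 1. Fullness: any $u\in F_L\subseteq S_L$ is still adjacent in $H_1$ to every vertex of $V_R$, giving a full vertex on the $L$-side of $H_1$; any $v\in F_R\subseteq S_R\subseteq V_R$ is, by the biclique property of $(S_L,S_R)$, adjacent to every vertex of $S_L$, giving a full vertex on the $R$-side. Non-triviality: if $H_1$ were complete bipartite then every vertex of $S_L$ would be adjacent to every vertex of $V_R$, i.e.\ $S_L\subseteq F_L$; combined with $F_L\subseteq S_L$, this gives $S_L=F_L$, contradicting the non-extremality hypothesis $S_L\neq F_L$.

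There is no genuine obstacle in this lemma — it is the deterministic structural step that enables the later induction on $|V(H)|$. The one point worth highlighting is that everything hinges on the preliminary inclusion $F_L\subseteq S_L$ (and its mirror image): this is what both transports a full vertex of $H$ into the smaller induced subgraph $H_1$ and prevents $H_1$ from collapsing into a trivial complete bipartite graph.
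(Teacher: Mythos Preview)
Your proof is correct and follows essentially the same approach as the paper's own proof. Both arguments pivot on the key observation that maximality forces $F_L\subseteq S_L$ and $F_R\subseteq S_R$; you simply spell out in more detail what the paper leaves as ``easy to see,'' and your derivation of $S_L\neq V_L$ via the contradiction $S_R=F_R$ matches the paper's line exactly.
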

\begin{proof}
Since $(S_L,S_R)$ is a maximal biclique, we have $F_L\subseteq S_L$ and $F_R\subseteq S_R$. It follows that $\Nu(S_L)=V_R$ and $\Nu(S_R)=V_L$. It is easy to see that both $H_1,H_2$ are connected, bipartite and full. We also have that $|V(H_1)|, |V(H_2)|<|V(H)|$. If, for example, $|V(H_1)|=|V(H)|$ we must have that $S_L=V_L$. Since $(S_L,S_R)$ is a maximal biclique, this would imply that $S_R=F_R$. Similarly, we obtain that $H_1$, $H_2$ are not trivial. If, for example, $H_1$ is trivial, then we would have that $S_L=F_L$.
\end{proof}

\subsection{The basic gadget}\label{sec:basicgadget}
In this section, we discuss a gadget that is used in \cite{HCOL}. While the gadget in \cite{HCOL} will not work for us, it will nevertheless help to motivate our later selection of a more elaborate gadget for our needs. The gadget used in \cite{HCOL} is a complete bipartite graph $K_{a,b}$ with $a$ vertices on the left and $b$ vertices on the right. The  integers~$a$ and~$b$ should be thought of as sufficiently large numbers  which may depend on the size of the input  to $\FixCOL{H}$. Roughly speaking, we will be interested in the colours appearing on the left and right of $K_{a,b}$ in a typical colour-preserving homomorphism from $K_{a,b}$ to $H$. 

To make this precise, for a colour-preserving homomorphism $h:K_{a,b}\rightarrow H$, the \emph{phase} of $h$ is the pair $\big(h(L(K_{a,b})), h(R(K_{a,b}))\big)$, i.e., the subsets of $V_L$ and $V_R$ appearing on the left and right of $K_{a,b}$ under the homomorphism $h$, respectively. Since $K_{a,b}$ is a complete bipartite graph, we have that a phase is a biclique of $H$, i.e., an element of $\Cc$.  
Let $(S_L,S_R)\in \Cc$. For convenience, we will refer to the total number of colour-preserving homomorphisms whose phase equals $(S_L,S_R)$ as the contribution of the phase/biclique $(S_L,S_R)$ to the gadget. Our induction step crucially depends on analysing the \emph{dominant phases} of the gadget, i.e., the phases with the largest contribution.

 It is not hard to see that the contribution of a phase/biclique $(S_L,S_R)$ to the gadget $K_{a,b}$ 
is roughly equal to\footnote{The quantity $|S_L|^{a}$
approximates the number of 
surjective mappings from $L(K_{a,b})$ to $S_L$.
Similarly, $|S_R|^{b}$ approximates the number
of surjective mappings from $R(K_{a,b})$ to $S_R$. 
We will use fairly standard techniques to make sure that the relative error of the approximation (see the relevant Lemma~\ref{lem:surj}) is sufficiently small.}
\[|S_L|^{a}|S_R|^b.\]
 Thus, the dominant phases are determined by the ratio $a/b$. Rather than restricting ourselves to integers~$a$ and~$b$ it will be convenient to consider positive \emph{real} numbers $\alpha,\beta>0$ and the corresponding phases with dominant contribution.
\begin{definition}[The set of dominating bicliques $\Cc_{\alpha,\beta}$]\label{def:ccab}
Let $\alpha$ and $\beta$ be positive real numbers. Define $\Cc_{\alpha,\beta}$  to be the following subset of $\Cc$:
\[\Cc_{\alpha,\beta}:=\big\{(S_L,S_R)\in \Cc:\, (S_L,S_R)=\arg\max_{(S_{L}',S_{R}')\in \Cc} |S_{L}'|^{\alpha } |S_{R}'|^{\beta }\big\}.\]
Note that for positive $\alpha,\beta$ the bicliques in $\Cc_{\alpha,\beta}$ are in fact maximal. 
\end{definition}
For the purpose of the following discussion and to avoid delving into (at this point) unnecessary technical details, we will assume 
for now that $\alpha$ and $\beta$ are rationals. In particular, there exists an integer $Q$ so that $a=Q\alpha$  and $b=Q\beta$ are integers and $\alpha/\beta=a/b$.\footnote{For irrational $\alpha$ and $\beta$, we will later use Dirichlet's approximation theorem, cf. Lemma~\ref{lem:dirichlet}, to obtain integers $a$ and $b$ such that the ratio $\alpha/\beta$ is approximately equal to $a/b$ (to any desired polynomial precision).}

\subsection{A reduction scheme}\label{sec:qw}
The structure of our reduction scheme  expands on the work of~\cite{HCOL}. The following are implicit in \cite{HCOL}:
\begin{enumerate}
\item\label{it:extremal}
if  $|\Cc_{\alpha,\beta}|=2$ and $\Cc_{\alpha,\beta}$ consists only of the extremal bicliques 
then $\BIS$ reduces to $\FixCOL{H}$. 
\item\label{it:nonextremal} if $|\Cc_{\alpha,\beta}|=1$ and the unique dominating biclique in $\Cc_{\alpha,\beta}$ is \emph{not} an extremal biclique,  then $\FixCOL{H'}$ reduces to $\FixCOL{H}$ for some non-trivial 
proper
subgraph $H'$ of $H$.

\end{enumerate} 
Unfortunately, it is not hard to come up with examples of graphs $H$ such that, for every choice of $\alpha$ and $\beta$, we do not fall into one of the cases \ref{it:extremal} and \ref{it:nonextremal}, see Figure~\ref{fig:coexistence} for an explicit such example.
\begin{figure}[h]
\begin{center}
\scalebox{0.8}[0.8]{\input{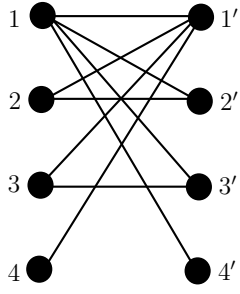}}
\end{center}
\caption{A graph $H$ where for any choice of the parameters $\alpha,\beta$, the set of dominating bicliques $\Cc_{\alpha,\beta}$ consists either of a single extremal biclique or of the two extremal and two non-extremal bicliques. Namely, when $\alpha>\beta$ (resp. $\alpha<\beta$) we have that the unique element of $\Cc_{\alpha,\beta}$ is the extremal biclique $(\{1,2,3,4\},\{1'\})$ (resp. $(\{1\},\{1',2',3',4'\}))$; when $\alpha=\beta$ we have that $\Cc_{\alpha,\beta}$ consists of the bicliques $(\{1\},\{1',2',3',4'\})$,  $(\{1,2,3,4\},\{1'\})$, $(\{1,2\},\{1',2'\})$ and $(\{1,3\},\{1',3'\})$.}\label{fig:coexistence}
\end{figure}
It will nevertheless be useful to see how the gadget $K_{a,b}$ is used, so we give a quick overview of the reductions which yield Items~\ref{it:extremal} and~\ref{it:nonextremal}  (since these are only implicit in \cite{HCOL}). 

For Item~\ref{it:extremal}, let $G'$ be a (2-coloured) bipartite graph which is an input to $\BIS$. To construct an instance of $\FixCOL{H}$, replace each vertex of $G'$ with a distinct copy of $K_{a,b}$. Further, for each edge $(u,v)$ in $G'$ with $u\in L(G')$ and $v\in R(G')$ add all edges between  the right part of $u$'s copy of $K_{a,b}$ and the left part of $v$'s copy of $K_{a,b}$. In the final graph, say  $G$, by scaling $a,b$ to be much larger than the size of $G'$ (while keeping fixed the ratio $a/b=\alpha/\beta$), the phases of the gadgets $K_{a,b}$ in ``almost all" colour-preserving homomorphisms from $G$ to $H$ are elements of $\Cc_{\alpha,\beta}$ and in particular are extremal bicliques. 
It then remains to observe that independent sets of $G'$ are encoded by those homomorphisms where 
\begin{itemize}
\item 
the phase of a gadget corresponding to a vertex in $L(G')$ is  
$(F_L,V_R)$ if the vertex is in the independent set and $(V_L,F_R)$ otherwise, and
\item the  phase of a gadget corresponding to a vertex in $R(G')$ is $(V_L,F_R)$
if the vertex is in the independent set and $(F_L,V_R)$ otherwise.
\end{itemize}

For Item~\ref{it:nonextremal}, the use of the gadget $K_{a,b}$  is  depicted in Figure~\ref{fig:gadgetuse}. Namely, for a 2-coloured bipartite graph $G'$, consider the graph $G$ obtained by adding all edges between the left part of $G'$ and the right part of $K_{a,b}$. We will typically denote the graph obtained by this construction as $K_{a,b}(G')$. For the following discussion, let us  set $G:=K_{a,b}(G')$. In the setting of Item~\ref{it:nonextremal}, we have that $\Cc_{\alpha,\beta}$ consists of a unique maximal biclique $(S_L,S_R)$ which is not extremal. Once again, by making $a,b$ large relative to the size of $G'$ (while maintaining the ratio $a/b=\alpha/\beta$), the phase of the gadget $K_{a,b}$ in ``almost all" homomorphisms $h$ of the graph $G$ will be the  dominating biclique $(S_L,S_R)$.  Let us consider such a homomorphism $h$ whose  restriction on $K_{a,b}$ has as its phase 
the maximal biclique $(S_L,S_R)$. The edges between $R(K_{a,b})$ and $L(G')$ enforce that $h(L(G'))\subseteq N_{\cap}(S_R)=S_L$, where in the latter equality we used that $(S_L,S_R)$ is a maximal biclique. It follows that $h(R(G'))\subseteq N_{\cup}(S_L)=V_R$ (see Lemma~\ref{inductionstep} for the latter equality). It follows that the restriction of $h$ on $G'$ is an $H_1$-colouring of the graph $G'$, where $H_1=H[S_L\cup V_R]$ is the same graph as in Lemma~\ref{inductionstep}. Viewing $G'$ as an instance of $\COL{H_1}$ and $G$ as an instance of $\COL{H}$, one obtains $\COL{H_1}\leq_{\AP}\COL{H}$. Since $H_1$ is full, not trivial and has fewer vertices than $H$, one can use the inductive hypothesis to conclude $\BIS\leq_{\AP}\COL{H_1}$.
\begin{figure}[h]
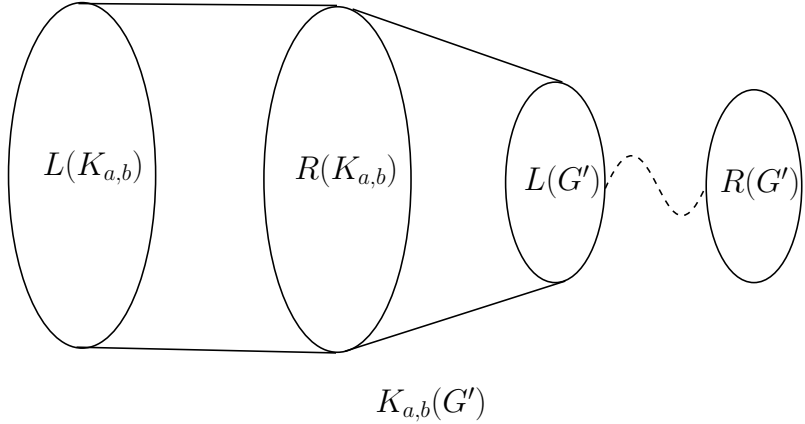

\begin{center}
\scalebox{0.6}[0.6]{
\begin{pgfpicture}{0bp}{0bp}{499.199997bp}{262.062507bp}
\begin{pgfscope}
\pgftransformcm{1.0}{0.0}{0.0}{1.0}{\pgfpoint{230.899997bp}{4.162501bp}}
\pgftext[left,base]{\sffamily\mdseries\upshape\huge
\color[rgb]{0.0,0.0,0.0} $K_{a,b}(G')$}
\end{pgfscope}
\begin{pgfscope}
\pgftransformcm{1.0}{0.0}{0.0}{1.0}{\pgfpoint{22.099997bp}{152.962501bp}}
\pgftext[left,base]{\sffamily\mdseries\upshape\huge
\color[rgb]{0.0,0.0,0.0}$L(K_{a,b})$}
\end{pgfscope}
\begin{pgfscope}
\pgfsetlinewidth{1.0bp}
\pgfsetrectcap 
\pgfsetmiterjoin \pgfsetmiterlimit{10.0}
\pgfpathmoveto{\pgfpoint{375.099997bp}{148.562501bp}}
\pgfpathcurveto{\pgfpoint{375.099997bp}{113.768564bp}}{\pgfpoint{361.131277bp}{85.562501bp}}{\pgfpoint{343.9bp}{85.562501bp}}
\pgfpathcurveto{\pgfpoint{326.668715bp}{85.562501bp}}{\pgfpoint{312.699995bp}{113.768564bp}}{\pgfpoint{312.699995bp}{148.562501bp}}
\pgfpathcurveto{\pgfpoint{312.699995bp}{183.35644bp}}{\pgfpoint{326.668715bp}{211.562501bp}}{\pgfpoint{343.9bp}{211.562501bp}}
\pgfpathcurveto{\pgfpoint{361.131277bp}{211.562501bp}}{\pgfpoint{375.099997bp}{183.35644bp}}{\pgfpoint{375.099997bp}{148.562501bp}}
\pgfclosepath
\color[rgb]{0.0,0.0,0.0}
\pgfusepath{stroke}
\end{pgfscope}
\begin{pgfscope}
\pgfsetlinewidth{1.0bp}
\pgfsetrectcap 
\pgfsetmiterjoin \pgfsetmiterlimit{10.0}
\pgfpathmoveto{\pgfpoint{498.699991bp}{146.162501bp}}
\pgfpathcurveto{\pgfpoint{498.699991bp}{112.694044bp}}{\pgfpoint{485.268533bp}{85.562498bp}}{\pgfpoint{468.699991bp}{85.562498bp}}
\pgfpathcurveto{\pgfpoint{452.131448bp}{85.562498bp}}{\pgfpoint{438.699991bp}{112.694044bp}}{\pgfpoint{438.699991bp}{146.162501bp}}
\pgfpathcurveto{\pgfpoint{438.699991bp}{179.630957bp}}{\pgfpoint{452.131448bp}{206.762501bp}}{\pgfpoint{468.699991bp}{206.762501bp}}
\pgfpathcurveto{\pgfpoint{485.268533bp}{206.762501bp}}{\pgfpoint{498.699991bp}{179.630957bp}}{\pgfpoint{498.699991bp}{146.162501bp}}
\pgfclosepath
\color[rgb]{0.0,0.0,0.0}
\pgfusepath{stroke}
\end{pgfscope}
\begin{pgfscope}
\pgftransformcm{1.0}{0.0}{0.0}{1.0}{\pgfpoint{180.099997bp}{150.962501bp}}
\pgftext[left,base]{\sffamily\mdseries\upshape\huge
\color[rgb]{0.0,0.0,0.0}$R(K_{a,b})$}
\end{pgfscope}
\begin{pgfscope}
\pgfsetlinewidth{1.0bp}
\pgfsetrectcap 
\pgfsetmiterjoin \pgfsetmiterlimit{10.0}
\pgfsetdash{{3.0bp}{5.0bp}}{0.0bp}
\pgfpathmoveto{\pgfpoint{375.499997bp}{145.162501bp}}
\pgfpathcurveto{\pgfpoint{405.299997bp}{212.962501bp}}{\pgfpoint{410.499997bp}{82.762501bp}}{\pgfpoint{437.899997bp}{145.162501bp}}
\color[rgb]{0.0,0.0,0.0}
\pgfusepath{stroke}
\end{pgfscope}
\begin{pgfscope}
\pgftransformcm{1.0}{0.0}{0.0}{1.0}{\pgfpoint{324.099997bp}{143.762501bp}}
\pgftext[left,base]{\sffamily\mdseries\upshape\huge
\color[rgb]{0.0,0.0,0.0}$L(G')$}
\end{pgfscope}
\begin{pgfscope}
\pgftransformcm{1.0}{0.0}{0.0}{1.0}{\pgfpoint{447.299997bp}{142.362501bp}}
\pgftext[left,base]{\sffamily\mdseries\upshape\huge
\color[rgb]{0.0,0.0,0.0}$R(G')$}
\end{pgfscope}
\begin{pgfscope}
\pgfsetlinewidth{1.0bp}
\pgfsetrectcap 
\pgfsetmiterjoin \pgfsetmiterlimit{10.0}
\pgfpathmoveto{\pgfpoint{92.899991bp}{152.962501bp}}
\pgfpathcurveto{\pgfpoint{92.899991bp}{92.984382bp}}{\pgfpoint{72.215558bp}{44.362495bp}}{\pgfpoint{46.699994bp}{44.362495bp}}
\pgfpathcurveto{\pgfpoint{21.184445bp}{44.362495bp}}{\pgfpoint{0.499997bp}{92.984382bp}}{\pgfpoint{0.499997bp}{152.962501bp}}
\pgfpathcurveto{\pgfpoint{0.499997bp}{212.940626bp}}{\pgfpoint{21.184445bp}{261.562502bp}}{\pgfpoint{46.699994bp}{261.562502bp}}
\pgfpathcurveto{\pgfpoint{72.215558bp}{261.562502bp}}{\pgfpoint{92.899991bp}{212.940626bp}}{\pgfpoint{92.899991bp}{152.962501bp}}
\pgfclosepath
\color[rgb]{0.0,0.0,0.0}
\pgfusepath{stroke}
\end{pgfscope}
\begin{pgfscope}
\pgfsetlinewidth{1.0bp}
\pgfsetrectcap 
\pgfsetmiterjoin \pgfsetmiterlimit{10.0}
\pgfpathmoveto{\pgfpoint{253.299991bp}{150.362501bp}}
\pgfpathcurveto{\pgfpoint{253.299991bp}{90.384382bp}}{\pgfpoint{232.615558bp}{41.762495bp}}{\pgfpoint{207.099994bp}{41.762495bp}}
\pgfpathcurveto{\pgfpoint{181.584445bp}{41.762495bp}}{\pgfpoint{160.899997bp}{90.384382bp}}{\pgfpoint{160.899997bp}{150.362501bp}}
\pgfpathcurveto{\pgfpoint{160.899997bp}{210.340626bp}}{\pgfpoint{181.584445bp}{258.962502bp}}{\pgfpoint{207.099994bp}{258.962502bp}}
\pgfpathcurveto{\pgfpoint{232.615558bp}{258.962502bp}}{\pgfpoint{253.299991bp}{210.340626bp}}{\pgfpoint{253.299991bp}{150.362501bp}}
\pgfclosepath
\color[rgb]{0.0,0.0,0.0}
\pgfusepath{stroke}
\end{pgfscope}
\begin{pgfscope}
\pgfsetlinewidth{1.0bp}
\pgfsetrectcap 
\pgfsetmiterjoin \pgfsetmiterlimit{10.0}
\pgfpathmoveto{\pgfpoint{206.299997bp}{259.762501bp}}
\pgfpathlineto{\pgfpoint{44.899997bp}{260.962501bp}}
\color[rgb]{0.0,0.0,0.0}
\pgfusepath{stroke}
\end{pgfscope}
\begin{pgfscope}
\pgfsetlinewidth{1.0bp}
\pgfsetrectcap 
\pgfsetmiterjoin \pgfsetmiterlimit{10.0}
\pgfpathmoveto{\pgfpoint{205.699997bp}{41.362501bp}}
\pgfpathlineto{\pgfpoint{43.699997bp}{44.962501bp}}
\color[rgb]{0.0,0.0,0.0}
\pgfusepath{stroke}
\end{pgfscope}
\begin{pgfscope}
\pgfsetlinewidth{1.0bp}
\pgfsetrectcap 
\pgfsetmiterjoin \pgfsetmiterlimit{10.0}
\pgfpathmoveto{\pgfpoint{217.099997bp}{257.362501bp}}
\pgfpathlineto{\pgfpoint{347.899997bp}{211.762501bp}}
\color[rgb]{0.0,0.0,0.0}
\pgfusepath{stroke}
\end{pgfscope}
\begin{pgfscope}
\pgfsetlinewidth{1.0bp}
\pgfsetrectcap 
\pgfsetmiterjoin \pgfsetmiterlimit{10.0}
\pgfpathmoveto{\pgfpoint{212.299997bp}{42.562501bp}}
\pgfpathlineto{\pgfpoint{349.699997bp}{86.362501bp}}
\color[rgb]{0.0,0.0,0.0}
\pgfusepath{stroke}
\end{pgfscope}
\end{pgfpicture}}
\end{center}
\caption{The reduction for Item~\ref{it:nonextremal}}\label{fig:gadgetuse}
\end{figure}

We remark here that using the non-constructive approach of Lemma~\ref{lem:lovasztwo} and along the lines we described in Section~\ref{sec:multiple}, we will be able to remove the restriction that $|\Cc_{\alpha,\beta}|=1$ as long as $\Cc_{\alpha,\beta}$ does not include an extremal biclique of $H$.

In view of Items~\ref{it:extremal} and~\ref{it:nonextremal}, the scheme pursued in \cite{HCOL} (and which we will also follow to a certain extent) is to fix $0<\alpha,\beta<1$  such that 
\begin{equation}\label{eq:defa0b0}
|F_L|^{\alpha}|V_R|^{\beta}=|V_L|^{\alpha}|F_R|^{\beta},
\end{equation}
so that the contribution of the extremal bicliques $(F_L,V_R)$ and $(V_L,F_R)$ to the gadget $K_{a,b}$ is equal. This has the beneficial effect that $\Cc_{\alpha,\beta}$ includes either none or both of the extremal bicliques. The only very bad scenario remaining is when $\Cc_{\alpha,\beta}$ includes both the extremal bicliques as well as (at least) one non-extremal biclique, since then not only $|\Cc_{\alpha,\beta}|>1$ (which is already a problem for the approach implicit in \cite{HCOL}) but also the coexistence of extremal and non-extremal bicliques in $\Cc_{\alpha,\beta}$ impedes the non-constructive approach of Lemma~\ref{lem:lovasztwo}. While for the sampling problem studied in \cite{HCOL} the coexistence of extremal and non-extremal bicliques was recoverable by  ``gluing" the reductions together (as we explained in a simplified setting in Section~\ref{sec:multiple}), this is no longer the case in the counting setting. More precisely, for the counting problem we will have to understand for which graphs $H$ the coexistence of extremal and non-extremal bicliques occurs and  consider more elaborate gadgets in the reduction to overcome this coexistence.  

\subsection{A non-constructive gadget}
The key idea is to introduce another  non-constructive  argument (in addition to the approach suggested by Lemma~\ref{lem:lovasztwo}) by viewing the construction of $K_{a,b}(G')$, cf. Figure~\ref{fig:gadgetuse}, as a gadget parameterised by the graph $G'$. To emphasize that $G'$ is no longer an input graph, let us switch notation from $G'$ to $\Gamma$, i.e., $\Gamma$ is a 2-coloured graph and $K_{a,b}(\Gamma)$ is the graph in Figure~\ref{fig:gadgetuse} where $G'$ is replaced by the graph $\Gamma$. We will choose $a,b$ sufficiently large so that the graph $\Gamma$ is ``small" relative to the graph $K_{a,b}$, so its effect on the dominant phases will be of second order. We stress here that we will never try to specify $\Gamma$ explicitly; all we need is the existence of a helpful $\Gamma$. In the following, we expand on this point and set up some relevant quantities for the proof.

As for the basic gadget, we define the phase of a colour-preserving homomorphism $h: K_{a,b}(\Gamma)\rightarrow H$ as the pair $\big(h(L(K_{a,b})), h(R(K_{a,b}))\big)$. Note that the phase of $h$ is determined by its restriction on $K_{a,b}$ (but not on $\Gamma$) and, thus, as before the phases are supported on bicliques of $H$. We once again set $a=Q\alpha,b=Q\beta$ and let $Q$ be a large integer relative to the size of $\Gamma$. 
(Again, we will later use Dirichilet's approximation theorem for the case
in which~$\alpha$ and~$\beta$ are not rational, but we do not worry about this here.)
With this setup, the phases with the dominant contribution in $K_{a,b}(\Gamma)$ are related to those in $K_{a,b}$ and in particular 
we will  make $a$ and $b$ 
sufficiently large to ensure that they 
are a subset of $\Cc_{\alpha,\beta}$. Note however that the graph $\Gamma$ has the effect of reweighting each phase contribution in $K_{a,b}(\Gamma)$ relative to the one in $K_{a,b}$. 

To understand the reweighted contribution, let us consider a homomorphism $h: K_{a,b}(\Gamma)\rightarrow H$ whose phase is a biclique $(S_L,S_R)\in \Cc$. For such $h$, the edges between $R(K_{a,b})$ and $L(\Gamma)$ enforce that $h(L(\Gamma))\subseteq N_{\j}(S_R)$ and thus\footnote{\label{foot:technical}Technically, to conclude that $h(R(\Gamma))\subseteq \Nu(N_{\j}(S_R))$ from $h(L(\Gamma))\subseteq N_{\j}(S_R)$ we need that every vertex in $R(\Gamma)$ has at least one neighbour in $L(\Gamma)$. In the upcoming Lemmas~\ref{lem:slsr} and~\ref{lem:flvr}, we address this technical point by requiring that $\Gamma$ has no isolated vertices in $R(\Gamma)$.} we obtain that $h(R(\Gamma))\subseteq \Nu(N_{\j}(S_R))$; it follows that the restriction of $h$ to $\Gamma$ is supported by vertices in $H[N_{\j}(S_R)\cup \Nu(N_{\j}(S_R))]$. It is useful to see what happens when the phase $(S_L,S_R)$ of the homomorphism is a maximal biclique (say in $\Cc_{\alpha,\beta}$): then, $N_{\j}(S_R)=S_L$ and $\Nu(S_L)=V_R$ (from Lemma~\ref{inductionstep}). Thus, in the case where the phase of $h$ corresponds to a maximal biclique, the restriction of $h$ to $\Gamma$ is supported by vertices in $H[S_L\cup V_R]$.

It will be useful to distill the following definitions from the above remarks.
\begin{definition}[The graph $H_{S_L,S_R}$]\label{def:specialgraph}
Let $(S_L,S_R)$ be a biclique in $H$, i.e., $(S_L,S_R)\in \Cc$. Define $H_{S_L,S_R}$ to be the (bipartite) graph \[H[N_{\j}(S_R)\cup \Nu(N_{\j}(S_R))],\] whose 2-colouring is naturally induced by the 2-colouring of $H$. Note that when $(S_L,S_R)$ is a maximal biclique, we have that $H_{S_L,S_R}=H[S_L\cup V_R]$.
\end{definition}

\begin{definition}[The parameter $\zeta(S_L,S_R,\Gamma)$]\label{def:zeta}
Let $\Gamma$ be a 2-coloured graph and let $(S_L,S_R)$ be a biclique in $H$, i.e., $(S_L,S_R)\in \Cc$. We will denote
\begin{equation}\label{def:zetaslsr}
\zeta(S_L,S_R,\Gamma):= \FixCOL{H_{S_L,S_R}}(\Gamma),
\end{equation}
where $H_{S_L,S_R}$ is as in Definition~\ref{def:specialgraph}.
\end{definition}

Utilising the above definitions and the remarks earlier, we obtain that the contribution of the biclique $(S_L,S_R)$ to the gadget $K_{a,b}(\Gamma)$ is roughly equal to
\[\zeta(S_L,S_R,\Gamma) |S_L|^a |S_R|^b.\]

 Similarly to the outline for the simple gadget, the guiding principle will be to choose  $a$, $b$,
 and~$\Gamma$ appropriately so that the dominant phases are supported either on (both of) the extremal bicliques or on the non-extremal bicliques (but not a combination of both). Roughly, the choice of $a$ and $b$ will restrict the dominant phases in $K_{a,b}(\Gamma)$ to be a subset of  $\Cc_{\alpha,\beta}$, while the graph $\Gamma$ will pick out either the extremal bicliques or a set of non-extremal bicliques. (In the latter case, we will further need to ensure that exactly one non-extremal biclique has significant contribution to the gadget. To do this, we will utilise Lemma~\ref{lem:lovasztwo}.) When there is no such graph $\Gamma$, we will 
 use an alternative method to find a (2-coloured) 
 proper
 subgraph $H'$ of $H$ which is also full 
 but not trivial
 such that $\FixCOL{H'}$ reduces to $\FixCOL{H}$. In other words, the non-existence of a ``helpful" gadget for any choice of $\Gamma$  will  establish a useful property for $\FixCOL{H}$ on an arbitrary input. 

To do this, we will again equalise the contribution of the extremal bicliques in the gadget $K_{a,b}(\Gamma)$, so it will be useful to explicitly write out the definition \eqref{def:zetaslsr} for the extremal bicliques. 
\begin{definition}[The parameters $\zeta^{\ex}_1(\Gamma)$, $\zeta^{\ex}_2(\Gamma)$]\label{def:zetaextremal}
Let $\Gamma$ be a 2-coloured graph. Let $\zeta^{\ex}_1(\Gamma)$, $\zeta^{\ex}_2(\Gamma)$ be the values of $\zeta(S_L,S_R,\Gamma)$ when $(S_L,S_R)$ is the extremal biclique $(F_L,V_R),(V_L,F_R)$ respectively.
\begin{equation}\label{eq:zetaextremal}
\zeta^{\ex}_1(\Gamma):=\zeta(F_L,V_R,\Gamma)=|F_L|^{|L(\Gamma)|}|V_R|^{|R(\Gamma)|} \mbox{ and } \zeta^{\ex}_2(\Gamma):=\zeta(V_L,F_R,\Gamma)= \FixCOL{H}(\Gamma).
\end{equation}
\end{definition}
(To see the second equality in the definition of $\zeta^{\ex}_1(\Gamma)$, note that $H[F_L\cup V_R]$ is a complete bipartite graph; for the second equality in the definition of $\zeta^{\ex}_2(\Gamma)$, note that $H[V_L\cup V_R]=H$. We remark here that the asymmetry 
in the definitions of $\zeta^{\ex}_1(\Gamma)$ and $\zeta^{\ex}_2(\Gamma)$ is caused by the choice of connecting the right part of $K_{a,b}$ to the left part of $\Gamma$.) 

To equalise the contribution of the extremal bicliques in the final gadget we will need to slightly perturb our selection of $a,b$. That is, instead of setting $a=Q\alpha$ and $b=Q\beta$, we will choose $\hat{a}=Q\alpha$ and $\hat{b}=Q\beta+\gamma$ for some appropriate choice of $\gamma$ (note that we only perturb the size of $b$). Now, for a phase $(S_L,S_R)$ the multiplicative correction to its contribution in $K_{\hat{a},\hat{b}}(\Gamma)$ relative to the one in $K_{a,b}$ is given by 
\[\zeta(S_L,S_R,\Gamma) |S_R|^{\gamma}.\]
Thus, to equalise the contribution of the extremal bicliques in $K_{\hat{a},\hat{b}}(\Gamma)$, we will need the following parameter $\gamma=\gamma(\Gamma)$ (we will drop the dependence of $\gamma$ on the graph $\Gamma$ when there is no danger of confusion), which is formally defined below.

\begin{definition}[The parameter $\gamma(\Gamma)$]\label{def:defgamma}
Let $\Gamma$ be a 2-coloured graph. Define $\gamma(\Gamma)$ to be the unique (real) solution to the following equation:
\begin{equation}\label{eq:defgamma}
\zeta^{\ex}_1(\Gamma)\, |V_R|^{\gamma(\Gamma)}=\zeta^{\ex}_2(\Gamma)\, |F_R|^{\gamma(\Gamma)}.
\end{equation}
\end{definition}
\begin{remark}
Note that $F_R\subset V_R$ so $\gamma(\Gamma)$ is well-defined for all $\Gamma$. Further, it is not hard to see that $\zeta^{\ex}_1(\Gamma)\leq \zeta^{\ex}_2(\Gamma)$ (since $H_{F_L,V_R}$ is a subgraph of $H_{V_L,F_R}$),  so $\gamma(\Gamma)\geq 0$ for all $\Gamma$.
\end{remark}

With these definitions, for a 2-coloured graph $\Gamma$, we define the following subset of $\Cc_{\alpha,\beta}$ (while the definition makes sense for general values of $\alpha,\beta$, we will typically assume that $\alpha,\beta$ are defined by \eqref{eq:defa0b0}):

\begin{definition}[The set of dominating bicliques $\Cc_{\alpha,\beta}^{\Gamma}$]\label{def:ccabgamma}
Let $0<\alpha,\beta<1$  and $\Gamma$ be a 2-coloured graph. Define $\Cc^{\Gamma}_{\alpha,\beta}$  to be the following subset of $\Cc_{\alpha,\beta}$:
\begin{equation}\label{eq:ccgammacompl}
\Cc^{\Gamma}_{\alpha,\beta}:=\left\{(S_L,S_R)\in \Cc_{\alpha,\beta}\mathrel{}\middle|\mathrel{} (S_L,S_R)=
   \arg\max_{(S'_L,S_R')\in \Cc_{\alpha,\beta}}\zeta(S_L',S_R',\Gamma)|S_R'|^{\gamma(\Gamma)}\right\}.
\end{equation}
Note that for all $\alpha,\beta>0$ and all 2-coloured graphs $\Gamma$, the elements of $\Cc^{\Gamma}_{\alpha,\beta}$ are maximal bicliques of $H$.
\end{definition}

It is useful to see how the above definitions degenerate in the case where $\Gamma$ is the empty graph. Then, we have that $\zeta(S_L,S_R,\Gamma)=1$ for all $(S_L,S_R)\in\Cc$, and in particular we have that $\zeta^{\ex}_1(\Gamma)=\zeta^{\ex}_2(\Gamma)=1$. It follows from \eqref{eq:defgamma} that $\gamma(\Gamma)=0$. It follows that when $\Gamma$ is the empty graph, $\Cc^{\Gamma}_{\alpha,\beta}$ is identical to $\Cc_{\alpha,\beta}$. 

In the next section, we give an outline for the cases which arise in the proof of Lemma~\ref{lem:main} and give examples for each case. 

\subsection{The cases in the proof of Lemma~\ref{lem:main} (overview with examples)}\label{sec:examples}
Consider the set of maximal bicliques $\Cc_{\alpha,\beta}$ where $0<\alpha,\beta<1$ are as in \eqref{eq:defa0b0}. For the discussion in this section we may assume that $\Cc_{\alpha,\beta}$ includes both extremal bicliques and at least one non-extremal biclique. Let 
\[\big(S^{(1)}_L,S^{(1)}_R\big),\hdots, \big(S^{(t)}_L,S^{(t)}_R\big)\]
be an enumeration of the non-extremal bicliques in $\Cc_{\alpha,\beta}$. Recall that all elements of $\Cc_{\alpha,\beta}$ are maximal bicliques of $H$. For convenience, in this section let $H_i$ denote the subgraph $H[S_L^{(i)}\cup V_R]$ (this corresponds to the graph $H_{S_L^{(i)}, S_R^{(i)}}$ in Definition~\ref{def:specialgraph}) and set $\zeta_i(\Gamma)=\zeta(S_L^{(i)},S_R^{(i)},\Gamma)=\FixCOL{H_i}(\Gamma)$ (cf. Definition~\ref{def:zeta}). For the extremal bicliques we will instead use the notation $H^{\ex}_1, H^{\ex}_2$ to denote the graphs $H_{F_L,V_R},H_{V_L,F_R}$ respectively. Note that $H^{\ex}_1$ is a complete bipartite graph with bipartition $\{F_L,V_R\}$ while $H^{\ex}_2$ is $H$ itself.

There are three complementary cases to consider for the proof of Lemma~\ref{lem:main}. Recall by construction that for every 2-coloured graph $\Gamma$, the extremal bicliques have equal contribution in the graph $K_{\hat{a},\hat{b}}(\Gamma)$ (where $\hat{a}=Qa$, $\hat{b}=Qb+\gamma(\Gamma)$ for some large integer $Q$).
The three cases are as follows. 
\begin{description} 
\item[Case 1:] There exists $i\in [t]$ and a 2-coloured graph $\Gamma$ such that the biclique $\big(S^{(i)}_L,S^{(i)}_R\big)$ dominates over the extremal bicliques in the gadget $K_{\hat{a},\hat{b}}(\Gamma)$. 
\item [Case 2:] There exists $i\in [t]$ such that for every 2-coloured graph $\Gamma$ the biclique $\big(S^{(i)}_L,S^{(i)}_R\big)$ has the same contribution as the extremal bicliques in the gadget $K_{\hat{a},\hat{b}}(\Gamma)$. 
\item [Case 3:]  For all $i\in [t]$ and every 2-coloured graph $\Gamma$, the contribution of the biclique $\big(S^{(i)}_L,S^{(i)}_R\big)$ is less 
than
or equal to the contribution of the extremal bicliques in the gadget $K_{\hat{a},\hat{b}}(\Gamma)$. Further, for all $i\in [t]$ there exists a 2-coloured graph $\Gamma_i$ such that the biclique $\big(S^{(i)}_L,S^{(i)}_R\big)$ is dominated by the extremal bicliques in the gadget $K_{\hat{a},\hat{b}}(\Gamma_i)$.
\end{description}
We next give an example for each case and overview how our proof works.
\vskip 0.2cm

{\bf Example of Case 1:}\quad An example of Case~1 is depicted in
Figure~\ref{fig:caseone}. 
\begin{figure}[h]
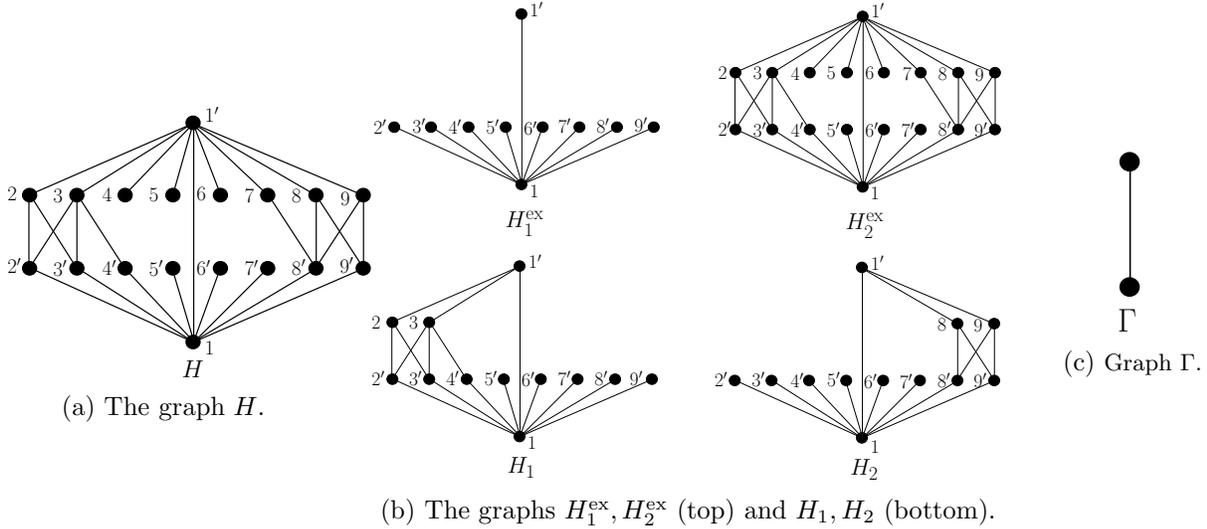

\begin{subfigure}{0.25\textwidth}
  \centering
  \scalebox{0.45}[0.45]{\input{Hexample9.tex}}
  \caption{The graph $H$.}
  \label{fig:caseone1}
\end{subfigure}
\begin{subfigure}{0.58\textwidth}
  \centering
  \scalebox{0.35}[0.35]{\input{Hexample9a.tex}}
  \caption{The graphs $H^{\ex}_1,H^{\ex}_2$ (top) and $H_1,H_2$ (bottom).}
  \label{fig:caseone2}
\end{subfigure}
\begin{subfigure}{0.12\textwidth}
  \centering
  \scalebox{0.6}[0.6]{
\begin{pgfpicture}{0bp}{0bp}{13.115624bp}{113.295206bp}
\begin{pgfscope}
\pgfsetlinewidth{1.0bp}
\pgfsetrectcap 
\pgfsetmiterjoin \pgfsetmiterlimit{10.0}
\pgfpathmoveto{\pgfpoint{6.419521bp}{22.788426bp}}
\pgfpathcurveto{\pgfpoint{3.164375bp}{22.788426bp}}{\pgfpoint{0.525555bp}{25.319533bp}}{\pgfpoint{0.525555bp}{28.441817bp}}
\pgfpathcurveto{\pgfpoint{0.525555bp}{31.564101bp}}{\pgfpoint{3.164375bp}{34.095215bp}}{\pgfpoint{6.419521bp}{34.095215bp}}
\pgfpathcurveto{\pgfpoint{9.674668bp}{34.095215bp}}{\pgfpoint{12.313486bp}{31.564101bp}}{\pgfpoint{12.313486bp}{28.441817bp}}
\pgfpathcurveto{\pgfpoint{12.313486bp}{25.319533bp}}{\pgfpoint{9.674668bp}{22.788426bp}}{\pgfpoint{6.419521bp}{22.788426bp}}
\pgfclosepath
\color[rgb]{0.0,0.0,0.0}\pgfseteorule\pgfusepath{fill}
\pgfpathmoveto{\pgfpoint{6.419521bp}{22.788426bp}}
\pgfpathcurveto{\pgfpoint{3.164375bp}{22.788426bp}}{\pgfpoint{0.525555bp}{25.319533bp}}{\pgfpoint{0.525555bp}{28.441817bp}}
\pgfpathcurveto{\pgfpoint{0.525555bp}{31.564101bp}}{\pgfpoint{3.164375bp}{34.095215bp}}{\pgfpoint{6.419521bp}{34.095215bp}}
\pgfpathcurveto{\pgfpoint{9.674668bp}{34.095215bp}}{\pgfpoint{12.313486bp}{31.564101bp}}{\pgfpoint{12.313486bp}{28.441817bp}}
\pgfpathcurveto{\pgfpoint{12.313486bp}{25.319533bp}}{\pgfpoint{9.674668bp}{22.788426bp}}{\pgfpoint{6.419521bp}{22.788426bp}}
\pgfclosepath
\color[rgb]{0.0,0.0,0.0}
\pgfusepath{stroke}
\end{pgfscope}
\begin{pgfscope}
\pgfsetlinewidth{1.0bp}
\pgfsetrectcap 
\pgfsetmiterjoin \pgfsetmiterlimit{10.0}
\pgfpathmoveto{\pgfpoint{6.419525bp}{101.488418bp}}
\pgfpathcurveto{\pgfpoint{3.164379bp}{101.488418bp}}{\pgfpoint{0.52556bp}{104.019525bp}}{\pgfpoint{0.52556bp}{107.141809bp}}
\pgfpathcurveto{\pgfpoint{0.52556bp}{110.264094bp}}{\pgfpoint{3.164379bp}{112.795207bp}}{\pgfpoint{6.419525bp}{112.795207bp}}
\pgfpathcurveto{\pgfpoint{9.674673bp}{112.795207bp}}{\pgfpoint{12.313491bp}{110.264094bp}}{\pgfpoint{12.313491bp}{107.141809bp}}
\pgfpathcurveto{\pgfpoint{12.313491bp}{104.019525bp}}{\pgfpoint{9.674673bp}{101.488418bp}}{\pgfpoint{6.419525bp}{101.488418bp}}
\pgfclosepath
\color[rgb]{0.0,0.0,0.0}\pgfseteorule\pgfusepath{fill}
\pgfpathmoveto{\pgfpoint{6.419525bp}{101.488418bp}}
\pgfpathcurveto{\pgfpoint{3.164379bp}{101.488418bp}}{\pgfpoint{0.52556bp}{104.019525bp}}{\pgfpoint{0.52556bp}{107.141809bp}}
\pgfpathcurveto{\pgfpoint{0.52556bp}{110.264094bp}}{\pgfpoint{3.164379bp}{112.795207bp}}{\pgfpoint{6.419525bp}{112.795207bp}}
\pgfpathcurveto{\pgfpoint{9.674673bp}{112.795207bp}}{\pgfpoint{12.313491bp}{110.264094bp}}{\pgfpoint{12.313491bp}{107.141809bp}}
\pgfpathcurveto{\pgfpoint{12.313491bp}{104.019525bp}}{\pgfpoint{9.674673bp}{101.488418bp}}{\pgfpoint{6.419525bp}{101.488418bp}}
\pgfclosepath
\color[rgb]{0.0,0.0,0.0}
\pgfusepath{stroke}
\end{pgfscope}
\begin{pgfscope}
\pgfsetlinewidth{1.0bp}
\pgfsetrectcap 
\pgfsetmiterjoin \pgfsetmiterlimit{10.0}
\pgfpathmoveto{\pgfpoint{6.75bp}{107.04375bp}}
\pgfpathlineto{\pgfpoint{6.75bp}{29.64375bp}}
\color[rgb]{0.0,0.0,0.0}
\pgfusepath{stroke}
\end{pgfscope}
\begin{pgfscope}
\pgftransformcm{1.0}{0.0}{0.0}{1.0}{\pgfpoint{-1.05bp}{0.24375bp}}
\pgftext[left,base]{\sffamily\mdseries\upshape\huge
\color[rgb]{0.0,0.0,0.0}$\Gamma$}
\end{pgfscope}
\end{pgfpicture}}
  \caption{\footnotesize{Graph $\Gamma$.}}
  \label{fig:caseone3}
\end{subfigure}
\caption[.]{An example of a graph $H$ which falls into Case~1 of our analysis.}\label{fig:caseone}
\end{figure}  
Let us first see why the example is in Case~1.
The full vertices of $H$, as depicted in Figure~\ref{fig:caseone},
 are vertices~$1$ and $1'$ and the extremal bicliques 
 of~$H$ are $(\{1\},[9'])$ and $([9],\{1'\})$.
 Thus, the $\alpha,\beta$ 
 pairs which equalise the contribution of the extremal bicliques in $K_{a,b}$ satisfy $\alpha=\beta$.  The dominating bicliques $\Cc_{\alpha,\beta}$ for $\alpha=\beta$ are the extremal bicliques and the two bicliques $(S_L^{(1)},S_R^{(1)})=(\{1,2,3\},\{1',2',3'\})$ and $(S_L^{(2)},S_R^{(2)})=(\{1,8,9\},\{1',8',9'\})$.

 For the graph $\Gamma$ in Figure~\ref{fig:caseone3}, it holds that $\FixCOL{\widehat{H}}(\Gamma)=E(\widehat{H})$ where $\widehat{H}$ is any of the graphs $H^{\ex}_1,H^{\ex}_2, H_1,H_2$. Thus, 
\[\FixCOL{H^{\ex}_1}(\Gamma)=9, \ \FixCOL{H^{\ex}_2}(\Gamma)=27, \ 
\FixCOL{H_1}(\Gamma)= 16, 
\ \FixCOL{H_2}(\Gamma)= 15.
\]
It follows that $\gamma(\Gamma)= 1/2$ 
(cf. Definition~\ref{def:defgamma}) and 
$\zeta^{\ex}_1(\Gamma) |V_R|^{\gamma}=\zeta^{\ex}_2(\Gamma) |F_R|^{\gamma}=27.$
For the non-extremal elements of $\Cc_{\alpha,\beta}$ it holds that 
$\zeta_1(\Gamma) |S_R^{(1)}| ^{\gamma}=16 \times  \sqrt{3}
>27,
\quad \zeta_2(\Gamma) |S_R^{(2)}|^{\gamma}=15 \times  \sqrt{3}<
27.$
Thus the only dominating biclique in $\Cc^{\Gamma}_{\alpha,\beta}$ for the gadget $K_{\hat{a},\hat{b}}(\Gamma)$ is $(\{1,2,3\},\{1',2',3'\})$.

Now, in the general setting of Case~1, we have that, in the gadget $K_{\hat{a},\hat{b}}(\Gamma)$, the extremal bicliques are not dominating (since they are dominated by the biclique $(S_L^{(i)},S^{(i)}_R)$). Note however that there may still be more than one element in $\Cc^{\Gamma}_{\alpha,\beta}$, unlike the example in Figure~\ref{fig:caseone}. 
To pick out only one biclique from the set $\Cc^{\Gamma}_{\alpha,\beta}$ we further apply  Lemma~\ref{lem:lovasztwo} on the graphs $H_i$ corresponding to bicliques in $\Cc^{\Gamma}_{\alpha,\beta}$. This yields a graph $J$ which ``prefers" a particular graph, say, $H_j$. Then, using an argument analogous to the one in Section~\ref{sec:multiple} (i.e., by pasting  sufficiently many disjoint copies of $J$ in $K_{\hat{a},\hat{b}}(\Gamma)$), one can show that $\FixCOL{H_j}\leq_{\AP}\FixCOL{H}$. By induction, we have $\BIS\leq_{\AP}\FixCOL{H_j}$ and hence $\BIS\leq_{\AP}\FixCOL{H}$ as well.
\vskip 0.2cm
{\bf Example of Case  2:}\quad An example of Case~2 is depicted
in Figure~\ref{fig:casetwo}. \begin{figure}[h]
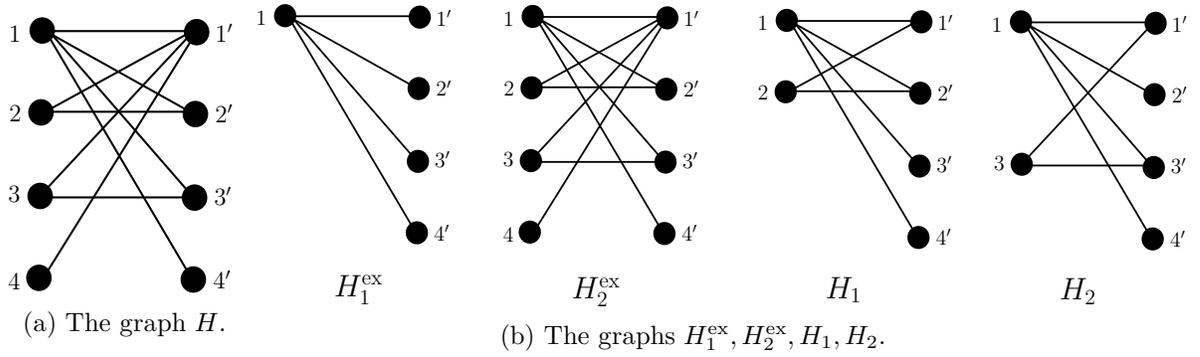

\centering
\begin{subfigure}{0.2\textwidth}
  \centering
  \scalebox{0.78}[0.78]{\input{Hexample7.tex}}
  \caption{The graph $H$.}
  \label{fig:casetwo1}
\end{subfigure}
\begin{subfigure}{0.7\textwidth}
  \centering
  \scalebox{0.68}[0.68]{\input{Hexample7a.tex}}
  \caption{The graphs $H^{\ex}_1,H^{\ex}_2,H_1,H_2$.}
  \label{fig:casetwo2}
\end{subfigure}
\caption[.]{An example of a graph $H$ which falls into Case~2 of our analysis.
}\label{fig:casetwo}
\end{figure}
Note that this is the same graph that we encountered earlier in Figure~\ref{fig:coexistence}. The $\alpha,\beta$ 
 pairs which equalise the contribution of the extremal bicliques in $K_{a,b}$ satisfy $\alpha=\beta$, in which case \[\Cc_{\alpha,\beta}=\{(\{1\},[4']),([4],\{1'\}),(\{1,2\},\{1',2'\}),(\{1,3\},\{1',3'\})\}.\]

To see why $H$ falls into case~2 of our analysis, let us first prove the following equality which holds for every 2-coloured graph $\Gamma$:
\begin{equation}\label{eq:identity}
(\FixCOL{H_1}(\Gamma))^2=\FixCOL{H^{\ex}_1}(\Gamma)\, \FixCOL{H^{\ex}_2}(\Gamma).
\end{equation}
An analogous equality holds if we replace $H_1$ with $H_2$ (since $H_1\cong_c H_2$). The fastest way to derive \eqref{eq:identity} is to observe that the tensor product
 of $H_1$ with itself
 is isomorphic to the tensor product of $H^{\ex}_1$ and $H^{\ex}_2$. An alternative way to deduce the equality using tensor products of smaller graphs is as follows. Let $P_n$ denote the (2-coloured) path with $n$ vertices, whose 2-colouring is induced by colouring one of its degree-one vertices with the colour $R$. Then it is not hard to see that $H^{\ex}_1\cong_c P_3\times P_3$, $H^{\ex}_2\cong_c P_4\times P_4$, $H_1\cong_c P_3\times P_4$ (where $\times$ denotes graph tensor product). Thus, for every 2-coloured graph $\Gamma$ it holds that
\begin{gather*}
\FixCOL{H^{\ex}_1}(\Gamma)=(\FixCOL{P_3}(\Gamma))^2,\quad \FixCOL{H^{\ex}_2}(\Gamma)=(\FixCOL{P_4}(\Gamma))^2, \\
\FixCOL{H_1}(\Gamma)=\FixCOL{P_3}(\Gamma)\,\FixCOL{P_4}(\Gamma),
\end{gather*}
and \eqref{eq:identity} follows. (Note, by Lemma~\ref{lem:lovaszone}, \eqref{eq:identity} implies that $H_1\times H_1$ is isomorphic to $H^{\ex}_1\times H^{\ex}_2$.)

Using \eqref{eq:identity}, we next show that, for $\alpha=\beta$, for all 2-coloured graphs $\Gamma$, the set  $\Cc^{\Gamma}_{\alpha,\beta}$ consists of the extremal bicliques and the non-extremal bicliques $(S_L^{(1)},S_R^{(1)})=(\{1,2\},\{1',2'\})$,  $(S_L^{(2)},S_R^{(2)})=(\{1,3\},\{1',3'\})$. Note that $\FixCOL{H_1}(\Gamma)=\zeta_1(\Gamma)$ and 
\[\FixCOL{H^{\ex}_1}(\Gamma)\, \FixCOL{H^{\ex}_2}(\Gamma)=4^{|R(\Gamma)|}\FixCOL{H}(\Gamma)=4^{|R(\Gamma)|}\zeta_2^{\ex}(\Gamma).\]
Since $\gamma(\Gamma)$ by definition satisfies $\zeta_2^{\ex}(\Gamma)=\zeta_1^{\ex}(\Gamma)4^{\gamma}=4^{|R(\Gamma)|}4^{\gamma}$, we obtain from \eqref{eq:identity}  that 
$\zeta_1(\Gamma) =  4^{|R(\Gamma)|}2^{\gamma}$.
So, since $|S_R^{(1)}|=2$, we have
$\zeta_1(\Gamma)|S_R^{(1)}|^{\gamma}=\zeta_1(\Gamma)2^\gamma=4^{|R(\Gamma)|}4^{\gamma}=\zeta_2^{\ex}(\Gamma)$. 
Analogously, $\zeta_1(\Gamma)||S_R^{(2)}|^{\gamma}=\zeta_2^{\ex}(\Gamma)$. It follows that for all 2-coloured graphs $\Gamma$,  $\Cc^{\Gamma}_{\alpha,\beta}$ consists of the extremal bicliques and the non-extremal bicliques $(S_L^{(1)},S_R^{(1)})$, $(S_L^{(2)},S_R^{(2)})$. 
 
Now, in the general setting of Case~2, the non-existence of a gadget  $K_{\hat{a},\hat{b}}(\Gamma)$ that distinguishes between $(S_L^{(i)},S_R^{(i)})$ and the extremal bicliques for any choice of the graph $\Gamma$ allows us to quantitatively relate $\FixCOL{H}(\Gamma)$ and $\FixCOL{H_i}(\Gamma)$ on every ``input" $\Gamma$, i.e., obtain an equality analogous to \eqref{eq:identity}. It thus follows that  $\FixCOL{H_i}\leq_{\AP}\FixCOL{H}$ and we thus obtain that $\BIS\leq_{\AP}\FixCOL{H}(\Gamma)$ as in Case~1.
\vskip 0.2cm
{\bf Example of Case  3:}\quad
An example of Case~3 is depicted in Figure~\ref{fig:casethree}.
\begin{figure}[h]
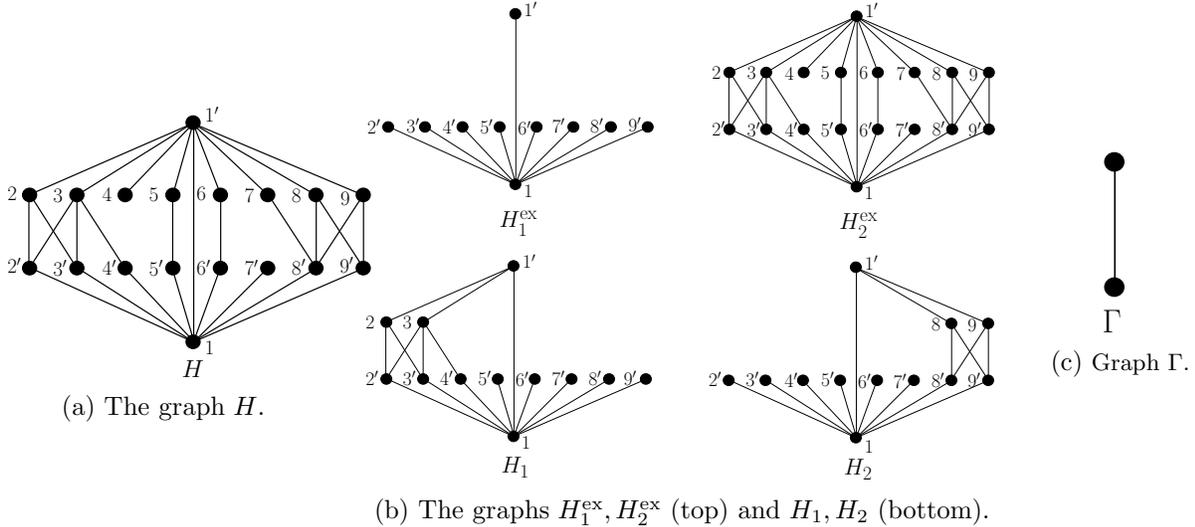

\begin{subfigure}{0.25\textwidth}
  \centering
  \scalebox{0.45}[0.45]{\input{Hexample10.tex}}
  \caption{The graph $H$.}
  \label{fig:casethree1}
\end{subfigure}
\begin{subfigure}{0.57\textwidth}
  \centering
  \scalebox{0.35}[0.35]{\input{Hexample10a.tex}}
  \caption{The graphs $H^{\ex}_1,H^{\ex}_2$ (top) and $H_1,H_2$ (bottom).}
  \label{fig:casethree2}
\end{subfigure}
\begin{subfigure}{0.12\textwidth}
  \centering
  \scalebox{0.6}[0.6]{
\begin{pgfpicture}{0bp}{0bp}{13.115624bp}{113.295206bp}
\begin{pgfscope}
\pgfsetlinewidth{1.0bp}
\pgfsetrectcap 
\pgfsetmiterjoin \pgfsetmiterlimit{10.0}
\pgfpathmoveto{\pgfpoint{6.419521bp}{22.788426bp}}
\pgfpathcurveto{\pgfpoint{3.164375bp}{22.788426bp}}{\pgfpoint{0.525555bp}{25.319533bp}}{\pgfpoint{0.525555bp}{28.441817bp}}
\pgfpathcurveto{\pgfpoint{0.525555bp}{31.564101bp}}{\pgfpoint{3.164375bp}{34.095215bp}}{\pgfpoint{6.419521bp}{34.095215bp}}
\pgfpathcurveto{\pgfpoint{9.674668bp}{34.095215bp}}{\pgfpoint{12.313486bp}{31.564101bp}}{\pgfpoint{12.313486bp}{28.441817bp}}
\pgfpathcurveto{\pgfpoint{12.313486bp}{25.319533bp}}{\pgfpoint{9.674668bp}{22.788426bp}}{\pgfpoint{6.419521bp}{22.788426bp}}
\pgfclosepath
\color[rgb]{0.0,0.0,0.0}\pgfseteorule\pgfusepath{fill}
\pgfpathmoveto{\pgfpoint{6.419521bp}{22.788426bp}}
\pgfpathcurveto{\pgfpoint{3.164375bp}{22.788426bp}}{\pgfpoint{0.525555bp}{25.319533bp}}{\pgfpoint{0.525555bp}{28.441817bp}}
\pgfpathcurveto{\pgfpoint{0.525555bp}{31.564101bp}}{\pgfpoint{3.164375bp}{34.095215bp}}{\pgfpoint{6.419521bp}{34.095215bp}}
\pgfpathcurveto{\pgfpoint{9.674668bp}{34.095215bp}}{\pgfpoint{12.313486bp}{31.564101bp}}{\pgfpoint{12.313486bp}{28.441817bp}}
\pgfpathcurveto{\pgfpoint{12.313486bp}{25.319533bp}}{\pgfpoint{9.674668bp}{22.788426bp}}{\pgfpoint{6.419521bp}{22.788426bp}}
\pgfclosepath
\color[rgb]{0.0,0.0,0.0}
\pgfusepath{stroke}
\end{pgfscope}
\begin{pgfscope}
\pgfsetlinewidth{1.0bp}
\pgfsetrectcap 
\pgfsetmiterjoin \pgfsetmiterlimit{10.0}
\pgfpathmoveto{\pgfpoint{6.419525bp}{101.488418bp}}
\pgfpathcurveto{\pgfpoint{3.164379bp}{101.488418bp}}{\pgfpoint{0.52556bp}{104.019525bp}}{\pgfpoint{0.52556bp}{107.141809bp}}
\pgfpathcurveto{\pgfpoint{0.52556bp}{110.264094bp}}{\pgfpoint{3.164379bp}{112.795207bp}}{\pgfpoint{6.419525bp}{112.795207bp}}
\pgfpathcurveto{\pgfpoint{9.674673bp}{112.795207bp}}{\pgfpoint{12.313491bp}{110.264094bp}}{\pgfpoint{12.313491bp}{107.141809bp}}
\pgfpathcurveto{\pgfpoint{12.313491bp}{104.019525bp}}{\pgfpoint{9.674673bp}{101.488418bp}}{\pgfpoint{6.419525bp}{101.488418bp}}
\pgfclosepath
\color[rgb]{0.0,0.0,0.0}\pgfseteorule\pgfusepath{fill}
\pgfpathmoveto{\pgfpoint{6.419525bp}{101.488418bp}}
\pgfpathcurveto{\pgfpoint{3.164379bp}{101.488418bp}}{\pgfpoint{0.52556bp}{104.019525bp}}{\pgfpoint{0.52556bp}{107.141809bp}}
\pgfpathcurveto{\pgfpoint{0.52556bp}{110.264094bp}}{\pgfpoint{3.164379bp}{112.795207bp}}{\pgfpoint{6.419525bp}{112.795207bp}}
\pgfpathcurveto{\pgfpoint{9.674673bp}{112.795207bp}}{\pgfpoint{12.313491bp}{110.264094bp}}{\pgfpoint{12.313491bp}{107.141809bp}}
\pgfpathcurveto{\pgfpoint{12.313491bp}{104.019525bp}}{\pgfpoint{9.674673bp}{101.488418bp}}{\pgfpoint{6.419525bp}{101.488418bp}}
\pgfclosepath
\color[rgb]{0.0,0.0,0.0}
\pgfusepath{stroke}
\end{pgfscope}
\begin{pgfscope}
\pgfsetlinewidth{1.0bp}
\pgfsetrectcap 
\pgfsetmiterjoin \pgfsetmiterlimit{10.0}
\pgfpathmoveto{\pgfpoint{6.75bp}{107.04375bp}}
\pgfpathlineto{\pgfpoint{6.75bp}{29.64375bp}}
\color[rgb]{0.0,0.0,0.0}
\pgfusepath{stroke}
\end{pgfscope}
\begin{pgfscope}
\pgftransformcm{1.0}{0.0}{0.0}{1.0}{\pgfpoint{-1.05bp}{0.24375bp}}
\pgftext[left,base]{\sffamily\mdseries\upshape\huge
\color[rgb]{0.0,0.0,0.0}$\Gamma$}
\end{pgfscope}
\end{pgfpicture}	}
  \caption{\footnotesize{Graph $\Gamma$.}}
  \label{fig:casethree3}
\end{subfigure}
\caption[.]{An example of a graph $H$ which falls into Case 3 of our analysis. 
}\label{fig:casethree}
\end{figure}

The only difference 
between~$H$ here and
the graph $H$ in Figure~\ref{fig:caseone} 
is that here there are the extra edges $(5,5')$ and $(6,6')$. In particular, only the graph $H^{\ex}_2$ changes. The dominating bicliques $\Cc_{\alpha,\beta}$ for $\alpha=\beta$ are once again the extremal bicliques and the two bicliques $(S_L^{(1)},S_R^{(1)})=(\{1,2,3\},\{1',2',3'\})$, $(S_L^{(2)},S_R^{(2)})=(\{1,8,9\},\{1',8',9'\})$.

For the graph $\Gamma$ in Figure~\ref{fig:casethree3}, it holds that $\FixCOL{\widehat{H}}(\Gamma)=E(\widehat{H})$ where $\widehat{H}$ is any of the graphs $H^{\ex}_1,H^{\ex}_2, H_1,H_2$. Thus, 
\[\FixCOL{H^{\ex}_1}(\Gamma)=9, \ \FixCOL{H^{\ex}_2}(\Gamma)=29, \ \FixCOL{H_1}(\Gamma)=16, 
\ \FixCOL{H_2}(\Gamma)=15 
.\]
From $\zeta^{\ex}_1(\Gamma) |V_R|^{\gamma}=\zeta^{\ex}_2(\Gamma) |F_R|^{\gamma}=29$ (cf. Definition~\ref{def:defgamma}), we obtain that $\gamma(\Gamma)$ satisfies $9^{\gamma}=29/9$.
For the non-extremal elements of $\Cc_{\alpha,\beta}$ it holds that 
$\zeta_1(\Gamma) |S_R^{(1)}| ^{\gamma}=16
\sqrt{29}/3<29,\quad \zeta_2(\Gamma) |S_R^{(2)}|^{\gamma}=15
\sqrt{29}/3<29.$
Thus the dominating bicliques in $\Cc^{\Gamma}_{\alpha,\beta}$ for the gadget $K_{\hat{a},\hat{b}}(\Gamma)$ are the extremal bicliques.

Now consider the general setting of Case~3.
Recall from the statement of Case~3
that for all $i\in[t]$ there is a 2-coloured graph $\Gamma_i$ such that the biclique $\big(S^{(i)}_L,S^{(i)}_R\big)$ is dominated by the extremal bicliques in the gadget $K_{\hat{a},\hat{b}}(\Gamma_i)$.
Consider the graph $\Gamma$ which is the disjoint union of the $\Gamma_i$'s. It is not hard then  to show that in the gadget  $K_{\hat{a},\hat{b}}(\Gamma)$ the extremal bicliques are  dominating over all non-extremal bicliques. One can then use the reduction for Item~\ref{it:extremal} in Section~\ref{sec:qw} to show that $\BIS\leq_{\AP}\FixCOL{H}$ (instead of using the gadget $K_{a,b}$ as discussed there, 
we instead use  the gadget $K_{\hat{a},\hat{b}}(\Gamma)$).

\section{Technical lemmas}
We next state a few technical approximation results which will be used in the course of the proofs. The first will be used in the reductions to approximate real numbers with appropriate integers.
\begin{lemma}[Dirichlet's Approximation Theorem, cf. {\cite[p. 34]{Schmidt}}]\label{lem:dirichlet}
Let $\alpha_1,\hdots,\alpha_d>0$ be real numbers and $N$ be a natural number. Then, there exist positive integers $p_1,\hdots,p_d,q$  with $q\leq N$ such that $|q\alpha_i-p_i|\leq 1/N^{1/d}$ for every $i\in[d]$.
\end{lemma}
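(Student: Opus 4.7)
The plan is to deduce the lemma from the classical pigeonhole argument of Dirichlet. Let $M = \lfloor N^{1/d} \rfloor$, and consider the $N+1$ points
\[
P_k = \bigl(\{k\alpha_1\},\ldots,\{k\alpha_d\}\bigr) \in [0,1)^d, \qquad k = 0,1,\ldots,N,
\]
where $\{x\}$ denotes the fractional part. Partition the unit cube $[0,1)^d$ into $M^d \le N$ axis-aligned subcubes of side length $1/M$. Since there are at least $M^d + 1$ points and only $M^d$ subcubes, the pigeonhole principle guarantees two indices $0 \le k_1 < k_2 \le N$ with $P_{k_1}$ and $P_{k_2}$ in the same subcube.

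Setting $q := k_2 - k_1$ and $p_i := \lfloor k_2 \alpha_i \rfloor - \lfloor k_1 \alpha_i \rfloor$ for each $i \in [d]$, the fact that the two points lie in a common subcube of side $1/M$ yields
\[
\bigl| q\alpha_i - p_i \bigr| \;=\; \bigl| \{k_2\alpha_i\} - \{k_1\alpha_i\} \bigr| \;\le\; \frac{1}{M} \;\le\; \frac{1}{N^{1/d}-1},
\]
and $1 \le q \le N$ by construction. Up to replacing $N^{1/d}$ by $\lfloor N^{1/d}\rfloor$ in the stated bound (which is how the lemma is applied in the paper, where only a polynomially small error is needed), this is the claimed approximation.

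The one remaining nuance is to guarantee that each $p_i$ is strictly positive, since the lemma insists on positive integers. Because $\alpha_i > 0$ for all $i$, we have $q\alpha_i > 0$, so the integer $p_i$ nearest to $q\alpha_i$ is non-negative; the only worry is the degenerate case $p_i = 0$, which forces $0 < q\alpha_i \le 1/M$. In that case we simply rescale: if the pigeonhole output $(q,p_1,\ldots,p_d)$ has some $p_i = 0$, replace it by $(tq, tp_1, \ldots, tp_d)$ where $t$ is the smallest positive integer making every $tp_i \ge 1$ (possible since $\alpha_i > 0$ implies eventually $\lfloor tq\alpha_i \rfloor \ge 1$); then all coordinates are positive, and the error scales by at most a constant that can be absorbed into the statement by choosing $N$ slightly larger at the outset. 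The main obstacle in writing this up is not the pigeonhole step itself, which is completely standard, but formatting the positivity adjustment so that the stated upper bound $q \le N$ and error bound $1/N^{1/d}$ hold simultaneously — this is a routine bookkeeping matter that can be handled by replacing $N$ with $N/c$ for a suitable constant $c$ depending only on the (fixed) $\alpha_i$.
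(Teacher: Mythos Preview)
The paper does not prove this lemma; it simply cites Schmidt's book. Your pigeonhole argument is the standard proof and is essentially correct for the main approximation statement (modulo the $\lfloor N^{1/d}\rfloor$ versus $N^{1/d}$ discrepancy, which you acknowledge and which is immaterial for the paper's purposes).

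However, your handling of the positivity of the $p_i$ does not work as written. You propose replacing $(q,p_1,\ldots,p_d)$ by $(tq,tp_1,\ldots,tp_d)$, but if $p_i=0$ then $tp_i=0$ for every $t$, so this scaling cannot make $p_i$ positive. Redefining $p_i$ as $\lfloor tq\alpha_i\rfloor$ instead does not help either: the error $|tq\alpha_i - \lfloor tq\alpha_i\rfloor|$ is not controlled by the original pigeonhole bound. In fact, the lemma as literally stated --- positive $p_i$ for \emph{every} natural number $N$ --- is false: take $d=1$, $\alpha_1=0.01$, $N=2$; then for $q\in\{1,2\}$ there is no positive integer $p_1$ with $|q\alpha_1-p_1|\le 1/2$.

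The correct observation is simpler: since $q\ge 1$, we have $q\alpha_i\ge \alpha_i$, so as soon as $N^{1/d}>1/\min_i\alpha_i$ the bound $|q\alpha_i-p_i|\le 1/N^{1/d}$ forces $p_i\ge 1$. In the paper's two applications the $\alpha_i$ are of order $n^3$ and $N=n^2$, so positivity is automatic. The lemma should really be read as holding for all sufficiently large $N$ (depending on the $\alpha_i$), which is all that is ever used.
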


We will also need the following estimate for the number of \emph{surjective} mappings from a set of size $n$ to a set of size $k$. We will denote this number by $T(n,k)$.  In terms of the Stirling number of the second kind $S(n,k)$, it holds that 
\[T(n,k)=S(n,k) k!.\]
It is well known that for fixed $k$, the asymptotic order of $S(n,k)$ is $k^n/k!$ and thus $T(n,k)$ is  $(1+o(1))k^n$. We will need the following more precise estimate.
\begin{lemma}[Lemma 18 in \cite{DGGJ}]\label{lem:surj}
Let $n,k$ be positive integers. Denote by $T(n,k)$ be the number of \emph{surjective} mappings from a set of size $n$ to a set of size $k$, i.e., $T$ equals the cardinality of the set $\big\{h:[n]\rightarrow [k]\mid h([n])=[k]\big\}$. Then, for all $n\geq 2k\ln k$, 
\[(1-2k/n)k^n\leq T(n,k)\leq k^n.\]
\end{lemma}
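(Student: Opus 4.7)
The upper bound $T(n,k) \leq k^n$ is immediate since every surjective map from $[n]$ to $[k]$ is in particular a map, and there are exactly $k^n$ maps from $[n]$ to $[k]$. The work lies entirely in the lower bound.

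My plan for the lower bound is a union bound on the complement. For each $i \in [k]$, let $A_i$ denote the set of maps $h:[n]\to[k]$ whose image avoids $i$; clearly $|A_i| = (k-1)^n$. A map fails to be surjective precisely when it lies in $\bigcup_{i=1}^k A_i$, so
\[
T(n,k) \;\geq\; k^n \,-\, \sum_{i=1}^k |A_i| \;=\; k^n - k(k-1)^n \;=\; k^n\!\left(1 - k\!\left(1-\tfrac{1}{k}\right)^n\right).
\]
Thus the lemma reduces to the real-analytic inequality
\[
k\!\left(1-\tfrac{1}{k}\right)^{\!n} \;\leq\; \frac{2k}{n} \qquad\text{whenever } n\geq 2k\ln k.
\]
(For $k=1$ the statement is vacuous, so assume $k\geq 2$.)

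To establish this inequality, I would invoke the standard estimate $(1-1/k)^n \leq e^{-n/k}$ and then show $e^{-n/k} \leq 2/n$, equivalently $n \leq 2e^{n/k}$, for all $n \geq 2k\ln k$. At the boundary $n = 2k\ln k$, the right-hand side equals $2k^2$, and the inequality $2k\ln k \leq 2k^2$ is just $\ln k \leq k$. To push the inequality to all larger $n$, I would differentiate $f(n):=2e^{n/k}-n$ with respect to $n$: since $f'(n) = (2/k)e^{n/k}-1$ and $e^{n/k}\geq k^2$ on the range $n\geq 2k\ln k$, we have $f'(n)\geq 2k-1>0$, so $f$ is increasing and stays nonnegative.

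The whole argument is elementary; the only place requiring any care is verifying the exponential inequality at the threshold $n=2k\ln k$ and confirming monotonicity beyond it. I do not anticipate a genuine obstacle---the main thing is to keep track of which side of each inequality to use so that the crude union-bound factor of $k$ is absorbed by the exponential decay $e^{-n/k}$, leaving room for the desired $2/n$ slack.
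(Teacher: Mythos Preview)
Your proof is correct. The paper's own proof is much shorter because it outsources the main work: it simply cites \cite{DGGJ} for the intermediate bound $(1-e^{-n/(2k)})k^n\le T(n,k)\le k^n$, and then observes that the elementary inequality $e^{-x}<1/x$ (valid for all $x>0$, since $xe^{-x}$ has maximum $e^{-1}<1$) applied at $x=n/(2k)$ gives $e^{-n/(2k)}<2k/n$. Your argument is self-contained and goes via the same underlying mechanism---a union bound on the non-surjective maps followed by an exponential estimate---but you bound $k(1-1/k)^n\le ke^{-n/k}$ directly and then verify $ke^{-n/k}\le 2k/n$ by a monotonicity check. The paper's route is slicker once the cited bound is granted (no derivative calculation needed), while yours has the virtue of not relying on an external reference; both are entirely standard.
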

\begin{proof}
It is shown in \cite{DGGJ} that  
\[\big(1-e^{-n/(2k)}\big)k^n\leq T(n,k)\leq k^n.\] 
The result follows from $1/x>\exp(-x)$ applied to 
$x=n/(2k)$. 
\end{proof}
Finally, we will use the following bound to bound expressions of the form $c^{\pm 1/n}$, where $c$ is a constant independent of $n$.
\begin{lemma}\label{lem:xzapprox}
Let $n,K$ be positive integers with $n\geq K$. Suppose that $x$ is a real number such that $1\leq x\leq K$ and let $z$ be a real number such that $|z|\leq 1/n$. Then
\[|x^z-1|\leq 2K/n.\]
\end{lemma}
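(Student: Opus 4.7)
The plan is to write $x^z = \exp(z \ln x)$ and reduce everything to a standard estimate for $|e^w - 1|$ when $|w| \leq 1$. First I would bound the exponent: since $1 \leq x \leq K$ we have $0 \leq \ln x \leq \ln K$, and combining with $|z| \leq 1/n$ gives $|z \ln x| \leq (\ln K)/n$. The key observation is that $\ln K \leq K$ for every positive integer $K$, and $K \leq n$ by hypothesis, so $|z \ln x| \leq (\ln K)/n \leq K/n \leq 1$.

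Next I would invoke the elementary inequality $|e^w - 1| \leq 2|w|$ valid for all $|w| \leq 1$. For $w \in [0,1]$ this follows from the Taylor expansion since
\[
e^w - 1 = \sum_{k \geq 1} \frac{w^k}{k!} = w \sum_{j \geq 0} \frac{w^j}{(j+1)!} \leq w\sum_{j \geq 0}\frac{1}{(j+1)!} = (e-1)w < 2w,
\]
while for $w \in [-1, 0]$ the bound $e^w \geq 1 + w$ gives $|e^w - 1| = 1 - e^w \leq |w| \leq 2|w|$. Setting $w = z \ln x$, which satisfies $|w| \leq 1$ by the previous paragraph, we conclude
\[
|x^z - 1| = |e^{z \ln x} - 1| \leq 2|z \ln x| \leq \frac{2 \ln K}{n} \leq \frac{2K}{n},
\]
as desired.

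There is no real obstacle here; the entire argument rests on the two numerical facts $\ln K \leq K$ (to ensure the exponent lies in $[-1,1]$ so the elementary exponential bound applies) and the Taylor estimate for $e^w - 1$. The only mildly delicate point is making sure that the bound $|e^w - 1| \leq 2|w|$ is tight enough to yield the constant $2$ rather than a larger multiple like $e$; using the specific series comparison above avoids any loss.
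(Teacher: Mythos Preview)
Your proof is correct. Both your argument and the paper's reduce to an elementary calculus estimate, but the execution differs slightly: the paper applies the mean value theorem to $f(z)=x^z$, obtaining $|x^z-1|=|z|\,x^\rho\ln x$ for some $|\rho|\le 1/n$, and then separately verifies $x^\rho\le x^{1/n}\le 2$ via $2^n\ge n\ge K\ge x$. Your route via $|e^w-1|\le 2|w|$ for $|w|\le 1$, together with $|z\ln x|\le(\ln K)/n\le K/n\le 1$, is arguably a touch more streamlined since it avoids the auxiliary bound on $x^\rho$; conversely, the paper's mean value theorem step makes the dependence on $z$ and $x$ visible in one stroke without needing to split into positive and negative $w$. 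Either way the content is the same.
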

\begin{proof}
Let $f(z)=x^z$. By the mean value theorem on $f$, there exists $\rho$ satisfying $|\rho|\leq |z|\leq 1/n$ such that $f(z)-f(0)=z f'(\rho)$. Using that $x\geq 1$, we thus obtain
\[|x^z-1|= |z|\, |f'(\rho)|=|z|\, |x^\rho\ln x|=|z|\, x^\rho\ln x\leq (2\ln x)/n\leq 2K/n.\]
where in the middle inequality we used that $|z|\leq 1/n$ and $x^\rho \leq 2$, while in the last inequality we used that $\ln x\leq \ln K<K$. We argue briefly for the validity of $x^\rho \leq 2$: since $f$ is increasing (from $x\geq 1$),  we have $x^\rho\leq x^{|\rho|}\leq x^{1/n}$. We  also have $2^n\geq n\geq K\geq x$, which yields $x^{1/n}\leq 2$.
\end{proof}

We conclude this section with a piece of notation. For non-negative real numbers $X,Z$ and $\epsilon>0$ we will write $X=(1\pm \epsilon)Z$ to denote that 
\[(1-\epsilon)Z\leq X\leq (1+\epsilon)Z.\]
It is not hard to see that if $X_1=(1\pm\epsilon_1) Z_1$ and $X_2=(1\pm \epsilon_2)Z_2$ for $0<\epsilon_1,\epsilon_2<1$, then $X_1X_2=(1\pm \epsilon)Z_1Z_2$ where $\epsilon=\frac{3}{2}(\epsilon_1+\epsilon_2)$.

\section{Proof of graph homomorphism lemma}\label{sec:prelimproofs}

In ths section, we prove the 
graph-homomorphism lemmas from Section~\ref{sec:preliminaries}.
First, the proof of the following 
lemma is a close adaptation of  \cite[Proof of Theorem 2.11]{HellNesetril}.

\begin{lovaszone}
\statelovaszone{}
\end{lovaszone}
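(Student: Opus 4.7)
The plan is to adapt the classical Lov\'asz argument, as presented in \cite[Proof of Theorem~2.11]{HellNesetril}, to the colour-preserving setting. For 2-coloured bipartite graphs $G,H$, write $\mathrm{inj}_c(G,H)$ for the number of colour-preserving \emph{injective} homomorphisms from $G$ to $H$. The first step is the identity
\begin{equation}\label{eq:partid}
\FixCOL{H}(J) \;=\; \sum_{P}\mathrm{inj}_c(J/P,\,H),
\end{equation}
where $P$ ranges over \emph{admissible} partitions of $V(J)$, meaning that no block of $P$ contains two vertices of different colours and no block of $P$ contains two endpoints of an edge of $J$. This identity follows by classifying each colour-preserving homomorphism $h\colon J\to H$ according to the partition $P_h$ of $V(J)$ whose blocks are the nonempty fibres $h^{-1}(v)$, $v\in V(H)$; admissibility of $P_h$ uses that $h$ is colour-preserving and that $H$ (being 2-coloured bipartite) is loopless, and $h$ then factors through a colour-preserving injective homomorphism $J/P_h\to H$, which is precisely what the right-hand side counts.

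The second step is to invert \eqref{eq:partid}. Since the unique admissible partition $P$ with $J/P=J$ is the partition into singletons, \eqref{eq:partid} can be rewritten as
\[
\mathrm{inj}_c(J,H) \;=\; \FixCOL{H}(J) \;-\; \sum_{P\neq\text{singletons}}\mathrm{inj}_c(J/P,\,H),
\]
and every quotient $J/P$ appearing on the right satisfies $|V(J/P)|<|V(J)|$. Inducting on $|V(J)|$, one obtains $\mathrm{inj}_c(J,H)$ as an integer linear combination of values $\FixCOL{H}(J'')$ for $J''$ a quotient of $J$, so in particular with $|V(J'')|\le |V(J)|$. Consequently, if $\FixCOL{H_1}(J'')=\FixCOL{H_2}(J'')$ for every 2-coloured bipartite graph $J''$ with $|V(J'')|\le \max\{|V(H_1)|,|V(H_2)|\}$, then also $\mathrm{inj}_c(J,H_1)=\mathrm{inj}_c(J,H_2)$ for every such $J$.

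To finish I argue the contrapositive of the lemma: assuming the above equality of $\FixCOL{}$-values for all $J''$ of order at most $\max\{|V(H_1)|,|V(H_2)|\}$, I deduce $H_1\cong_c H_2$. Setting $J=H_1$ and observing that the identity map witnesses $\mathrm{inj}_c(H_1,H_1)\ge 1$, we get $\mathrm{inj}_c(H_1,H_2)\ge 1$; this yields a colour-preserving edge-preserving injection $\phi\colon V(H_1)\hookrightarrow V(H_2)$, so $|V(H_1)|\le|V(H_2)|$. Symmetrically, $J=H_2$ yields $\psi\colon V(H_2)\hookrightarrow V(H_1)$, forcing $|V(H_1)|=|V(H_2)|$ and making both $\phi,\psi$ bijections. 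Since $\phi$ maps $E(H_1)$ injectively into $E(H_2)$, we have $|E(H_1)|\le|E(H_2)|$, and symmetrically the reverse; hence $\phi$ is also bijective on edges, i.e.\ a colour-preserving graph isomorphism, contradicting $H_1\ncong_c H_2$.

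The only delicate point is verifying that the partition $P_h$ induced by a colour-preserving homomorphism $h$ is admissible so that the quotient $J/P_h$ is a genuine 2-coloured bipartite graph carrying a well-defined induced 2-colouring and simple edge set, and that \eqref{eq:partid} is indeed a partition of the set of colour-preserving homomorphisms into disjoint classes indexed by $P$. Once this bookkeeping is in place, the remainder is routine M\"obius-style inversion on admissible partitions together with the elementary vertex-and-edge counting in the last step.
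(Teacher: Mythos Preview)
Your proof is correct and follows essentially the same approach as the paper's: both establish the identity expressing $\FixCOL{H}(J)$ as a sum of injective colour-preserving homomorphism counts over quotients of $J$, invert it by induction on $|V(J)|$ to transfer equality of $\FixCOL{}$-values to equality of injective counts, and then plug in $J=H_1$ and $J=H_2$ to force a colour-preserving isomorphism. Your ``admissible partitions'' are exactly the paper's pairs $(\Theta_L,\Theta_R)$ of partitions of $L(J)$ and $R(J)$ (the no-edge-in-a-block condition is automatic since $J$ is bipartite), and your final vertex-and-edge counting argument spells out what the paper leaves implicit.
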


\begin{proof}  For the sake of contradiction, we may assume that for all 2-coloured graphs 
$J$ with $|V(J)|\leq\max\{|V(H_1)|,|V(H_2)|\}$
it holds that 
\begin{equation}\label{eq:contra}
\FixCOL{H_1}(J)=\FixCOL{H_2}(J).
\end{equation} 

For 2-coloured graphs $H,J$ denote by $\InjFixCOL{H}(J)$ the number of \emph{injective} colour-preserving homomorphisms from $J$ to $H$. Assuming that \eqref{eq:contra} holds for all 2-coloured graphs
$J$ with $|V(J)|\leq\max\{|V(H_1)|,|V(H_2)|\}$,
we will further show that for all 
such
2-coloured graphs $J$ it holds that 
\begin{equation}\label{eq:contrainj}
\InjFixCOL{H_1}(J)=\InjFixCOL{H_2}(J).
\end{equation}
Plugging $J=H_1$ and $J=H_2$ into \eqref{eq:contrainj} yields the existence of a colour-preserving isomorphism between $H_1$ and $H_2$, i.e., $H_1\cong_c H_2$, contradicting the assumption in the statement of the lemma.

To prove \eqref{eq:contrainj}, we proceed by induction on $|V(J)|$. When $|V(J)|=1$, for every 2-coloured graph $H$ we (trivially) have that $\FixCOL{H}(J)=\InjFixCOL{H}(J)$ from where we obtain $\InjFixCOL{H_1}(J)=\InjFixCOL{H_2}(J)$. So assume that:
\begin{equation}\label{eq:inductioninj}
\begin{gathered}
\mbox{for all 2-coloured graphs $J'$ with $|V(J')|<|V(J)|$,}\\
\mbox{it holds that $\InjFixCOL{H_1}(J')=\InjFixCOL{H_2}(J')$.}
\end{gathered}
\end{equation}
We will show that $\InjFixCOL{H_1}(J)=\InjFixCOL{H_2}(J)$ as well. Let $\Theta_L, \Theta_R$ be partitions of $L(J), R(J)$, respectively. We denote by $J/(\Theta_L, \Theta_R)$ the 2-coloured  graph which is obtained by contracting every part in the partition $\Theta_L$ and every part in the partition $\Theta_R$. For 2-coloured graphs $H,J$ it is straightforward to check that
\begin{equation}\label{eq:basicsuminj}
\FixCOL{H}(J)=\sum_{\Theta_L}\sum_{\Theta_R}\InjFixCOL{H}\big(J/(\Theta_L,\Theta_R)\big),
\end{equation}
where in the latter sum $\Theta_L,\Theta_R$ range over all partitions of $L(J)$ and $R(J)$. Note that $J/(\Theta_L,\Theta_R)$ has fewer vertices than $J$, with the single exception of  the case that $\Theta_L,\Theta_R$ consist only of singleton sets where we have that $J/(\Theta_L,\Theta_R)=J$. Applying \eqref{eq:basicsuminj} for $H=H_1$ and $H=H_2$, from \eqref{eq:inductioninj} and the assumption $\FixCOL{H_1}(J)=\FixCOL{H_2}(J)$, we obtain $\InjFixCOL{H_1}(J)=\InjFixCOL{H_2}(J)$, as desired.
\end{proof}

We will also use the following generalisation of  Lemma~\ref{lem:lovaszone}.

\begin{lovasztwo}
\statelovasztwo{}
\end{lovasztwo}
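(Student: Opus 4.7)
\medskip

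\noindent\textbf{Proof plan for Lemma~\ref{lem:lovasztwo}.}

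The plan is to proceed by induction on~$k$. The base case $k=1$ is trivial (take any $J$ and $i^*=1$), and $k=2$ is precisely Lemma~\ref{lem:lovaszone}: that lemma produces a $J$ with $\FixCOL{H_1}(J)\neq\FixCOL{H_2}(J)$, and we take $i^*$ to be whichever of $\{1,2\}$ attains the larger value.

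For the inductive step, I would apply the inductive hypothesis to $H_1,\ldots,H_{k-1}$ to obtain a 2-coloured graph $J_0$ and an index $i_0\in[k-1]$ with $\FixCOL{H_{i_0}}(J_0)>\FixCOL{H_i}(J_0)$ for every $i\in[k-1]\setminus\{i_0\}$. Now compare $H_{i_0}$ with $H_k$ on $J_0$. If $\FixCOL{H_{i_0}}(J_0)>\FixCOL{H_k}(J_0)$, then $J_0$ and $i^*=i_0$ already work. If $\FixCOL{H_{i_0}}(J_0)<\FixCOL{H_k}(J_0)$, then $H_k$ dominates every other $H_i$ on $J_0$ (by transitivity), so $J_0$ and $i^*=k$ work. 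The remaining (and main) case is the equality $\FixCOL{H_{i_0}}(J_0)=\FixCOL{H_k}(J_0)$; here $J_0$ alone cannot separate $H_{i_0}$ from $H_k$ and we must combine it with an auxiliary gadget.

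To handle the equality case, apply Lemma~\ref{lem:lovaszone} to the non-isomorphic pair $H_{i_0},H_k$ to obtain a 2-coloured graph $J_1$ with $\FixCOL{H_{i_0}}(J_1)\neq\FixCOL{H_k}(J_1)$. Without loss of generality assume $\FixCOL{H_{i_0}}(J_1)>\FixCOL{H_k}(J_1)$ (otherwise swap the roles of $i_0$ and $k$ and re-examine the chain of strict inequalities, which still puts $H_k$ strictly above the other $H_i$'s on $J_0$). Then define $J$ to be the disjoint union of $N$ copies of $J_0$ together with one copy of $J_1$, for a sufficiently large integer~$N$ to be chosen. Using the multiplicativity of $\FixCOL{H}$ over disjoint unions (colour-preserving homomorphisms are chosen independently on each connected component), one has
\[
\FixCOL{H_i}(J)\;=\;\bigl(\FixCOL{H_i}(J_0)\bigr)^{N}\,\FixCOL{H_i}(J_1)
\qquad\text{for every }i\in[k].
\]
For $i\in[k-1]\setminus\{i_0\}$, the ratio $\FixCOL{H_{i_0}}(J_0)/\FixCOL{H_i}(J_0)$ is a strictly positive integer exceeding~$1$, so its $N$-th power dominates the (constant in~$N$) second factor. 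For $i=k$ the first factor is $1^N=1$ and the second factor is $\FixCOL{H_{i_0}}(J_1)/\FixCOL{H_k}(J_1)>1$. Hence for all sufficiently large~$N$ we have $\FixCOL{H_{i_0}}(J)>\FixCOL{H_i}(J)$ for every $i\neq i_0$, and we take $i^*=i_0$.

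The only subtlety worth noting (and the single place one must be a bit careful) is the possibility that $\FixCOL{H_i}(J_1)=0$ for some $i\neq i_0$; in that subcase the ratio argument degenerates, but one checks directly that $\FixCOL{H_i}(J)=0$ while $\FixCOL{H_{i_0}}(J)\geq 1$ (since $\FixCOL{H_{i_0}}(J_0)\geq 1$ and $\FixCOL{H_{i_0}}(J_1)\geq 1$ follow from the strict inequalities used to define them), so the desired domination still holds. No part of the argument is genuinely hard; the main conceptual point is simply to combine Lemma~\ref{lem:lovaszone}'s separation with the disjoint-union multiplicativity to amplify a strict but possibly tiny gap into domination over an entire family.
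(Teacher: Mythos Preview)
Your proof is correct and follows essentially the same approach as the paper: induct on~$k$, use the inductive hypothesis to get a gadget separating $k-1$ of the graphs, compare the remaining graph on that gadget, and in the equality case combine (via disjoint union and multiplicativity) with a fresh separating gadget from Lemma~\ref{lem:lovaszone}. One harmless slip: the ratio $\FixCOL{H_{i_0}}(J_0)/\FixCOL{H_i}(J_0)$ is a rational number exceeding~$1$, not in general an integer, but this does not affect your argument since its $N$-th power still tends to infinity.
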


\begin{proof}  
The proof is by induction on $k$. For $k=2$, the claim is immediate from Lemma~\ref{lem:lovaszone}. Assume that the claim holds for $k-1$, we next show it for $k$. By applying the inductive hypothesis to the graphs $H_2,\hdots,H_k$ (and renaming if necessary),  we may assume that $J_2$ is a bipartite graph such that 
\begin{equation}\label{eq:choiceJ2}
\FixCOL{H_2}(J_2)>\FixCOL{H_i}(J_2)\mbox{ for all $i>2$}.
\end{equation} 
 We may assume that
\begin{equation}\label{eq:H1H2}
\FixCOL{H_1}(J_2)= \FixCOL{H_2}(J_2)=:M.
\end{equation}
(Otherwise we may choose $J=J_2$ and $i^*=\arg\max_{i\in\{1,2\}}\FixCOL{H_i}(J_2)$.) By Lemma~\ref{lem:lovaszone}  (and renaming $H_1,H_2$ if necessary; this is justified by \eqref{eq:H1H2}), we may assume that $J'$ is a bipartite graph such that 
\begin{equation}\label{eq:choiceJp}
\FixCOL{H_1}(J')>\FixCOL{H_2}(J').
\end{equation}
Let $J$  be the disjoint union of $J'$ and $t$ copies of $J_2$, where $t:=\left\lceil C M\right\rceil$ where $M$ is as in \eqref{eq:H1H2} and  
\[C:=\max_{i\in[k]\backslash\{1,2\}}\frac{\FixCOL{H_i}(J')}{\FixCOL{H_1}(J')}.\] 
For every $i\in[k]$, we have that 
\[\FixCOL{H_i}(J)=\big(\FixCOL{H_i}(J_2)\big)^t\, \FixCOL{H_i}(J').\]
From \eqref{eq:H1H2} and \eqref{eq:choiceJp}, we have that $\FixCOL{H_1}(J)>\FixCOL{H_2}(J)$. Further, for $i\notin\{1,2\}$, we have
\begin{align*}
\frac{\FixCOL{H_1}(J)}{\FixCOL{H_i}(J)}&=\frac{\FixCOL{H_1}(J')}{\FixCOL{H_i}(J')}\Big(\frac{\FixCOL{H_1}(J_2)}{\FixCOL{H_i}(J_2)}\Big)^t\\
&\geq \frac{1}{C}\Big(1+\frac{1}{\FixCOL{H_i}(J_2)}\Big)^t\\
&> \frac{t}{C \FixCOL{H_i}(J_2)}\geq 1,
\end{align*}
where we used \eqref{eq:choiceJ2} to deduce $\FixCOL{H_1}(J')\geq \FixCOL{H_2}(J')+1$, the inequality $(1+x)^t\geq 1+t x>tx$ (for $t\geq 1$ and $x\geq 0$) as well as the choice of $t$ (cf. the definitions of $C,M$).
\end{proof}

\section{Proof of Lemma~\ref{lem:main}}\label{sec:lemmafull}
The following two lemmas will be important for the proof of Lemma~\ref{lem:main}.
\begin{lemma}\label{lem:slsr}
Let $H$ be a 2-coloured graph which is full  
but not trivial.
Let $0<\alpha,\beta<1$ be as in \eqref{eq:defa0b0}. Suppose that $\Gamma$ is a 2-coloured graph with no isolated vertices in $R(\Gamma)$. Assume further  that $(F_L,V_R),(V_L,F_R)\notin \Cc^{\Gamma}_{\alpha,\beta}$. 
Then a 2-coloured, full 
but not trivial
graph $H'$ can be specified  
such that $|V(H')|<|V(H)|$ and 
\begin{equation*}
\FixCOL{H'}\leq_\AP\FixCOL{H}.
\end{equation*}
\end{lemma}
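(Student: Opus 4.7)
The plan is to realise the ``Case~1'' scheme from Section~\ref{sec:examples}. Enumerate $\Cc^{\Gamma}_{\alpha,\beta}$ as $(S_L^{(1)},S_R^{(1)}),\ldots,(S_L^{(t)},S_R^{(t)})$. Since no element of $\Cc^{\Gamma}_{\alpha,\beta}$ is extremal, each is a non-extremal maximal biclique of $H$, so by Lemma~\ref{inductionstep} each graph $H_i:=H[S_L^{(i)}\cup V_R]=H_{S_L^{(i)},S_R^{(i)}}$ is 2-coloured, full, not trivial, and has fewer vertices than $H$. Colour-preservingly isomorphic pairs yield identical $\FixCOL$ counts, so by discarding duplicates I may assume $H_1,\ldots,H_t$ are pairwise non-isomorphic under $\cong_c$. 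Apply Lemma~\ref{lem:lovasztwo} to obtain a 2-coloured graph $J$ and an index $i^*\in[t]$ with $\FixCOL{H_{i^*}}(J)>\FixCOL{H_i}(J)$ for every $i\neq i^*$. Declare $H':=H_{i^*}$; then $H'$ is full, not trivial, and $|V(H')|<|V(H)|$.

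To prove the AP reduction, given an input $(\Gamma',\epsilon)$ to $\FixCOL{H'}$, I construct an instance $G$ of $\FixCOL{H}$ as follows. Pick a large integer $Q$ polynomial in $|V(\Gamma')|$, $|V(J)|$, and $\epsilon^{-1}$. Use Dirichlet's theorem (Lemma~\ref{lem:dirichlet}) on the single ratio $\alpha/\beta$ to produce integers $\hat{a},\hat{b}$ with $\hat{a}=Q\alpha+O(1)$ and $\hat{b}=Q\beta+\gamma(\Gamma)+O(1)$, so that $\hat{a}/\hat{b}$ closely tracks the target ratio dictated by the extremal-equalisation equation \eqref{eq:defa0b0} shifted by the correction $\gamma(\Gamma)$. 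Take the gadget $K_{\hat{a},\hat{b}}(\Gamma)$ from Section~\ref{sec:wer}, take $N$ disjoint copies $J^{(1)},\ldots,J^{(N)}$ of $J$ for a suitably large polynomial $N$, take a single copy of $\Gamma'$, and form $G$ by attaching $R(K_{\hat{a},\hat{b}})$ to $L(J^{(k)})$ for every $k$ and to $L(\Gamma')$ via complete bipartite edges (mirroring Figure~\ref{fig:gadgetuse}). Note that $\Gamma$ has no isolated vertex in $R(\Gamma)$ by hypothesis, which validates the ``phase implies location'' argument in Footnote~\ref{foot:technical}; a harmless preprocessing step can ensure the same for $\Gamma'$ and $J$.

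The contribution to $\FixCOL{H}(G)$ from homomorphisms whose $K_{\hat{a},\hat{b}}$-phase is a biclique $(S_L,S_R)$ factorises, by Lemma~\ref{lem:surj} applied on the left and right of $K_{\hat{a},\hat{b}}$, as $(1\pm o(1))\,|S_L|^{\hat{a}}|S_R|^{\hat{b}}\cdot\zeta(S_L,S_R,\Gamma)\cdot\bigl(\FixCOL{H_{S_L,S_R}}(J)\bigr)^{N}\cdot\FixCOL{H_{S_L,S_R}}(\Gamma')$. The factor $|S_L|^{\hat{a}}|S_R|^{\hat{b}}$, together with the correction $\zeta(S_L,S_R,\Gamma)|S_R|^{\gamma(\Gamma)}$, is maximised precisely on $\Cc^{\Gamma}_{\alpha,\beta}$ and is exponentially (in $Q$) larger than on any other biclique; Lemma~\ref{lem:xzapprox} absorbs the $O(1)$ Dirichlet rounding error. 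Within $\Cc^{\Gamma}_{\alpha,\beta}$, the extra factor $(\FixCOL{H_i}(J))^N$ is exponentially (in $N$) larger at $i^*$ than at any other $i$. Thus all but a $(1\pm\epsilon/4)$-fraction of $\FixCOL{H}(G)$ comes from phase $(S_L^{(i^*)},S_R^{(i^*)})$, where the restriction to $\Gamma'$ is forced by the edges to $R(K_{\hat{a},\hat{b}})$ to be a colour-preserving homomorphism to $H_{S_L^{(i^*)},S_R^{(i^*)}}=H'$. Consequently $\FixCOL{H}(G)=(1\pm\epsilon/2)\,C\cdot\FixCOL{H'}(\Gamma')$ for an explicitly computable constant $C$ depending only on $H,\Gamma,J,\hat{a},\hat{b},N$, completing the AP reduction.

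The main obstacle is quantitative: choosing $Q$ and $N$ large enough (but still polynomial in $|V(\Gamma')|$ and $\epsilon^{-1}$) so that the combined multiplicative error from (i) Dirichlet rounding via Lemma~\ref{lem:xzapprox}, (ii) the surjection approximation of Lemma~\ref{lem:surj} applied on both sides of $K_{\hat{a},\hat{b}}$, and (iii) the subdominant phases (both outside $\Cc^{\Gamma}_{\alpha,\beta}$ and within it but away from $i^*$) is below $\epsilon$. This requires careful separation bounds that exploit the strict inequalities provided by the hypothesis $(F_L,V_R),(V_L,F_R)\notin\Cc^{\Gamma}_{\alpha,\beta}$ and by Lemma~\ref{lem:lovasztwo}, together with the observation that the size of $\Cc$ is bounded in terms of $|V(H)|$ only, so the number of competing phases is a constant.
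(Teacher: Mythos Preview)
Your overall strategy matches the paper's, but the hierarchy of scales in your construction fails. With a \emph{single} copy of $\Gamma$ in the gadget $K_{\hat a,\hat b}(\Gamma)$, the selection factor $\zeta(S_L,S_R,\Gamma)\,|S_R|^{\gamma(\Gamma)}$ appears only to the first power. For a biclique $(S_L,S_R)\in\Cc_{\alpha,\beta}\setminus\Cc^{\Gamma}_{\alpha,\beta}$ the leading term $(|S_L|^{\alpha}|S_R|^{\beta})^Q$ equals $M_K^Q$ exactly, so your gap between $\Cc^{\Gamma}_{\alpha,\beta}$ and the rest of $\Cc_{\alpha,\beta}$ is a \emph{constant} depending only on $H$ and $\Gamma$, not ``exponentially in $Q$ larger'' as you assert. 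That constant cannot dominate $(\FixCOL{H_{S_L,S_R}}(J))^N\cdot\FixCOL{H_{S_L,S_R}}(\Gamma')$, which grows with both $N$ and the input size.

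The failure is concrete, not hypothetical: the hypothesis permits the extremal bicliques to lie in $\Cc_{\alpha,\beta}\setminus\Cc^{\Gamma}_{\alpha,\beta}$ (this is precisely the situation in Case~I of the proof of Lemma~\ref{lem:main}). For $(V_L,F_R)$ one has $H_{V_L,F_R}=H$, so $\FixCOL{H}(J)\geq\FixCOL{H'}(J)$ and $\FixCOL{H}(\Gamma')\geq\FixCOL{H'}(\Gamma')$, and your construction then attributes the dominant contribution to the wrong phase. The paper's fix is to attach $Qn^2$ disjoint copies of $\Gamma$, amplifying the $\zeta\,|S_R|^{\gamma}$ factor to a power $Qn^2$ and producing a four-level hierarchy $Qn^3\gg Qn^2\gg n^{3/2}\gg n'$ (for $K_{a,b}$, the $\Gamma$-copies, the $J$-copies, and the input respectively); each level then correctly dominates all lower ones. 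A secondary point: after ``discarding duplicates'' you must still account for the multiplicity $c^*=|\{(S_L,S_R)\in\Cc^{\Gamma}_{\alpha,\beta}:H_{S_L,S_R}\cong_c H'\}|$ in the final constant $C$, as the paper does.
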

\begin{proof}
For $(S_L,S_R)\in\Cc^{\Gamma}_{\alpha,\beta}$, recall that $(S_L,S_R)$ is a maximal biclique. From the assumptions, $(S_L,S_R)$ is not extremal,  so by Lemma~\ref{inductionstep} the graph $H_{S_L,S_R}$ is full 
but not trivial
and has fewer vertices  than $H$ (to see this, recall the remark in  Definition~\ref{def:specialgraph} that for a maximal biclique $(S_L,S_R)$ we have that $H_{S_L,S_R}=H[S_L\cup V_R]$; thus $H_{S_L,S_R}$ corresponds to the graph $H_1$ in Lemma~\ref{inductionstep}). We remark for later use that for a maximal biclique $(S_L,S_R)$ it holds that $R(H_{S_L,S_R})=V_R$.

We will find a biclique $(S_L,S_R)$ in $\Cc^{\Gamma}_{\alpha,\beta}$ such that $\FixCOL{H_{S_L,S_R}}\leq_{\AP}\FixCOL{H}$. To find such a biclique, we will use Lemma~\ref{lem:lovasztwo}. Specifically, let $\Hc$ be the set consisting of the (labelled) graphs $H_{S_L,S_R}$ when $(S_L,S_R)$ ranges over maximal bicliques in $\Cc_{\alpha,\beta}^{\Gamma}$. Further, let $\Hc_1,\hdots,\Hc_k$ be the equivalence classes of $\Hc$ under the (colour-preserving isomorphism) relation $\cong_c$ which was formally defined in Section~\ref{sec:preliminaries}. For $i\in[k]$, let $H_i$ be a representative 
of the class $\Hc_i$. By Lemma~\ref{lem:lovasztwo}, there exists a gadget $J$ and $i\in [k]$ such that for every 
$t\in[k]\setminus\{i\}$,
it holds that $\FixCOL{H_i}(J)>\FixCOL{H_t}(J)$. We will show that $H'=H_i$ satisfies the statement of the lemma. For future use, let $\Cc^{*}\subseteq \Cc^{\Gamma}_{\alpha,\beta}$ be the subset of bicliques $(S_L,S_R)$ whose corresponding graph $H_{S_L,S_R}$ is isomorphic to $H'$ under a colour-preserving isomorphism. Formally,
\begin{equation}\label{eq:defcstar}
\Cc^*:=\big\{(S_L,S_R)\in \Cc_{\alpha,\beta}^{\Gamma}\, |\, H_{S_L,S_R}\cong_c H'\big\}.
\end{equation}
For technical convenience we assume that the graph $J$ has no isolated vertices in $R(J)$. 
This is without loss of generality. To see this, suppose that $I\subseteq R(J)$ is the set of isolated vertices in $R(J)$. Let $J'$ be the 2-coloured graph which is obtained from $J$ by removing the vertices in $I$; clearly, $J'$ has no isolated vertices in $R(J)$. Observe that for each $t\in [k]$ it holds that $R(H_t)=V_R$ since by definition $H_t$ is the graph $H_{S_L,S_R}$ for some maximal biclique $(S_L,S_R)$. Thus, for every $t\in [k]$ it holds that    $\FixCOL{H_t}(J)=|V_R|^{|I|}\FixCOL{H_t}(J')$. Hence, by the choice of $J$, we have that $\FixCOL{H_i}(J')>\FixCOL{H_t}(J')$ for every $t\neq i$.

Having defined the graph $H'$, we next give the details of the reduction.  Let $0<\epsilon<1$ and $G'$ be an input to $\FixCOL{H'}$. Our goal is to approximate $\FixCOL{H'}(G')$ within a multiplicative factor of $(1\pm \epsilon)$ using an oracle for $\FixCOL{H}$. Using standard arguments,\footnote{If $G'_1,\hdots, G'_l$ are the connected components of $G'$ then it holds that $\FixCOL{H'}(G')=\prod^{l}_{t=1}\FixCOL{H'}(G'_l)$.} we may assume w.l.o.g. that $G'$ is connected.

Suppose that $n'=|V(G')|$ and let $n=\left\lceil  \kappa n'/\epsilon\right\rceil$ where $\kappa$ is a sufficiently large constant depending only on $H,\Gamma$. Note that by letting $\kappa$ be large, we also ensure that $n$ is sufficiently large both in absolute terms as well as relatively to $n'$ and $1/\epsilon$. For the time being, let us assume that $\kappa$ is such that $\alpha n\geq 1$, $\beta n \geq 1$, $n\geq 2\gamma$, $2\max\{|V_L|,|V_R|\}/n\leq \epsilon/10^4$. Later, we will refine appropriately the choice of $\kappa$.

By Lemma~\ref{lem:dirichlet} applied to the numbers $\alpha n^3$ and $n^2(\beta n+\gamma)$, there exists a positive integer $Q\leq n^2$ and positive integers $a,b$ such that 
\begin{equation}\label{eq:abdefinition}
|Q \alpha  n^3-a|\leq 1/n,\quad |Q (\beta n^3+\gamma n^2)-b|\leq 1/n.
\end{equation}
Note that since $1\leq Q\leq n^2$ and $\alpha,\beta,\gamma$ are constants, for all sufficiently large $n$, it holds that $n^2\leq a,b\leq 2n^{5}$. The choice of $a,b$ ensures that for $(S_L,S_R)\in \Cc$, it holds that 
\begin{align}
\zeta(S_L,S_R,\Gamma)^{Qn^2}|S_L|^a|S_R|^b&=\big(1\pm \epsilon/10^3\big)\big(\zeta(S_L,S_R,\Gamma)\, |S_L|^{\alpha n} |S_R|^{\beta n+\gamma}\big)^{Qn^2}\notag\\
&=\big(1\pm \epsilon/10^3\big)\big(|S_L|^{\alpha} |S_R|^{\beta}\big)^{Qn^3}\big(\zeta(S_L,S_R,\Gamma)\,  |S_R|^{\gamma}\big)^{Qn^2}\label{eq:approxslsr2},
\end{align}
where in the first equality we used Lemma~\ref{lem:xzapprox} for $x=|S_L|$ and $K=|V_L|$ to bound $|S_L|^{a-Q\alpha n^3}=1\pm 2|V_L|/n=1\pm \epsilon/10^{4}$ and similarly $|S_R|^{b-Q (\beta n^3+\gamma n^2)}=1\pm \epsilon/10^{4}$, 
while the second equality is a mere rearrangement of terms.

We next construct an instance $G$ of $\FixCOL{H}$. Recall that  $K_{a,b}$ is the complete bipartite graph with $a,b$ vertices on the left and right, respectively. Let $\overline{G}$ be the disjoint union of the graphs $G'$, $K_{a,b}$, $Qn^2$ disjoint copies of the graph $\Gamma$ and $\lceil n^{3/2}\rceil $ copies of the gadget $J$.  We denote by $\Gamma_r$ the $r$-th copy of $\Gamma$ and by $J_r$ the $r$-th copy of $J$.  To construct $G$, attach on $\overline{G}$ all edges in $R(K_{a,b})\times L(G')$, for $r=1,\hdots,Qn^2$ add all edges in $R(K_{a,b})\times L(\Gamma_r)$ and, finally, for $r=1,\hdots,\lceil n^{3/2}\rceil$ add all edges in $R(K_{a,b})\times L(J_r)$. Note that $G$ has size polynomial in $n,1/\epsilon$. 

For the remainder of the proof, for 2-coloured graphs $\hat{G},\hat{H}$,  we will use for convenience the shorthand $Z_{\hat{G}}(\hat{H})$ to denote $\FixCOL{\hat{H}}(\hat{G})$, i.e., $Z_{\hat{G}}(\hat{H})$ denotes the number of colour-preserving homomorphisms from $\hat{G}$ to $\hat{H}$. It will also be helpful to denote  
\begin{equation*}
M_K:=\max_{(S_L,S_R)\in \Cc}|S_L|^{\alpha}|S_R|^{\beta},\quad  M_\Gamma:=\max_{(S_L,S_R)\in \Cc_{\alpha,\beta}}\zeta(S_L,S_R,\Gamma)|S_R|^{\gamma},\quad
M_J:=Z_J(H').
\end{equation*}
Denote further by $c^*$ the cardinality of the set $\Cc^*$, i.e., $c^{*}:=|\Cc^*|$. We will prove the following: 
\begin{equation}\label{eq:connectionstar}
Z_{G}(H)= \big(1\pm \epsilon/10\big)c^{*}M^{Qn^3}_KM^{Qn^2}_\Gamma M_J^{\lceil n^{3/2}\rceil}Z_{G'}(H').
\end{equation}
Note that we can approximate $M_K,M_\Gamma$ to any desired accuracy (polynomial in $n$ and $\epsilon$) since $\Gamma$ is fixed graph and $\alpha,\beta,\gamma$ are constants which are efficiently approximable using \eqref{eq:defa0b0} and \eqref{eq:defgamma}. Since $J$ is a fixed graph, $M_J$ and $c^*$ can be computed exactly by brute force (which only takes constant time). Thus, if we can approximate $Z_{G}(H)$ within a multiplicative factor of $(1\pm \epsilon/10)$, equation \eqref{eq:connectionstar} yields an approximation of $Z_{G'}(H')$ within a multiplicative factor of $(1\pm \epsilon)$ (with room to spare).

It remains to argue for the validity of \eqref{eq:connectionstar}. Let $\Sigma$ be the set of all colour-preserving homomorphisms from $G$ to $H$. For subsets $S_L\subseteq V_L$ and $S_R\subseteq V_R$, let $\Sigma(S_L,S_R)$ be the following subset of $\Sigma$:
\begin{equation}\label{eq:defsigmaslslr}
\Sigma(S_L,S_R)=\{h\in \Sigma\mid h(L(K_{a,b}))=S_L,\, h(R(K_{a,b}))=S_R\}.
\end{equation}
Since $K_{a,b}$ is a complete bipartite graph, note that  $\Sigma(S_L,S_R)$ is non-empty iff $S_L\times S_R\subseteq E(H)$, i.e., $(S_L,S_R)$ is a biclique in $H$. It follows that the collection of sets $\Sigma(S_L,S_R)$ where $(S_L,S_R)$ ranges over all bicliques in $H$ is a partition of the set $\Sigma$, so 
\begin{equation}\label{eq:qqq2}
Z_{G}(H)=|\Sigma|=\sum_{(S_L,S_R)\in \Cc}|\Sigma(S_L,S_R)|.
\end{equation}
Fix $(S_L,S_R)\in \Cc$ and let $h\in \Sigma(S_L,S_R)$. Note that for every copy $\Gamma_r$ of $\Gamma$ we have that $h(L(\Gamma_r))\subseteq N_{\j}(S_R)$ since by construction $R(K_{a,b})\times L(\Gamma_r)\subseteq E(G)$. Using the assumption in the lemma that $\Gamma$ has no isolated vertices in $R(\Gamma)$, we thus obtain that for every copy $\Gamma_r$ of $\Gamma$ it holds that $h(R(\Gamma_r))\subseteq \Nu(N_{\j}(S_R))$. Analogously, since $G'$ is connected and $J$ has no isolated vertices in $R(J)$, for $G'$ and every copy $J_r$ of $J$, it holds that $h(L(G')),h(L(J_r))\subseteq N_{\j}(S_R)$ and $h(R(G')),h(R(J_r))\subseteq \Nu(N_{\j}(S_R))$. It follows that
\begin{equation}\label{eq:basbas2}
|\Sigma(S_L,S_R)|=T(a,|S_L|)\, T(b,|S_R|)\,\big(Z_{\Gamma}\big(H_{S_L,S_R}\big)\big)^{Qn^2}\big(Z_{J}(H_{S_L,S_R})\big)^{\lceil n^{3/2}\rceil}Z_{G'}\big(H_{S_L,S_R}\big),
\end{equation} 
where, recall, $T(m,q)$ is the number of surjective mappings from $[m]$ to $[q]$. Applying Lemma~\ref{lem:surj} (and using that $2|S_L|/a,2|S_R|/b\leq 2\max\{|V_L|,|V_R|\}/n\leq\epsilon/10^4$), we have that
\[T(a,|S_L|)=(1\pm \epsilon/10^4)|S_L|^a, \quad T(b,|S_R|)=(1\pm \epsilon/10^4)|S_R|^b,\] 
and thus
\[T(a,|S_L|)T(b,|S_R|)=(1\pm \epsilon/10^3)|S_L|^a|S_R|^b.\]
Observe also that $Z_{\Gamma}\big(H_{S_L,S_R}\big)=\zeta(S_L,S_R,\Gamma)$, cf. Definition~\ref{def:zeta}.  Thus, \eqref{eq:basbas2} yields 
\begin{align}
\begin{split}
|\Sigma(S_L,S_R)|={}&\big(1\pm \epsilon/10^3\big)|S_L|^{a} |S_R|^{b}\,\zeta(S_L,S_R,\Gamma)^{Qn^2}\\
&\qquad\qquad\qquad\qquad\qquad\big(Z_{J}(H_{S_L,S_R})\big)^{\lceil n^{3/2}\rceil}Z_{G'}\big(H_{S_L,S_R}\big)\notag
\end{split}\\
\begin{split}
={}&\big(1\pm \epsilon/10^2\big)\,\big(|S_L|^{\alpha } |S_R|^{\beta}\big)^{Qn^3}\big(\zeta(S_L,S_R,\Gamma)\, |S_R|^{\gamma}\big)^{Qn^2}\\
&\qquad\qquad\qquad\qquad\qquad\big(Z_{J}(H_{S_L,S_R})\big)^{\lceil n^{3/2}\rceil}Z_{G'}\big(H_{S_L,S_R}\big),
\end{split}\label{eq:basbas}
\end{align}
where for the second equality we  used the approximation in \eqref{eq:approxslsr2}. 

We next use \eqref{eq:basbas} to find the terms in \eqref{eq:qqq2} with significant contribution. Observe that for every biclique $(S_L,S_R)$ we have the  bounds
\[|S_L|^{\alpha } |S_R|^{\beta}\leq M_K,\quad \zeta(S_L,S_R,\Gamma)\,|S_R|^{\gamma}\leq Z_\Gamma(H)\, |V_R|^{\gamma},\quad Z_{J}(H_{S_L,S_R})\leq Z_{J}(H),\]
i.e., the bases of the powers in the r.h.s. of \eqref{eq:basbas} are bounded by absolute constants. Further, 
\[Z_{G'}\big(H_{S_L,S_R}\big)\leq Z_{G'}\big(H\big)\leq |V(H)|^{n'}\leq |V(H)|^{n}.\]
Using these observations, we will consider the factors in the r.h.s. of \eqref{eq:basbas} in decreasing order of magnitude (which can easily be read off from the exponents)  and  argue that only bicliques in $\Cc^*$ have significant contribution.

From the definition of $\Cc^*$ (see \eqref{eq:defcstar}), we have that  a biclique $(S_L,S_R)\in \Cc^*$ contributes $(1\pm \epsilon/10^2)W$ to the sum in \eqref{eq:qqq2}, where
\begin{align*}
W:=M^{Qn^3}_KM^{Qn^2}_\Gamma M_J^{\lceil n^{3/2}\rceil}Z_{G'}(H').
\end{align*} 
Thus, the total contribution from bicliques in $\Cc^*$ is $(1\pm \epsilon/10^2)c^* W$, that is,
\begin{equation}\label{eq:contcstar}
\sum_{(S_L,S_R)\in \Cc^{*}}|\Sigma(S_L,S_R)|=(1\pm \epsilon/10^2)c^* W.
\end{equation} 
From the definition of $\Cc_{\alpha,\beta}$, we have that   for $(S_L,S_R)\not\in \Cc_{\alpha,\beta}$, it holds that 
\[M_K>|S_L|^{\alpha}|S_R|^{\beta},\]
so that terms in \eqref{eq:qqq2} such that  $(S_L,S_R)\not\in \Cc_{\alpha,\beta}$ are smaller 
than~$W$ by a multiplicative factor of $\exp(-\Omega(n^3))$. 
From the definition of $\Cc_{\alpha,\beta}^{\Gamma}$, for $(S_L,S_R)\in \Cc_{\alpha,\beta}\backslash\Cc_{\alpha,\beta}^{\Gamma}$, it holds that
\[M_\Gamma>\zeta(S_L,S_R,\Gamma) |S_R|^{\gamma},\] 
so that terms in \eqref{eq:qqq2} such that  $(S_L,S_R)\in \Cc_{\alpha,\beta}\backslash\Cc_{\alpha,\beta}^{\Gamma}$ are smaller 
than~$W$
by a multiplicative factor of $\exp(-\Omega(n^2))$. Finally, from the definition of $\Cc^*$ (see \eqref{eq:defcstar}), for  $(S_L,S_R)\in \Cc_{\alpha,\beta}^{\Gamma}\backslash \Cc^*$ we have
\[M_J>Z_{J}(H_{S_L,S_R})\]
so that terms in \eqref{eq:qqq2} such that  $(S_L,S_R)\in \Cc_{\alpha,\beta}^{\Gamma}\backslash \Cc^*$ are smaller 
than~$W$ by a multiplicative factor of $\exp(-\Omega(n^{3/2}))$.  

Combining the above estimates, we have that  for every biclique  $(S_L,S_R)\in \Cc\backslash\Cc^*$, it holds that 
\[|\Sigma(S_L,S_R)|\leq \exp\big(-\Omega(n^{3/2})\big)W\leq \epsilon W/(10^2|\Cc|).\]
(Note that in the 
combined
inequality we assumed that the constant $\kappa$ in the definition of $n$ is sufficiently large.) It follows that the aggregate contribution from bicliques which are not in $\Cc^*$ to the sum \eqref{eq:qqq2} is at most $\epsilon W/10^2$, that is,
\begin{equation}\label{eq:contnotcstar}
\sum_{(S_L,S_R)\in \Cc\backslash\Cc^{*}}|\Sigma(S_L,S_R)|\leq \epsilon W/10^2.
\end{equation} 
Plugging the  bounds \eqref{eq:contcstar} and \eqref{eq:contnotcstar} in the sum \eqref{eq:qqq2} yields \eqref{eq:connectionstar}, as desired. This completes the  proof of Lemma~\ref{lem:slsr}.
\end{proof}

\begin{lemma}\label{lem:flvr}
Let $H$ be a 2-coloured graph which is full 
but not trivial.
Let $0<\alpha,\beta<1$ be as in \eqref{eq:defa0b0}. Suppose that $\Gamma$ is a 2-coloured graph such that $\Gamma$ has no isolated vertices in $R(\Gamma)$. Further, assume that 
\[\Cc^{\Gamma}_{\alpha,\beta}=\big\{(F_L,V_R),(V_L,F_R)\big\}.\] Then,  $\BIS\leq_\AP\FixCOL{H}$.
\end{lemma}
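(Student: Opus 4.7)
The plan is to reduce $\BIS$ to $\FixCOL{H}$ by adapting the construction for Item~\ref{it:extremal} in Section~\ref{sec:qw}, but with each vertex's gadget being $K_{\hat a,\hat b}(\Gamma^t)$ (where $\Gamma^t$ denotes the disjoint union of $t$ copies of $\Gamma$) rather than just $K_{a,b}$. Given an input bipartite graph $G'$ for $\BIS$ with $n' := |V(G')|$, I would replace each vertex $w$ of $G'$ with an independent copy $K_w$ of $K_{\hat a,\hat b}(\Gamma^t)$, where $\hat a,\hat b$ are positive integers obtained from Lemma~\ref{lem:dirichlet} approximating $Q\alpha$ and $Q\beta+t\gamma$ respectively, for $Q$ polynomial in $n'/\epsilon$ and $t$ polylogarithmic in $n'/\epsilon$; for each edge $(u,v)\in E(G')$ with $u\in L(G')$ and $v\in R(G')$, I would add all edges between the right part of the $K_{\hat a,\hat b}$ in $K_u$ and the left part of the $K_{\hat a,\hat b}$ in $K_v$. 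Since $\gamma(\Gamma^t)=t\gamma(\Gamma)$, one easily checks $\Cc^{\Gamma^t}_{\alpha,\beta}=\Cc^\Gamma_{\alpha,\beta}=\{(F_L,V_R),(V_L,F_R)\}$, and by the choice of $\hat a,\hat b$ together with the defining equations for $\alpha,\beta,\gamma$, each isolated copy of $K_{\hat a,\hat b}(\Gamma^t)$ contributes approximately the same value $W := M_K^Q\cdot M_\Gamma^t$ in either extremal phase, where $M_K:=\max_{(S_L,S_R)\in\Cc}|S_L|^\alpha|S_R|^\beta$ and $M_\Gamma:=\zeta^{\ex}_1(\Gamma)|V_R|^{\gamma(\Gamma)}=\zeta^{\ex}_2(\Gamma)|F_R|^{\gamma(\Gamma)}$.

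Next, I would partition the colour-preserving homomorphisms $G\to H$ according to the phase assignment $\Phi:V(G')\to\Cc$ that records the phase of each gadget $K_w$. The inter-gadget edges impose the pairwise constraint $S_R^{(u)}\times S_L^{(v)}\subseteq E(H)$ for each $(u,v)\in E(G')$; under the assumption that both $\Phi(u)$ and $\Phi(v)$ are extremal, three of the four combinations are automatically satisfied by the fullness of $F_L$ and $F_R$, while the fourth---$\Phi(u)=(F_L,V_R)$ and $\Phi(v)=(V_L,F_R)$---fails since $V_R\times V_L\not\subseteq E(H)$ (as $H$ is non-trivial). Declaring $u\in L(G')$ as \emph{selected} iff $\Phi(u)=(V_L,F_R)$ and $v\in R(G')$ as \emph{selected} iff $\Phi(v)=(F_L,V_R)$, the valid all-extremal phase assignments correspond exactly to vertex covers of $G'$, which by complementation are in bijection with independent sets of $G'$.

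For the count, Lemma~\ref{lem:surj} yields per-gadget contribution $(1\pm O(\epsilon/n'))W$ for each extremal phase, and the automatic satisfaction of inter-gadget constraints for valid extremal assignments lets these contributions multiply, giving total valid-extremal contribution $(1\pm\epsilon/10)\cdot|\mathrm{IS}(G')|\cdot W^{n'}$. For the \emph{bad} phase assignments (containing at least one non-extremal gadget), the per-gadget contribution of any biclique $(S_L,S_R)\in\Cc_{\alpha,\beta}\setminus\Cc^\Gamma_{\alpha,\beta}$ is at most $(f^*)^t W$ for the constant $f^* := \max\{\zeta(S_L,S_R,\Gamma)|S_R|^\gamma/M_\Gamma : (S_L,S_R)\in\Cc_{\alpha,\beta}\setminus\Cc^\Gamma_{\alpha,\beta}\} < 1$, while for $(S_L,S_R)\notin\Cc_{\alpha,\beta}$ the decay is even faster from the $M_K^Q$ factor. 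Upper-bounding bad assignments by products of per-gadget maximal contributions gives total bad contribution at most $W^{n'}((1+|\Cc|(f^*)^t)^{n'}-1)$, which is $\leq\epsilon W^{n'}$ once $t=\Theta(\log(n'/\epsilon)/\log(1/f^*))$. Hence $\FixCOL{H}(G)=(1\pm\epsilon)W^{n'}\cdot|\mathrm{IS}(G')|$, and since $W^{n'}$ is approximable to any required accuracy, an $\FPRAS$ for $\FixCOL{H}$ yields one for $\BIS$. The main technical obstacle will be managing the compound multiplicative error from Dirichlet approximations, surjectivity corrections, and inter-gadget interactions uniformly across all $n'$ gadgets and all extremal phases; this forces both $Q$ (to sharpen the Dirichlet precision of $\alpha,\beta,\gamma$) and $t$ (to suppress non-extremal phases combinatorially) to be tuned to $n'/\epsilon$ as above.
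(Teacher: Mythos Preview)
Your approach is essentially identical to the paper's: build a per-vertex gadget combining $K_{\hat a,\hat b}$ with copies of $\Gamma$, wire gadgets together along $E(G')$, partition colourings by the resulting phase vector, and identify the valid all-extremal phase vectors with independent sets of $G'$ (your vertex-cover formulation is the same bijection via complementation).

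One quantitative slip needs fixing. Since there are \emph{two} extremal phases per gadget, each contributing $\approx W$, the sum of per-gadget contributions over extremal phases is $E\approx 2W$, and the correct crude upper bound on the bad contribution is $(E+N)^{n'}-E^{n'}$ with $N\leq |\Cc|(f^*)^tW$, which is of order $2^{n'}W^{n'}\cdot n'|\Cc|(f^*)^t$ (or, even more crudely, $|\Cc|^{n'}W^{n'}(f^*)^t$), not $W^{n'}\bigl((1+|\Cc|(f^*)^t)^{n'}-1\bigr)$. To make this at most $\epsilon W^{n'}$ you need $t=\Omega(n'+\log(1/\epsilon))$, not polylogarithmic. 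The paper achieves the analogous suppression by taking $Qn^2$ copies of $\Gamma$ with $n=\Theta(n'^2/\epsilon)$, so that the per-gadget non-extremal deficit is $\exp(-\Omega(n^2))$, which comfortably absorbs the $|\Cc|^{n'}$ overcount. With $t$ corrected to linear in $n'$, your argument goes through.
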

\begin{proof} 
Let $0<\epsilon<1$ and $G'$ be an input to $\BIS$. Our goal is to approximate $\BIS(G')$ within a multiplicative factor of $(1\pm \epsilon)$ using an oracle for $\FixCOL{H}$. 

Suppose that $n'=|V(G')|$ and let $n=\left\lceil  \kappa n'^2/\epsilon\right\rceil$ where $\kappa$ is a sufficiently large constant depending only on $H$ and $\Gamma$. Similarly to the proof of Lemma~\ref{lem:slsr}, we will use $\kappa$ to ensure that $n$ is sufficiently large both in absolute terms as well as relatively to $n'^2$ and $1/\epsilon$. We will assume that $\kappa$ is such that $\alpha n\geq 1$, $\beta n \geq 1$, $n\geq 2\gamma$, 
$2\max\{|V_L|,|V_R|\}/n\leq \epsilon/(10^4 n')$
and, later, we will further refine the choice of $\kappa$. 

Analogously to the proof of Lemma~\ref{lem:slsr}, we apply Lemma~\ref{lem:dirichlet} to the numbers $\alpha n^3$ and $n^2(\beta n+\gamma)$. This yields positive numbers $a,b,Q$ with $Q\leq n^2$ such that 
\[|Q \alpha n^3-a|\leq 1/n,\quad |Q n^2(\beta n+\gamma)-b|\leq 1/n,\]
and, once again, note that $n^2\leq a,b\leq 2n^{5}$. The choice of $a,b$ yields the following analogue of \eqref{eq:approxslsr2} for $(S_L,S_R)\in \Cc$:
\begin{equation}\label{eq:approxslsr3} 
\zeta(S_L,S_R,\Gamma)^{Qn^2}|S_L|^a|S_R|^b=\big(1\pm \epsilon/(10^3 n')\big)\big(|S_L|^{\alpha} |S_R|^{\beta}\big)^{Qn^3}\big(\zeta(S_L,S_R,\Gamma)\,  |S_R|^{\gamma}\big)^{Qn^2},
\end{equation} 
(Note that the r.h.s. in \eqref{eq:approxslsr3} differs from the respective one in \eqref{eq:approxslsr2} only in the error term, which can be accounted by the slightly different choice of $n$.)

To construct an instance $G$ of $\FixCOL{H}$, we first construct a bipartite graph $J$ as follows. Let $K_{a,b}$ be the complete bipartite graph with $a,b$ vertices on the left and right, respectively. Let $\overline{K}$ be the disjoint union of the graphs $K_{a,b}$ and $Qn^2$ disjoint copies of the graph $\Gamma$. We denote by $\Gamma_r$ the $r$-th copy of $\Gamma$. Finally, the graph $J$ is obtained by attaching on $\overline{K}$ for $r=1,\hdots,Qn^2$ all edges in $R(K_{a,b})\times L(\Gamma_r)$. 

Let $\Sigma_J$ be the set of all colour-preserving homomorphisms from $J$ to $H$. For subsets $S_L\subseteq V_L$ and $S_R\subseteq V_R$, let $\Sigma_J(S_L,S_R)$ be the following subset of $\Sigma_J$:
\[\Sigma_J(S_L,S_R)=\{h\in \Sigma_J\mid h(L(K_{a,b}))=S_L,\, h(R(K_{a,b}))=S_R\}.\]
Since $K_{a,b}$ is a complete bipartite graph, note that  $\Sigma_J(S_L,S_R)$ is non-empty iff $S_L\times S_R\subseteq E(H)$, i.e., $(S_L,S_R)$ is a biclique in $H$. It follows that the collection of sets $\Sigma_J(S_L,S_R)$ where $(S_L,S_R)$ ranges over all bicliques in $H$ is a partition of the set $\Sigma_J$.
Fix $(S_L,S_R)\in \Cc$ and let $h\in \Sigma(S_L,S_R)$. Note that for every copy $\Gamma_r$ of $\Gamma$ we have that $h(L(\Gamma_r))\subseteq N_{\j}(S_R)$ and thus $h(R(\Gamma_r))\subseteq \Nu(N_{\j}(S_R))$ (since by the assumptions in the lemma $\Gamma$ has no isolated vertices in $R(\Gamma)$).  
It follows that
\[|\Sigma_J(S_L,S_R)|=T(a,|S_L|)\, T(b,|S_R|)\,\big(\zeta(S_L,S_R,\Gamma)\big)^{Qn^2},\]
where, recall that for positive integers $m,q$, $T(m,q)$ is the number of surjective mappings from $[m]$ to $[q]$ and, from Definition~\ref{def:zeta}, $\zeta(S_L,S_R,\Gamma)=\FixCOL{H_{S_L,S_R}}(\Gamma)$. Applying Lemma~\ref{lem:surj} and the approximation in \eqref{eq:approxslsr3} yields that for every $(S_L,S_R)\in \Cc$, it holds that
\begin{equation}\label{eq:qwerty}
|\Sigma_J(S_L,S_R)|=\big(1\pm \epsilon/(10^2 n')\big)\big(|S_L|^{\alpha} |S_R|^{\beta}\big)^{Qn^3}\big(\zeta(S_L,S_R,\Gamma)\,  |S_R|^{\gamma}\big)^{Qn^2}.
\end{equation}
Let us also denote
\begin{equation*}
\begin{gathered}
M_K:=\max_{(S_L,S_R)\in \Cc}|S_L|^{\alpha}|S_R|^{\beta},\quad  M_\Gamma:=\max_{(S_L,S_R)\in \Cc_{\alpha,\beta}}\zeta(S_L,S_R,\Gamma)|S_R|^{\gamma}.
\end{gathered}
\end{equation*}
From \eqref{eq:qwerty} and the assumption that the extremal bicliques are elements of both $\Cc_{\alpha,\beta}$ and $\Cc^{\Gamma}_{\alpha,\beta}$, we obtain
\begin{equation}\label{eq:extext}
|\Sigma_J(F_L,V_R)|=(1\pm \epsilon/(10^2n')) 
M_K^{Qn^3}M_\Gamma^{Qn^2},\, |\Sigma_J(V_L,F_R)|= 
(1\pm \epsilon/(10^2n'))M_K^{Qn^3}M_\Gamma^{Qn^2}.
\end{equation}
Now consider a non-extremal biclique $(S_L,S_R)$, i.e.,  $(S_L,S_R)\neq (F_L,V_R),(V_L,F_R)$. From the definition of $\Cc_{\alpha,\beta}$ (cf. Definition~\ref{def:ccab}), if  $(S_L,S_R)\in \Cc\backslash\Cc_{\alpha,\beta}$, we have $M_K>|S_L|^{\alpha}|S_R|^{\beta}$ and thus
\[|\Sigma_J(S_L,S_R)|\leq \exp(-\Omega(n^3))M_K^{Qn^3}M_\Gamma^{Qn^2}.\] 
If $(S_L,S_R)\in \Cc_{\alpha,\beta}$ but $(S_L,S_R)\notin \Cc_{\alpha,\beta}^{\Gamma}$, from the definition of $\Cc^{\Gamma}_{\alpha,\beta}$ (cf. Definition~\ref{def:ccabgamma}) we  have $M_\Gamma>\zeta(S_L,S_R,\Gamma)|S_R|^{\gamma}$ so that
\[|\Sigma_J(S_L,S_R)|\leq \exp(-\Omega(n^2))M_K^{Qn^3}M_\Gamma^{Qn^2}.\] 
Combining the above bounds, it follows that for every non-extremal biclique $(S_L,S_R)\in \Cc$ it holds that $|\Sigma_J(S_L,S_R)|\leq \exp(-\Omega(n^2))M_K^{Qn^3}M_\Gamma^{Qn^2}$, i.e.,
\begin{equation}\label{eq:M3}
M_3\leq \exp(-\Omega(n^2))M_K^{Qn^3}M_\Gamma^{Qn^2},\mbox{ where }M_3=\max_{\substack{(S_L,S_R)\in \Cc;\\ (S_L,S_R)\neq (F_L,V_R),(V_L,F_R)}}|\Sigma_J(S_L,S_R)|.
\end{equation}

With the gadget $J$ at hand, we now give the reduction from $\BIS$ to $\FixCOL{H}$. First, we construct an instance $G$ of $\FixCOL{H}$ as follows. For every vertex $u$ in $G'$ take a disjoint copy of $J$, which we shall denote by $J^{u}$. We will further denote by $K^u_{a,b}$ the copy of $K_{a,b}$ in $J^u$.  For every edge $(u,v)\in E(G')$ with $u\in L(G')$ and $v\in R(G')$ add all edges between $R(K^{u}_{a,b})$ and $L(K^{v}_{a,b})$. Denote by $G$ the final graph. The construction ensures that $G$ is a bipartite graph (the 2-colouring of $G$ is naturally induced by the 2-colourings of the disjoint copies of $J$).

We will show that
\begin{equation}\label{eq:connection5}
\FixCOL{H}(G)= \big(1\pm \epsilon/10\big)\,\big(M_K^{Qn^3}M_\Gamma^{Qn^2 }\big)^{n'}\BIS(G').
\end{equation}
Note that $M_K,M_\Gamma$ can be computed to any desired polynomial precision (in $n,1/\epsilon$), so if we can approximate $\FixCOL{H}(G)$ within a multiplicative factor of $1\pm \epsilon/10$, equation \eqref{eq:connection5} yields an approximation of $\BIS(G')$ within a multiplicative factor of $1\pm \epsilon$. 

We next argue for the validity of \eqref{eq:connection5}. Let $h$ be a colour-preserving homomorphism from $G$ to $H$. For a vertex $u$ in $G'$, we call $(S_L,S_R)$ the \emph{phase} of $u$ under $h$ iff the restriction of $h$ to the graph $J^u$ is a homomorphism in $\Sigma_J(S_L,S_R)$.   We will denote by $h^u$ the   \emph{phase} of $u$ under $h$. Finally, we let $\mathcal{Y}(h)$ to be the vector $\{h^u\}_{u\in V(G')}$, i.e., the collection of phases under $h$.

Let $\Sigma$ be the set of colour-preserving homomorphisms from $G$ to $H$ and for a given phase vector $Y\in \Cc^{n'}$, let $\Sigma(Y)$ be the set of homomorphisms $h$  such that $\mathcal{Y}(h)=Y$, i.e.,
\[\Sigma(Y)=\{h\in \Sigma\mid \mathcal{Y}(h)=Y\}.\] 
Note that the $\Sigma(Y)$ form a partition of $\Sigma$, so we have that
\[|\Sigma|=\sum_{Y\in \Cc^{n'}}|\Sigma(Y)|.\]
Note also the following upper bound on $|\Sigma(Y)|$:
\begin{equation}\label{eq:upperbound}
|\Sigma(Y)|\leq \prod_{u\in V(G')} |\Sigma_J(Y_u)|.
\end{equation}
We call a phase vector $Y\in \mathcal{\Cc}^{n'}$ \emph{bad} if  there exists a vertex $u\in G'$ whose phase $Y_u$ is not an extremal biclique. Otherwise, $Y$ will be called \emph{good}. Using the bound \eqref{eq:upperbound} in combination with \eqref{eq:extext} and \eqref{eq:M3} (which hold for every copy of the gadget $J$), for a bad phase vector $Y$ we have that 
\begin{equation}\label{eq:badphasevector}
|\Sigma(Y)|\leq  
\left(1\pm \frac{\epsilon}{10^2n'}\right)^{n'-1}
\big(M_K^{Qn^3}M_\Gamma^{Qn^2 }\big)^{n'-1}M_3\leq 2\big(M_K^{Qn^3}M_\Gamma^{Qn^2 }\big)^{n'-1}M_3.
\end{equation}
Let $\Sigma'$ be the set of colour-preserving homomorphisms from $G$ to $H$ whose phase vector is bad. Since there are less than $|\Cc|^{n'}$ choices for the bad phase vector $Y$, using \eqref{eq:badphasevector}, we have the crude bound
\begin{equation}\label{eq:verycrude3}
|\Sigma'|\leq 2 |\Cc|^{n'}\big(M_K^{Qn^3}M_\Gamma^{Qn^2}\big)^{n'-1}M_3<\exp(-\Omega(n^2))\big(M_K^{Qn^3}M_\Gamma^{Qn^2}\big)^{n'}.
\end{equation}
 Note that in the latter inequality we used that $M_3$ is smaller from  $M_K^{Qn^3}M_\Gamma^{Qn^2}$ by a factor of $\exp(-\Omega(n^2))$, cf. \eqref{eq:M3}. By letting the constant $\kappa$ to be sufficiently large, \eqref{eq:verycrude3}  yields that 
\begin{equation}\label{eq:verycrude2}
|\Sigma'|\leq (\epsilon/20)\big(M_K^{Qn^3}M_\Gamma^{Qn^2}\big)^{n'}.
\end{equation} 

Let $Y\in \mathcal{\Cc}^{n'}$ be a good phase vector. We call $Y$ \emph{permissible} if for every edge $(u,v)\in E(G)$ we have either $Y_u\neq (F_L,V_R)$ or $Y_v\neq (V_L,F_R)$. Note that if $Y$ is not permissible we have that $|\Sigma(Y)|=0$ (since $H$ is not trivial). If $Y$ is permissible, we have from \eqref{eq:extext} applied to every copy of $J$ that 
\begin{equation}
|\Sigma(Y)|= 
\left(1\pm \frac{\epsilon}{10^2n'}\right)^{n'}
\big(M_K^{Qn^3}M_\Gamma^{Qn^2 }\big)^{n'}=\left(1\pm \frac{\epsilon}{20}\right)\big(M_K^{Qn^3}M_\Gamma^{Qn^2 }\big)^{n'}.
\end{equation}
Now observe that  permissible phase vectors $Y$ are in one-to-one correspondence with independent sets of $G'$: a vertex $u\in L(G')$ is in the independent set if $Y_u=(F_L,V_R)$; a vertex $v\in R(G')$ is in the independent set if $Y_v=(V_L,F_R)$. It follows that the total number of homomorphisms with permissible phase vectors is  
\begin{equation}\label{eq:permissible}
(1\pm \epsilon/20)\big(M_K^{Qn^3}M_\Gamma^{Qn^2 }\big)^{n'}\BIS(G').
\end{equation}
 Since $\BIS(G')\geq 1$, combining \eqref{eq:verycrude2} and \eqref{eq:permissible} yields \eqref{eq:connection5}, as wanted. This completes the proof of Lemma~\ref{lem:flvr}.
\end{proof}

We are now ready  to prove Lemma~\ref{lem:main}.  

\begin{lemmain}
\statelemmain{}
\end{lemmain}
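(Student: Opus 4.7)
The plan is to prove Lemma~\ref{lem:main} by strong induction on $|V(H)|$. Throughout I fix the canonical exponents $0<\alpha,\beta<1$ of \eqref{eq:defa0b0}, which guarantees that the two extremal bicliques $(F_L,V_R)$ and $(V_L,F_R)$ both lie in $\Cc_{\alpha,\beta}$ and contribute equally in the basic gadget~$K_{a,b}$. If $\Cc_{\alpha,\beta}=\{(F_L,V_R),(V_L,F_R)\}$ I would apply Lemma~\ref{lem:flvr} immediately with the empty graph $\Gamma=\varnothing$ (for which $\gamma(\Gamma)=0$ and $\Cc^\Gamma_{\alpha,\beta}=\Cc_{\alpha,\beta}$) to conclude $\BIS\leq_\AP\FixCOL{H}$. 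So I focus on the case that $\Cc_{\alpha,\beta}$ also contains non-extremal bicliques, enumerated as $(S_L^{(1)},S_R^{(1)}),\dots,(S_L^{(t)},S_R^{(t)})$; Lemma~\ref{inductionstep} then ensures that each $H_i=H[S_L^{(i)}\cup V_R]$ is full, non-trivial, and satisfies $|V(H_i)|<|V(H)|$, so the inductive hypothesis delivers $\BIS\leq_\AP\FixCOL{H_i}$ for every such~$i$.

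The trichotomy described in Section~\ref{sec:examples} partitions the analysis into three complementary cases, according to how the reweighted contribution $\zeta(S_L^{(i)},S_R^{(i)},\Gamma)\,|S_R^{(i)}|^{\gamma(\Gamma)}$ of a non-extremal biclique compares with the tied extremal contribution $\zeta^{\ex}_2(\Gamma)\,|F_R|^{\gamma(\Gamma)}$ across all 2-coloured graphs~$\Gamma$ (which I may assume have no isolated vertex in $R(\Gamma)$). In \emph{Case~1}, some non-extremal~$i$ strictly dominates the extremals for some~$\Gamma$, so the extremal bicliques lie outside $\Cc^\Gamma_{\alpha,\beta}$ and Lemma~\ref{lem:slsr} furnishes a full, non-trivial~$H'$ with $|V(H')|<|V(H)|$ and $\FixCOL{H'}\leq_\AP\FixCOL{H}$; the inductive hypothesis closes the loop. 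In \emph{Case~3}, every non-extremal biclique is weakly dominated by the extremals for every~$\Gamma$, and for each~$i$ a strict witness $\Gamma_i$ exists. I would take $\Gamma=\Gamma_1\sqcup\cdots\sqcup\Gamma_t$: since $\zeta(S_L,S_R,\cdot)$ factors multiplicatively over disjoint unions and $\gamma$ is additive (both following from Definitions~\ref{def:zeta} and~\ref{def:defgamma}), the ratio of the $i$-th non-extremal contribution to the extremal contribution in $K_{\hat a,\hat b}(\Gamma)$ splits as a product of per-component ratios, each $\leq 1$ and with the $j=i$ factor strictly less than~$1$; hence every non-extremal biclique is strictly dominated in $K_{\hat a,\hat b}(\Gamma)$, forcing $\Cc^\Gamma_{\alpha,\beta}=\{(F_L,V_R),(V_L,F_R)\}$, and Lemma~\ref{lem:flvr} finishes the case.

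The main obstacle is \emph{Case~2}: some non-extremal biclique~$i$ ties with the extremals for \emph{every}~$\Gamma$, i.e.,
$\zeta(S_L^{(i)},S_R^{(i)},\Gamma)\,|S_R^{(i)}|^{\gamma(\Gamma)}=\zeta^{\ex}_2(\Gamma)\,|F_R|^{\gamma(\Gamma)}$. Combining this with the defining relation \eqref{eq:defgamma} of $\gamma(\Gamma)$ and eliminating $\gamma(\Gamma)$ yields the power-law identity
\[
\FixCOL{H_i}(\Gamma)=\zeta^{\ex}_1(\Gamma)^{1-\mu}\,\FixCOL{H}(\Gamma)^{\mu},\qquad \mu=\frac{\ln(|V_R|/|S_R^{(i)}|)}{\ln(|V_R|/|F_R|)}\in(0,1),
\]
valid for every~$\Gamma$. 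The heart of Case~2 is to show that this pointwise-integer identity forces $\mu$ to be \emph{rational}: all of $\FixCOL{H_i}(\Gamma)$, $\zeta^{\ex}_1(\Gamma)=|F_L|^{|L(\Gamma)|}|V_R|^{|R(\Gamma)|}$, and $\FixCOL{H}(\Gamma)$ are positive integers for every~$\Gamma$, and evaluating on a suitable sequence of test graphs whose $\FixCOL{H}$- and $\zeta^{\ex}_1$-values are multiplicatively independent rules out irrational exponents, mirroring the concrete identity $\FixCOL{H_1}(\Gamma)^2=\FixCOL{H^{\ex}_1}(\Gamma)\FixCOL{H^{\ex}_2}(\Gamma)$ obtained via tensor products in Section~\ref{sec:examples}. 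Writing $\mu=p/q$ in lowest terms then produces the polynomial identity $\FixCOL{H_i}(\Gamma)^q=\zeta^{\ex}_1(\Gamma)^{q-p}\FixCOL{H}(\Gamma)^{p}$, which, since $\zeta^{\ex}_1(\Gamma)$ has a closed form, yields the AP-reduction $\FixCOL{H_i}\leq_\AP\FixCOL{H}$; the inductive hypothesis applied to~$H_i$ then completes the proof.
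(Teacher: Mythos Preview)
Your overall architecture---induction on $|V(H)|$, the trichotomy over non-extremal bicliques in $\Cc_{\alpha,\beta}$, and the handling of Cases~1 and~3 via Lemmas~\ref{lem:slsr} and~\ref{lem:flvr} respectively---matches the paper's proof exactly, including the disjoint-union construction $\Gamma=\Gamma_1\sqcup\cdots\sqcup\Gamma_t$ in Case~3. Two small omissions: you do not mention the base case (the path on four vertices), and your claim that \eqref{eq:defa0b0} ``guarantees that the two extremal bicliques both lie in $\Cc_{\alpha,\beta}$'' is slightly off---it only guarantees they have equal weight, so either both lie in $\Cc_{\alpha,\beta}$ or neither does; in the latter situation Lemma~\ref{lem:slsr} with $\Gamma=\varnothing$ applies directly. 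Both are trivial to patch.

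The real issue is your treatment of Case~2. You correctly derive the identity
\[
\FixCOL{H_i}(\Gamma)=\zeta^{\ex}_1(\Gamma)^{1-\mu}\,\FixCOL{H}(\Gamma)^{\mu},
\qquad \mu=\frac{\ln(|V_R|/|S_R^{(i)}|)}{\ln(|V_R|/|F_R|)}\in(0,1),
\]
but then you declare that ``the heart of Case~2'' is to prove $\mu$ rational. This detour is unnecessary, and your sketch for it (``evaluating on a suitable sequence of test graphs whose values are multiplicatively independent'') is not a proof. The paper simply observes that the identity \emph{as it stands} already yields $\FixCOL{H_i}\leq_\AP\FixCOL{H}$: since $\mu$ is a fixed real constant depending only on $H$ (computable to any precision from the integers $|V_R|,|F_R|,|S_R^{(i)}|$) and $\zeta^{\ex}_1(\Gamma)=|F_L|^{|L(\Gamma)|}|V_R|^{|R(\Gamma)|}$ has an explicit closed form, an approximation $\hat Z$ of $\FixCOL{H}(\Gamma)$ within a factor $e^{\pm\delta}$ gives $\hat Z^{\mu}$ within $e^{\pm\mu\delta}$ of $\FixCOL{H}(\Gamma)^{\mu}$, and hence $\zeta^{\ex}_1(\Gamma)^{1-\mu}\hat Z^{\mu}$ approximates $\FixCOL{H_i}(\Gamma)$ to the required accuracy. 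No integrality or rationality is needed for an AP-reduction; you are approximating a real number, not computing it exactly. Drop the rationality argument and Case~2 becomes the easiest of the three.
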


\begin{proof} 
We will prove  the lemma using induction on the number $|V(H)|$ of vertices of $H$.
Since $H$ is full but not trivial, the base case is the case in which $H$ is a path on four vertices.
In this case, 
it is known that $\BIS\leq_{\AP}\FixCOL{H}$ (see for example \cite{DGGJ}).  

We now do the inductive case. Roughly, we will show that we can choose $\Gamma$ so that either the assumptions of Lemma~\ref{lem:slsr} or Lemma~\ref{lem:flvr} hold. 

Consider the set of maximal bicliques $\Cc_{\alpha,\beta}$ where $0<\alpha,\beta<1$ are as in \eqref{eq:defa0b0}. For the rest of the proof we may assume that $\Cc_{\alpha,\beta}$ includes both extremal bicliques and at least one non-extremal biclique. Otherwise, observe that if an extremal biclique is not in $\Cc_{\alpha,\beta}$, then $\Cc_{\alpha,\beta}$ does not in fact include either of the extremal bicliques (this is enforced by \eqref{eq:defa0b0}, i.e., the selection of $\alpha,\beta$). It then follows by Lemma~\ref{lem:slsr} (where $\Gamma$ is the empty graph) that there is a full 
but not trivial
graph~$H'$ such that $|V(H')|<|V(H)|$ and $\FixCOL{H'}\leq_{\AP}\FixCOL{H}$. By induction, we have that $\BIS\leq_{\AP}\FixCOL{H'}$, so we obtain that $\BIS\leq_{\AP}\FixCOL{H}$. Similarly, if $\Cc_{\alpha,\beta}$ includes both extremal bicliques but no other biclique, by Lemma~\ref{lem:flvr} (where $\Gamma$ is the empty graph),  we have that $\BIS\leq_{\AP}\FixCOL{H}$.

Let 
\[\big(S^{(1)}_L,S^{(1)}_R\big),\hdots, \big(S^{(t)}_L,S^{(t)}_R\big)\]
be an enumeration of the non-extremal bicliques in $\Cc_{\alpha,\beta}$. Recall that all elements of $\Cc_{\alpha,\beta}$ are maximal bicliques of $H$.    Recall the definition of 
$\zeta^{\ex}_1(\Gamma)$, and $\zeta^{\ex}_2(\Gamma)$
from Definition~\ref{def:zetaextremal}.
We consider three cases, 
which correspond (in order) to the 3 cases in the overview.

\noindent\textbf{Case I.} There exists $i\in[t]$ and a 2-coloured graph $\Gamma$ such that
\begin{equation}\label{eq:gammaabtwo}
\ln \frac{|V_R|}{|F_R|}\ln\frac{\zeta\big(S_L^{(i)},S_R^{(i)},\Gamma\big)}{\zeta^{\ex}_1(\Gamma)}>\ln\frac{|V_R|}{\big|S_R^{(i)}\big|}\ln\frac{\zeta^{\ex}_2(\Gamma)}{\zeta^{\ex}_1(\Gamma)}.
\end{equation}
We claim that the extremal bicliques are not in $\Cc^{\Gamma}_{\alpha,\beta}$. To see this, from \eqref{eq:defgamma}, we have that $\gamma(\Gamma)=\ln\frac{\zeta^{\ex}_2(\Gamma)}{\zeta^{\ex}_1(\Gamma)}/\ln\frac{|V_R|}{|F_R|}$, so \eqref{eq:gammaabtwo} yields that $\ln\frac{\zeta\big(S_L^{(i)},S_R^{(i)},\Gamma\big)}{\zeta^{\ex}_1(\Gamma)}/\ln\frac{|V_R|}{\big|S_R^{(i)}\big|}>\gamma(\Gamma)$ which can be massaged into
\begin{equation}\label{eq:slsri}
\zeta(S_L^{(i)},S_R^{(i)},\Gamma)\, |S_R^{(i)}|^{\gamma(\Gamma)}>\zeta^{\ex}_1(\Gamma) |V_R|^{\gamma(\Gamma)}=\zeta^{\ex}_2(\Gamma)\, |F_R|^{\gamma(\Gamma)},
\end{equation}
where in the latter equality we used the definition of $\gamma(\Gamma)$, see \eqref{eq:defgamma}. From \eqref{eq:slsri}, we obtain that the extremal bicliques are not in $\Cc^{\Gamma}_{\alpha,\beta}$, as wanted. 

To apply Lemma~\ref{lem:slsr} using the graph $\Gamma$, we need to ensure that $\Gamma$ has no isolated vertices in $R(\Gamma)$ (see also footnote~\ref{foot:technical}). Let $I$ be the set of isolated vertices in $R(\Gamma)$ and let $\Gamma'$ be the 2-coloured graph obtained by removing the set $I$ from $\Gamma$ (and which otherwise inherits the 2-colouring of $\Gamma$), so that $\Gamma'$ has no isolated vertices in $R(\Gamma')$. From Definition~\ref{def:zeta}, we have that $\zeta(S_L^{(i)},S_R^{(i)},\Gamma)=\FixCOL{H_{S_L^{(i)},S_R^{(i)}}}(\Gamma)$,  where $H_{S_L^{(i)},S_R^{(i)}}=H[S_L^{(i)}\cup V_R]$ since $(S_L^{(i)},S_R^{(i)})$ is a maximal biclique, see also Definition~\ref{def:specialgraph} for details. The observation which is important for us is that $R(H_{S_L^{(i)},S_R^{(i)}})=V_R$. We thus have
\[\zeta(S_L^{(i)},S_R^{(i)},\Gamma)=|V_R|^{|I|}\zeta(S_L^{(i)},S_R^{(i)},\Gamma'),\]
and analogously (observing that $R(H_{S_L,S_R})=V_R$ for every maximal biclique and hence also for the extremal ones)
\[\zeta^{\ex}_1(\Gamma)=|V_R|^{|I|}\,\zeta^{\ex}_1(\Gamma'),\quad \zeta^{\ex}_2(\Gamma)=|V_R|^{|I|}\,\zeta^{\ex}_2(\Gamma').\]
It follows that \eqref{eq:gammaabtwo} also holds if we replace $\Gamma$ with $\Gamma'$. 

From the argument above, we may assume w.l.o.g. that $\Gamma$ has no isolated vertices. From Lemma~\ref{lem:slsr}, it then follows that there is a full 
but not trivial
graph~$H'$ such that $|V(H')|<|V(H)|$ and $\FixCOL{H'}\leq_{\AP}\FixCOL{H}$. By induction, we have that $\BIS\leq_{\AP}\FixCOL{H'}$, so we obtain that $\BIS\leq_{\AP}\FixCOL{H}$. This completes the argument in the first case. \vskip 0.15cm

\noindent\textbf{Case II.} There exists $i\in [t]$ such that for all 2-coloured  graphs $\Gamma$ it holds that 
\begin{equation}\label{eq:gammaab2}
\ln \frac{|V_R|}{|F_R|}\ln\frac{\zeta\big(S_L^{(i)},S_R^{(i)},\Gamma\big)}{\zeta^{\ex}_1(\Gamma)}=\ln\frac{|V_R|}{\big|S_R^{(i)}\big|}\ln\frac{\zeta^{\ex}_2(\Gamma)}{\zeta^{\ex}_1(\Gamma)}.
\end{equation}
We will show that $\FixCOL{H'}\leq_{\AP}\FixCOL{H}$ where $H'=H_{S_L^{(i)},S_R^{(i)}}=H[S_L^{(i)}\cup V_R]$. To see this, set $c_1:=\ln \frac{|V_R|}{|F_R|}$, $c_2:=\ln\frac{|V_R|}{|S_R^{(i)}|}$ and  $c:=c_2/c_1$ (note that $0<c<1$). Then, rewriting \eqref{eq:gammaab2}, for every 2-coloured graph $\Gamma$ it holds that
\begin{equation}\label{eq:equalitycase}
\zeta(S_L^{(i)},S_R^{(i)},\Gamma)=\frac{\big(\zeta^{\ex}_2(\Gamma)\big)^c}{(\zeta^{\ex}_1(\Gamma))^{c-1}},\mbox{ so that }\FixCOL{H'}(\Gamma)=\frac{\big(\FixCOL{H}(\Gamma)\big)^c}{|F_L|^{(c-1)|L(\Gamma)|}|V_R|^{(c-1)|R(\Gamma)|}},
\end{equation}
cf. Definitions~\ref{def:specialgraph},~\ref{def:zeta} and~\ref{def:zetaextremal}. Equation \eqref{eq:equalitycase}  trivially implies that $\FixCOL{H'}\leq_{\AP}\FixCOL{H}$. By Lemma~\ref{inductionstep}, we have that $H'$ is  full 
but not trivial
and further satisfies $|V(H')|<|V(H)|$, so by induction we have that $\BIS\leq_{\AP}\FixCOL{H'}$. It follows that $\BIS\leq_{\AP}\FixCOL{H}$.\vskip 0.15cm 

\noindent\textbf{Case III.} For all $i\in [t]$, for all 2-coloured  graphs $\Gamma$, it holds that 
\begin{equation}\label{eq:gammaab}
\ln \frac{|V_R|}{|F_R|}\ln\frac{\zeta\big(S_L^{(i)},S_R^{(i)},\Gamma\big)}{\zeta^{\ex}_1(\Gamma)}\leq\ln\frac{|V_R|}{\big|S_R^{(i)}\big|}\ln\frac{\zeta^{\ex}_2(\Gamma)}{\zeta^{\ex}_1(\Gamma)}.
\end{equation}
and, for all $i\in[t]$, the inequality in \eqref{eq:gammaab} is strict for at least one graph $\Gamma_i$. Using an identical argument to the one in Case I, we may assume that $\Gamma_i$ has no isolated vertices in $R(\Gamma_i)$.

Let $\Gamma^*$ be the disjoint union of the graphs $\Gamma_i$ for $i=1,\hdots,t$ (the 2-colouring of $\Gamma^*$ is  induced in the natural way by the 2-colourings of its components).  For every $i\in [t]$, we have 
\begin{equation}\label{eq:derive1}
\zeta\big(S_L^{(i)},S_R^{(i)},\Gamma^*\big)=\prod^{t}_{j=1}\zeta\big(S_L^{(i)},S_R^{(i)},\Gamma_j\big), \quad \zeta^{\ex}_1(\Gamma^*)=\prod^t_{j=1} \zeta^{\ex}_1\big(\Gamma_j\big), \quad \zeta^{\ex}_2(\Gamma^*)=\prod^t_{j=1} \zeta^{\ex}_2\big(\Gamma_j\big).
\end{equation}
Further, for every $i\in[t]$, from \eqref{eq:gammaab} and the choice of  $\Gamma_i$, we have
\begin{equation}\label{eq:derive2}
\ln \frac{|V_R|}{|F_R|} \sum_{j=1}^t \ln \frac{\zeta\big(S_L^{(i)},S_R^{(i)},\Gamma_j\big)}{\zeta_1^{\ex}(\Gamma_j)}
  < \ln \frac{|V_R|}{|S_R^{(i)}|} \sum_{j=1}^t \ln \frac{\zeta_2^{\ex}(\Gamma_j)}{\zeta_1^{\ex}(\Gamma_j)}.
\end{equation} 
Combining \eqref{eq:derive1} and \eqref{eq:derive2} we obtain
\begin{equation}\label{eq:gammastar}
\ln \frac{|V_R|}{|F_R|}\ln\frac{\zeta\big(S_L^{(i)},S_R^{(i)},\Gamma^*\big)}{\zeta^{\ex}_1(\Gamma^*)}<\ln\frac{|V_R|}{\big|S_R^{(i)}\big|}\ln\frac{\zeta^{\ex}_2(\Gamma^*)}{\zeta^{\ex}_1(\Gamma^*)}.
\end{equation}
By essentially the same calculation as in case I, inequality \eqref{eq:gammastar} implies that $\Cc^{\Gamma^*}_{\alpha,\beta}$ contains only the extremal bicliques; to see this note that for every $i\in[t]$, we have that the inequality \eqref{eq:gammaabtwo} is reversed and thus so is the inequality in \eqref{eq:slsri}. 

Since $\Cc^{\Gamma^*}_{\alpha,\beta}$ contains only the extremal bicliques and $\Gamma^*$ has no isolated vertices in $R(\Gamma^*)$ (by construction), Lemma~\ref{lem:flvr} yields $\BIS\leq_\AP\FixCOL{H}$, which finishes the proof of Lemma~\ref{lem:main}. 
\end{proof}

\section{Proof of Lemma~\ref{lem:fixingHCOL}}\label{sec:fixingHCOL}
We follow as closely as possible \cite[Section 6]{HCOL}. The reduction is similar to the one in \cite[Proof of Lemma 6]{HCOL}, yet to account for the counting setting (rather than the sampling setting in \cite{HCOL}), we will have to deviate at a certain point.

Let $H$ be a graph with no trivial components (note that in this section $H$ is allowed to have self-loops but not multiple edges). Following \cite{HCOL}, let $\Delta_1$ be the maximum degree 
\footnote{Recall that a self-loop $(u,u)$ contributes one to the degree of the vertex $u$.}
 of $H$ and let $\Delta_2$ be the maximum degree amongst \emph{the neighbours of maximum degree vertices}. Formally,
\begin{equation}\label{eq:delta12}
\Delta_1=\max_{u\in V(H)}\mathrm{deg}_H(u), \quad \Delta_2=\max_{\substack{v\in V(H);\\ v\in \Nu(u),\, \mathrm{deg}(u)=\Delta_1}}\mathrm{deg}_H(v).
\end{equation}

As in \cite{HCOL}, we also consider the bipartite double cover of $H$, which will be denoted by $\Bip(H)$. Formally, $\Bip(H)$ is a  bipartite graph with vertex set $V(H)\times \{1,2\}$ (the 2-colouring of $\Bip(H)$ is given by the natural projection of the set $V(H)\times \{1,2\}$ onto its second component). For convenience, for $u\in V(H)$, we will denote by $u_1,u_2$ the two copies of the vertex $u$ in $V(\Bip(H))$. The edge set of $\Bip(H)$ is then given by
\[E(\Bip(H))=\big\{ (u_1, v_2\big)\, | \, (u,v)\in E(H)\} \cup \big\{ (v_1, u_2)\, | \, (u,v)\in E(H)\big\}.\]
In particular, note that for every edge $(u,v)$ in $H$ which is not a self-loop, there are two edges $(u_1, v_2\big)$ and $(v_1, u_2\big)$ in $\Bip(H)$, while for every self-loop $(u,u)$ in $H$, there is a (single) edge $(u_1,u_2)$ in $\Bip(H)$. Note that  $\Bip(H)$ is a bipartite graph without self-loops. Also, since $H$ has no trivial components, it is simple to see that $\Bip(H)$ has no trivial components as well.

For an edge $(u,v)\in E(H)$, let $H_{u,v}$ be the subgraph of $\Bip(H)$ induced by the vertex set $\Nu'(u_1)\cup \Nu'(v_2)$ where $\Nu'(u_1),\Nu'(v_2)$ are the neighbourhoods of $u_1,v_2$, respectively, in $\Bip(H)$. Note that the graph $H_{u,v}$ is bipartite with a 2-colouring which is naturally induced by the 2-colouring of $\Bip(H)$, i.e., $L({H_{u,v}})=\Nu'(v_2)$ and $R({H_{u,v}})=\Nu'(u_1)$. Further, $H_{u,v}$ is full since $u_1$ is connected to $\Nu'(u_1)$ and $v_2$ is connected to $\Nu'(v_2)$.

Let 
\[\R(H):=\{(u,v)\, | \, (u,v)\in E(H), \, \mathrm{deg}_H(u)=\Delta_1,\, \mathrm{deg}_H(v)=\Delta_2\}.\]
The following lemma is proved in \cite{HCOL}.
\begin{lemma}{\cite[Lemma 6]{HCOL}}
Let $H$ be a graph with no trivial components and let $(u,v)\in \R(H)$. Then,  $H_{u,v}$ is a full 
but not trivial
graph.
\end{lemma}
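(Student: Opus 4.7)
Fullness of $H_{u,v}$ is essentially immediate from the definitions. Because $(u,v)\in E(H)$, the edge $(u_1,v_2)$ lies in $\Bip(H)$, so $u_1\in \Nu'(v_2) = L(H_{u,v})$; and by construction $u_1$ is adjacent in $\Bip(H)$ to every vertex of $\Nu'(u_1) = R(H_{u,v})$. Symmetrically $v_2 \in R(H_{u,v})$ is adjacent to every vertex of $L(H_{u,v})$. So $u_1$ and $v_2$ witness that $H_{u,v}$ is full.

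For non-triviality, observe that $\Bip(H)$ is loopless, hence so is $H_{u,v}$, so if $H_{u,v}$ were trivial it would have to be a complete bipartite graph. Its parts have sizes $|\Nu'(v_2)| = \mathrm{deg}_H(v) = \Delta_2$ and $|\Nu'(u_1)| = \mathrm{deg}_H(u) = \Delta_1$, so the only candidate is $H_{u,v}\cong K_{\Delta_2,\Delta_1}$. The plan is to assume this for contradiction and show that the connected component of $H$ containing $u$ and $v$ would then itself be trivial, contradicting the hypothesis on $H$.

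The crucial step uses the maximality of $\Delta_1$ and $\Delta_2$. For any $w\in \Nu(v)$ the vertex $w_1$ sits in $L(H_{u,v})$ and, under the biclique assumption, is adjacent in $\Bip(H)$ to all $\Delta_1$ vertices of $\Nu'(u_1)$. Hence $\mathrm{deg}_H(w) \geq \Delta_1$; maximality of $\Delta_1$ forces equality, and therefore $\Nu'(w_1)$ must coincide with $\Nu'(u_1)$, i.e.\ $\Nu(w) = \Nu(u)$. Symmetrically, every $x\in \Nu(u)$ satisfies $\mathrm{deg}_H(x) = \Delta_2$ and $\Nu(x) = \Nu(v)$. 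I expect this degree-matching step, which leverages the biclique hypothesis together with the maximality of $\Delta_1$ and $\Delta_2$, to be the main technical point.

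Writing $A = \Nu(v)$ and $B = \Nu(u)$, the above says $\Nu(w) = B$ for every $w\in A$ and $\Nu(x) = A$ for every $x\in B$. Any element of $A\cap B$ would force $A = B$, so either $A = B$ or $A\cap B = \emptyset$. If $A = B$, then $A$ is closed under taking neighborhoods in $H$, every $y\in A$ satisfies $y\in A = \Nu(y)$ (and so carries a self-loop), and every pair in $A$ is adjacent; $A$ therefore induces a clique with self-loops on every vertex. If $A\cap B = \emptyset$, then $A\cup B$ is closed under taking neighborhoods and induces a loopless complete bipartite graph between $A$ and $B$. Since $u,v\in A\cup B$ are adjacent, in either case the component of $H$ containing $u$ and $v$ is trivial, contradicting the hypothesis on $H$ and completing the proof.
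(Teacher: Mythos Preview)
The paper does not give its own proof of this lemma; it is quoted from \cite{HCOL} and simply cited. Your argument is correct and is in fact essentially the standard one: fullness is immediate from the construction, and under the complete-bipartite assumption the maximality of $\Delta_1$ (for $w\in\Nu(v)$) and of $\Delta_2$ (for $x\in\Nu(u)$, using that $x$ is a neighbour of the max-degree vertex $u$) pin down all the neighbourhoods, forcing the component of $H$ through $u,v$ to be trivial. One small remark: the ``symmetrically'' for $x\in\Nu(u)$ is not a pure symmetry---it relies on the specific definition of $\Delta_2$ as the maximum degree among neighbours of $\Delta_1$-vertices---but you flag exactly this when you say the step leverages the maximality of both $\Delta_1$ and $\Delta_2$, so the proof is sound.
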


With this setup, we are now ready to prove Lemma~\ref{lem:fixingHCOL}. 

\begin{lemfixingHCOL}
\statelemfixingHCOL{}
\end{lemfixingHCOL}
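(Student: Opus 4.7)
The plan is to extend the reduction scheme sketched in Section~\ref{sec:induced} by replacing the single special vertex $w$ with a pair $(w_L,w_R)$ so that the target becomes a 2-coloured bipartite graph $H_{u,v}$ with $(u,v)\in\R(H)$, and then to apply the non-constructive Lemma~\ref{lem:lovasztwo} to single out one such $H_{u,v}$ as $H'$. More precisely, let $H_1,\ldots,H_k$ be representatives of the equivalence classes of $\{H_{u,v}:(u,v)\in \R(H)\}$ under $\cong_c$; each $H_j$ is full and non-trivial by the cited Lemma~6 of \cite{HCOL}. Applying Lemma~\ref{lem:lovasztwo} to $H_1,\ldots,H_k$ yields a 2-coloured bipartite graph $J$ and an index $i^*\in[k]$ with $\FixCOL{H_{i^*}}(J)>\FixCOL{H_i}(J)$ for every $i\neq i^*$, and I set $H':=H_{i^*}$.

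Given an input $G'$ to $\FixCOL{H'}$, I would construct an instance $G$ of $\COL{H}$ as follows: add two new vertices $w_L$ on the left side and $w_R$ on the right side, together with the edge $(w_L,w_R)$; connect $w_L$ to every vertex of $R(G')$ and $w_R$ to every vertex of $L(G')$; attach independent sets $I_L$ (to $w_L$) and $I_R$ (to $w_R$) with $|I_L|\gg|I_R|\gg |V(G')|$; finally attach many disjoint copies of $J$ to the pair $(w_L,w_R)$ in a colour-consistent fashion. The graph $G$ is bipartite by construction, so the count $\COL{H}(G)$ coincides with the number of colour-preserving homomorphisms $\tilde{h}\colon G\to\Bip(H)$ via the natural lift sending $u\in L(G)$ to $(h(u),1)$ and $u\in R(G)$ to $(h(u),2)$. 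Writing $\tilde{h}(w_L)=u_1$ and $\tilde{h}(w_R)=v_2$, the factor $(\mathrm{deg}_H u)^{|I_L|}$ coming from $I_L$ forces $\mathrm{deg}_H(u)=\Delta_1$ in the dominant phases; the edge $(w_L,w_R)$ forces $v\in\Nu(u)$, and then the factor $(\mathrm{deg}_H v)^{|I_R|}$ forces $\mathrm{deg}_H(v)=\Delta_2$, so that $(u,v)\in \R(H)$. The many copies of $J$ contribute a factor $\FixCOL{H_{u,v}}(J)$ per copy, which by the choice of $J$ is maximised precisely when $H_{u,v}\cong_c H'$. On such good colourings the restriction of $\tilde{h}$ to $V(G')$ is precisely a colour-preserving homomorphism to $H'$, and a multiplicity factor depending only on $H$ (the number of $(u,v)\in\R(H)$ with $H_{u,v}\cong_c H'$) converts the dominant part of $\COL{H}(G)$ into $\FixCOL{H'}(G')$.

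The main technical hurdle will be the quantitative error analysis, analogous in spirit to (though more layered than) the phase analyses in the proofs of Lemmas~\ref{lem:slsr} and~\ref{lem:flvr}. I must tune $|I_L|$, $|I_R|$, and the number of copies of $J$ relative to $|V(G')|$ and $\epsilon$ so that each non-dominant contribution is exponentially smaller than the dominant one. The three separating gaps to exploit are: $\Delta_1>\mathrm{deg}_H(\alpha)$ for non-max-degree $\alpha$, converted to a $(\mathrm{deg}_H(\alpha)/\Delta_1)^{|I_L|}$ multiplicative loss; $\Delta_2>\mathrm{deg}_H(\beta)$ for sub-maximum-degree $\beta\in \Nu(u)$, converted to a $(\mathrm{deg}_H(\beta)/\Delta_2)^{|I_R|}$ loss; and the strict gap from Lemma~\ref{lem:lovasztwo}, converted to a loss exponential in the number of copies of $J$. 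Surjection corrections along the lines of Lemma~\ref{lem:surj} will be needed to make the resulting approximations rigorous. A minor but essential subtlety is to ensure that the bipartite 2-colouring of $G$ is preserved throughout the construction, so that the bijection with $\Bip(H)$-colourings remains valid and the restriction step indeed yields a colour-preserving homomorphism to $H'$.
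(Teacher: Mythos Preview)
Your proposal is correct and follows essentially the same route as the paper: the construction with two special vertices joined by an edge, the hierarchy $|I_L|\gg|I_R|\gg(\text{copies of }J)\gg|V(G')|$ (the paper uses $n^4,n^3,n^2$), the use of Lemma~\ref{lem:lovasztwo} to select $H'$ among the $H_{u,v}$ with $(u,v)\in\R(H)$, and the three-level dominance analysis all match the paper's proof in Section~\ref{sec:fixingHCOL}. One small remark: the surjection corrections you anticipate via Lemma~\ref{lem:surj} are not actually needed here, since the independent sets $I_L,I_R$ contribute the exact factors $(\deg_H u)^{|I_L|}$ and $(\deg_H v)^{|I_R|}$ rather than surjective counts---so the phase contribution $|\Sigma(u,v)|$ is an exact product, making the error analysis slightly simpler than you suggest.
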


\begin{proof} 
Let $\Hc$ be the multiset consisting of the (labelled) graphs $H_{u,v}$ when $(u,v)$ ranges over elements of $\R(H)$. (Note that there might be multiple elements of $\Lambda(H)$ which map to the same graph.) Further, let $\Hc_1,\hdots,\Hc_k$ be the equivalence classes of $\Hc$ under the (colour-preserving isomorphism) relation $\cong_c$ which was formally defined in Section~\ref{sec:preliminaries}. For $i\in[k]$, let $H_i$ be a representantive of the class $\Hc_i$. By Lemma~\ref{lem:lovasztwo}, there exists a gadget $J$ and $i\in [k]$ such that for every $t\neq i$, it holds that $\FixCOL{H_i}(J)>\FixCOL{H_t}(J)$. We will show that $H'=H_i$ satisfies the statement of the lemma. For future use, let $\R^{*}(H)\subseteq \R(H)$ be the subset of elements  $(u,v)$ whose corresponding graph $H_{u,v}$ is isomorphic to $H'$ under a colour-preserving isomorphism, that is,
\begin{equation*}
\R^{*}(H):=\big\{(u,v)\in\R(H)\, |\, H_{u,v}\cong_c H'\big\}.
\end{equation*}

Let $0<\epsilon<1$ and $G'$ be an input to $\FixCOL{H'}$. Our goal is to approximate $\FixCOL{H'}(G')$ within a multiplicative factor of $(1\pm \epsilon)$ using an oracle for $\COL{H}$.  Suppose that $n'=|V(G')|$ and let $n=\left\lceil  \kappa n'/\epsilon\right\rceil$ where $\kappa$ is a sufficiently large constant depending only on $H$. 

To construct an instance $G$ of $\COL{H}$, consider first the (independent) sets of vertices $A$ and $B$  with 
\[|A|=n^4, |B|=n^3.\]
Let $\overline{G}$ be the disjoint union of $G'$, $n^2$ copies of the graph $J$, the independent sets $A,B$ and two ``special" vertices $w_A$ and $w_B$. We denote by $J_r$ the $r$-th copy of $J$. To obtain $G$, attach on $\overline{G}$ the edge $(w_A,w_B)$, all edges in $w_A \times A$, $w_B\times B$, $w_A\times R(G')$, $w_B\times L(G')$ and for $r=1,\hdots,n^2$ all edges in $w_A\times R(J_r)$, $w_B\times L(J_r)$. See Figure~\ref{fig:reductionlemma3}.

\begin{figure}[h]
\centering
\scalebox{0.8}[0.8]{\input{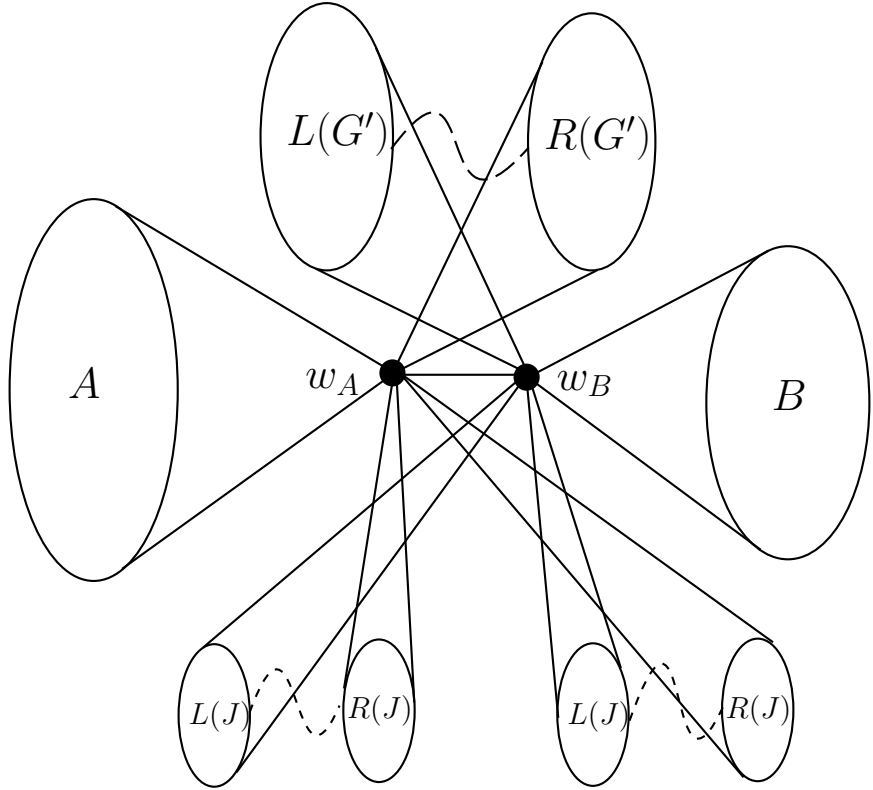}}
\caption{The graph $G$}\label{fig:reductionlemma3}
\end{figure}

Let $\Sigma$ be the set of homomorphisms from $G$ to $H$. Note that $|\Sigma|=\COL{H}(G)$. We will show that
\begin{equation}\label{eq:approxapprox}
|\Sigma|=(1\pm \epsilon/10)\r^*\Delta_1^{n^4}\Delta_2^{n^3}(\FixCOL{H'}(J))^{n^2}\FixCOL{H'}(G'),
\end{equation}
where $\r^*:=|\R^{*}(H)|$ and, recall, $\Delta_1$, $\Delta_2$ are as in \eqref{eq:delta12}. Note that $\Delta_1,\Delta_2,\r^*$ are constants and $J$ is a fixed graph. It follows that an approximation of $|\Sigma|$ within a multiplicative factor of $1\pm \epsilon/2$ yields an approximation of $\FixCOL{H'}(G')$ within a multiplicative factor of $1\pm \epsilon$. 

Thus, it remains to show \eqref{eq:approxapprox}. Let $h$ be an arbitrary homomorphism in $\Sigma$. Since  $(w_A,w_B)\in E(G)$, we have that $(h(w_A),h(w_B))\in E(H)$. For vertices $u,v\in V(H)$, let $\Sigma(u,v)$ be the subset of $\Sigma$ such that $h(w_A)=u$ and $h(w_B)=v$. Formally,
\[\Sigma(u,v):=\{h\in \Sigma\, | \, h(w_A)=u, \, h(w_B)=v\}.\]
We have
\begin{equation}\label{eq:easysum}
|\Sigma|=\sum_{\substack{u,v\in V(H);\\ (u,v)\in E(H)}} |\Sigma(u,v)|.
\end{equation}
Let $h\in \Sigma(u,v)$. Then, the edges $w_A\times A, w_A\times R(G')$ enforce that $h(A), h(R(G'))\subseteq \Nu(u)$ (the neighbourhood of $u$ in $H$) and similarly $h(B), h(L(G'))\subseteq \Nu(v)$. Also, $h(R(J_r))\subseteq \Nu(u)$ and $h(L(J_r))\subseteq \Nu(v)$. Now observe that a homomorphism from  the bipartite graph $G$ to $H$ corresponds to a colour-preserving homomorphism from $G$ to $\Bip(H)$. Combining with the previous remarks, it is not hard to see  that 
\[|\Sigma(u,v)|=(\mathrm{deg}_H(u))^{|A|}(\mathrm{deg}_H(v))^{|B|} (\FixCOL{H_{u,v}}(J))^{n^2}\FixCOL{H_{u,v}}(G').\]
Note that for $(u,v)\in \R^*(H)$, we have 
\begin{equation}\label{eq:dominant}
|\Sigma(u,v)|=W,\mbox{ where } W:=\Delta_1^{n^4}\Delta_2^{n^3}(\FixCOL{H'}(J))^{n^2}\FixCOL{H'}(G').
\end{equation}
For the following estimates, observe that for every $(u,v)\in E(H)$, it holds that
\[\FixCOL{H_{u,v}}(G')\leq (2|V(H)|)^{n'}\leq (2|V(H)|)^{n}.\]
For $u\in H$ such that $\mathrm{deg}_H(u)<\Delta_1$, we have for all $v\in H$ that
\begin{equation}\label{eq:bound1}
|\Sigma(u,v)|\leq \exp(-\Omega(n^4)) W.
\end{equation}
For $u,v$ such that $\mathrm{deg}_H(u)=\Delta_1$ but $\mathrm{deg}_H(v)<\Delta_2$, we have that
\begin{equation}\label{eq:bound2}
|\Sigma(u,v)|\leq \exp(-\Omega(n^3)) W.
\end{equation} 
For $u,v$ such that $\mathrm{deg}_H(u)=\Delta_1$, $\mathrm{deg}_H(u)=\Delta_2$ but $(u,v)\notin \R^*(H)$, by the choice of $J$ we have $\FixCOL{H_{u,v}}(J)<\FixCOL{H'}(J)$ and hence
\begin{equation}\label{eq:bound3}
|\Sigma(u,v)|\leq \exp(-\Omega(n^2)) W.
\end{equation} 
Combining the estimates \eqref{eq:bound1}, \eqref{eq:bound2}, \eqref{eq:bound3} and letting $\kappa$ to be a sufficiently large constant, we have that for $(u,v)\notin \R^*(H)$ it holds that 
\begin{equation}\label{eq:negl}
|\Sigma(u,v)|\leq \epsilon W/(10|E(H)|).
\end{equation} 
Plugging \eqref{eq:dominant} and \eqref{eq:negl} back into \eqref{eq:easysum} yields \eqref{eq:approxapprox}, as wanted. This completes the proof of Lemma~\ref{lem:fixingHCOL}.
\end{proof}

\end{document}